\pdfoutput=1
%
%
%
%
%


\documentclass[journal,twocolumn]{IEEEtran}

\usepackage{hyperref}
\usepackage{amsmath,amsfonts,amsthm,amssymb,mathtools,thmtools,amssymb} 
\usepackage{amsbsy}
\usepackage{bm}
\usepackage{cite}
\usepackage[percent]{overpic}
\usepackage{soul}
\usepackage{relsize}
 \usepackage{subfig}
 \usepackage[font=small]{caption}
 \usepackage{color} 
\usepackage{epstopdf}
\usepackage{cancel}
\usepackage{dsfont}
\usepackage{tikz}
\usepackage{epsf}
\usepackage[export]{adjustbox}
\usepackage{algorithm}
\usepackage{algpseudocode}
\usepackage{xpatch}

\algrenewcommand\alglinenumber[1]{{\sffamily\footnotesize#1}}
\algnewcommand{\Initialize}[1]{%
  \State \textbf{Initialize:}
  \Statex \hspace*{\algorithmicindent}\parbox[t]{.8\linewidth}{\raggedright #1}
}

\makeatletter
\xpatchcmd{\algorithmic}{\itemsep\z@}{\itemsep=0.15ex plus1pt}{}{}
\makeatother

\usetikzlibrary{calc,arrows}
\tikzstyle{line}=[draw]
\tikzstyle{arrow}=[draw, -latex]



\DeclarePairedDelimiter{\ceil}{\lceil}{\rceil}
\DeclareMathOperator*{\argmax}{arg\,max}
\newcommand*{\Scale}[2][4]{\scalebox{#1}{$#2$}}%

\newenvironment{brsm}{
  \bigl[ \begin{smallmatrix} }{%
  \end{smallmatrix} \bigr]}


\newtheorem{theorem}{Theorem}

\newtheorem{lemma}{Lemma}

\newtheorem{proposition}{Proposition}
\newtheorem{remark}{Remark}
\theoremstyle{definition}
\theoremstyle{remark}

\newcommand\Ccancel[2][black]{\renewcommand\CancelColor{\color{#1}}\xcancel{#2}}

\interdisplaylinepenalty=0

\usepackage{geometry}
\geometry{textwidth=6.5in, textheight=8.835in}


\hyphenation{charac-te-ri-zed op-tical net-works semi-conduc-tor le-gi-ti-ma-te ge-ne-ra-li-zed ine-qua-li-ties de-ter-mi-nis-tic ana-ly-zed co-ding non-cons-truc-ti-ve accor-ding cons-truc-ting par-ti-cu-lar in-for-ma-tion rea-li-za-tions rea-li-za-tion Fi-gu-res achie-va-ble cha-rac-te-ri-zes Ne-ver-the-less si-tua-tions des-cri-bes par-ti-cu-la-ri-za-tion en-co-ding}

\begin{document}

\title{Polar Coding for Common Message Only Wiretap Broadcast Channel}


\author{Jaume~del~Olmo~Alos, and
  Javier~R.~Fonollosa,
  \thanks{This work is supported by ``Ministerio de Ciencia, Innovación y Universidades'' of the Spanish Government, TEC2015-69648-REDC and TEC2016-75067-C4-2-R AEI/FEDER, UE, and the Catalan Government, 2017
SGR 578 AGAUR.
  
  The authors are with the Department of Teoria del Senyal i Comunicacions, Universitat Politècnica de Catalunya, 08034, Barcelona, Spain
  (e-mail: jaume.del.olmo@upc.edu; javier.fonollosa@upc.edu)
  
    }}

\maketitle

\begin{abstract}
A polar coding scheme is proposed for the Wiretap Broadcast Channel with two legitimate receivers and one eavesdropper. We consider a model in which the transmitter wishes to send a private and a confidential message that must be reliably decoded by the receivers, and the confidential message must also be (strongly) secured from the eavesdropper. The coding scheme aims to use the optimal rate of randomness and does not make any assumption regarding the symmetry or degradedness of the channel. This paper extends previous work on polar codes for the wiretap channel by proposing a new chaining construction that allows to reliably and securely send the same confidential message to two different receivers. This construction introduces new dependencies between the random variables involved in the coding scheme that need to be considered in the secrecy analysis.
\end{abstract}

\begin{IEEEkeywords} 
Polar codes, information-theoretic security, wiretap channel, broadcast channel, strong secrecy.
\end{IEEEkeywords}



\IEEEpeerreviewmaketitle

\section{Introduction}\label{Section:Introduction}
Information-theoretic security over noisy channels was introduced by Wyner in \cite{6772207}, which characterized the secrecy-capacity of the degraded wiretap channel. Later, Csisz\'{a}r and K{\"o}rner in \cite{1055892} generalized Wyner's results to the general wiretap channel. In these settings, one transmitter wishes to reliably send one message to a legitimate receiver, while keeping it secret from an eavesdropper, where secrecy is defined based on a condition on some information-theoretic measure that is fully quantifiable. One of these measures is the \emph{information leakage}, defined as the mutual information $I(W;Z^n)$ between a uniformly distributed random message $W$ and the channel observations $Z^n$ at the eavesdropper, $n$ being the number of uses of the channel. Based on this measure, the most common secrecy conditions required to be satisfied by channel codes are the \emph{weak secrecy}, which requires $\lim_{n \rightarrow \infty} \frac{1}{n} I(W;Z^n) = 0$, and the \emph{strong secrecy}, requiring $\lim_{n \rightarrow \infty} I(W;Z^n) = 0$. Although the second notion of security is stronger, surprisingly both conditions result in the same secrecy-capacity \cite{maurer2000information}. 

In the last decade, information-theoretic security has been extended to a large variety of contexts, and polar codes have become increasingly popular in this area due to their easily provable secrecy capacity achieving property. Polar codes were originally proposed by Arikan in \cite{arikan2009channel} to achieve the capacity of binary-input, symmetric, point-to-point channels under Successive Cancellation (SC) decoding.  Secrecy capacity achieving polar codes for the binary symmetric degraded wiretap channel were introduced in \cite{6034749} and \cite{6620400}, satisfying the weak and the strong secrecy condition, respectively. Recently, polar coding has been extended to the general wiretap channel in \cite{renes2013efficient, 7346401,cihad2014achieving,7426806} and to different multiuser scenarios (for instance, see \cite{e20060467} and \cite{8438560}). Indeed, \cite{cihad2014achieving} and \cite{7426806} generalize their results providing polar codes for the broadcast channel with confidential messages.

This paper provides a polar coding scheme that allows to transmit \emph{strongly} confidential common information to two legitimate receivers over the Wiretap Broadcast Channel (WBC). Although \cite{chia2012three} provided an obvious lower-bound on the secrecy-capacity of this model, no constructive polar coding scheme has already been proposed so far. Our polar coding scheme is based mainly on the one introduced by \cite{7426806} for the broadcast channel with confidential messages. Therefore, the proposed polar coding scheme aims to use the optimal amount of randomness in the encoding. Moreover, in order to construct an explicit polar coding scheme that provides strong secrecy, the distribution induced by the encoder must be close in terms of statistical distance to the original one considered for the code construction, and transmitter and legitimate receivers need to share a secret key of negligible size in terms of rate. Nevertheless, the particularization for the model proposed in this paper is not straightforward. Specifically, we propose a new chaining construction \cite{6875073} (transmission will take place over several blocks) that is crucial to secretly transmit common information to different legitimate receivers. This construction introduces new dependencies between the random variables that are involved in the polar coding scheme that must be considered carefully in the secrecy analysis. These dependencies are bidirectional between random variables of adjacent blocks and, consequently, we need to introduce an additional privately-shared key of negligible length in terms of rate for the polar coding scheme to provide strong secrecy.

\subsection{Notation}
Through this paper, let the interval $[a,b]$, where $a,b \in \mathbb{Z}_{+}$ and $a \leq b$, denote the set of integers between $a$ and $b$ (both included). Let $u^n$ denote a row vector $(u(1), \dots, u(n))$. We write $u^{1:j}$ for $j \in [1,n]$ to denote the subvector $(u(1),\dots,u(j))$. For any set of indices $\mathcal{S} \subseteq [1,n]$, we write $u[\mathcal{S}]$ to denote the sequence $\{u(j)\}_{j\in \mathcal{S}}$, and we use $\mathcal{S}^{\text{C}}$ to denote the set complement in $[1,n]$, that is, $\mathcal{S}^{\text{C}} = [1,n] \setminus \mathcal{S}$. If $\mathcal{S}$ denotes an event, then $\mathcal{S}^{\text{C}}$ also denotes its complement. Consider some index $i \in [1,L]$, where $L  \in \mathbb{Z}^{+}$, and consider the vector $u_i^n$. We write $u_{1:L}^n$ to denote the set of vectors $\{ u_1^n, \dots, u_L^n \}$. We use $\ln$ to denote the natural logarithm, whereas $\log$ denotes the logarithm base 2. Let $X$ be a random variable taking values in $\mathcal{X}$, and let $q_x$ and $p_x$ be two different distributions with support $\mathcal{X}$, then $\mathbb{D}(q_x,p_x)$ and $\mathbb{V}(q_x,p_x)$ denote the Kullback-Leibler divergence and the total variation distance respectively. Finally, $h_2(p)$ denotes the binary entropy function, i.e., $h_2(p) = -p \log p - (1-p) \log (1-p)$, and we define the indicator function $\mathds{1} \{ u \}$ such that equals to 1 if the predicate $u$ is true and 0 otherwise.  

\subsection{Organization}
The remaining of this paper is organized as follows. Section~\ref{sec:SM} introduces the channel model formally. In Section~\ref{sec:rev}, the fundamental theorems of polar codes are revisited. Section~\ref{sec:PCS} describes the proposed polar coding scheme. In Section~\ref{sec:performance}, the performance of the coding scheme is analyzed. Finally, the concluding remarks are presented in Section~\ref{sec:conclusion}.

\section{Channel Model and Achievable Region}\label{sec:SM}
Formally, a WBC $(\mathcal{X}, p_{Y_{(1)}Y_{(2)} Z|X}, \mathcal{Y}_{(1)} \times \mathcal{Y}_{(2)} \times \mathcal{Z})$ with 2 legitimate receivers and an external eavesdropper is characterized by the probability transition function $p_{Y_{(1)}Y_{(2)}Z|X}$, where $X \in \mathcal{X}$ denotes the channel input, $Y_{(k)} \in \mathcal{Y}_{(k)}$ denotes the channel output corresponding to the legitimate receiver $k \in [1,2]$, and $Z \in \mathcal{Z}$ denotes the channel output corresponding to the eavesdropper. We consider a model, namely \emph{Common Information over the WBC} (CI-WBC), in which the transmitter wishes to send a private message $W$ and a confidential message $S$ to both legitimate receivers. A code $\big(\ceil{2^{nR_W}},\ceil{2^{nR_S}},\ceil{2^{nR_R}},n \big)$ for the CI-WBC consist of a private message set $\mathcal{W} \triangleq  \big[1, \ceil{2^{nR_W}} \big]$, a confidential message set $\mathcal{S} \triangleq  \big[1, \ceil{2^{nR_S}} \big]$, a randomization sequence set $\mathcal{R} \triangleq \big[1, \ceil{2^{nR_R}} \big]$ (typically referred as \emph{local randomness} and needed to confuse the eavesdropper about the confidential message $S$), an encoding function $f: \mathcal{W} \times \mathcal{S} \times \mathcal{R} \rightarrow \mathcal{X}^n$ that maps $(w,s,r)$ to a codeword $x^n$, and two decoding functions $g_{(1)}$ and $g_{(2)}$ such that $g_{(k)}:\mathcal{Y}_{(k)}^n \rightarrow \mathcal{W} \times \mathcal{S}$ ($k \in [1,2]$) maps the $k$-th legitimate receiver observations $y_{(k)}^n$ to the estimates $(\hat{w}_{(k)},\hat{s}_{(k)})$. The reliability condition to be satisfied by this code is measured in terms of the average probability of error and is given by
\begin{IEEEeqnarray}{c}
\lim_{n \rightarrow \infty} \mathbb{P}\left[  (W,S) \neq (\hat{W}_{(k)},\hat{S}_{(k)}) \right] = 0,  \quad k \in [1,2]. \IEEEeqnarraynumspace \label{eq:reliabilitycond}
\end{IEEEeqnarray}
The \emph{strong} secrecy condition is measured in terms of the information leakage and is given by
\begin{IEEEeqnarray}{c}
\lim_{n \rightarrow \infty} I \left( S ; Z^n \right) = 0.  \label{eq:secrecycond}
\end{IEEEeqnarray}
This model is graphically illustrated in Figure~\ref{fig:chmodel}. A triple of rates $(R_W,R_S,R_R) \in \mathbb{R}_{+}^3$ will be achievable for the CI-WBC if there exists a sequence of $(\ceil{2^{nR_W}},\ceil{2^{nR_S}},\ceil{2^{nR_R}},n )$ codes such that satisfy the reliability and secrecy conditions \eqref{eq:reliabilitycond} and \eqref{eq:secrecycond} respectively. 

\begin{figure}[h!]
\hspace{0.45cm}
\begin{overpic}[width=0.77\linewidth]{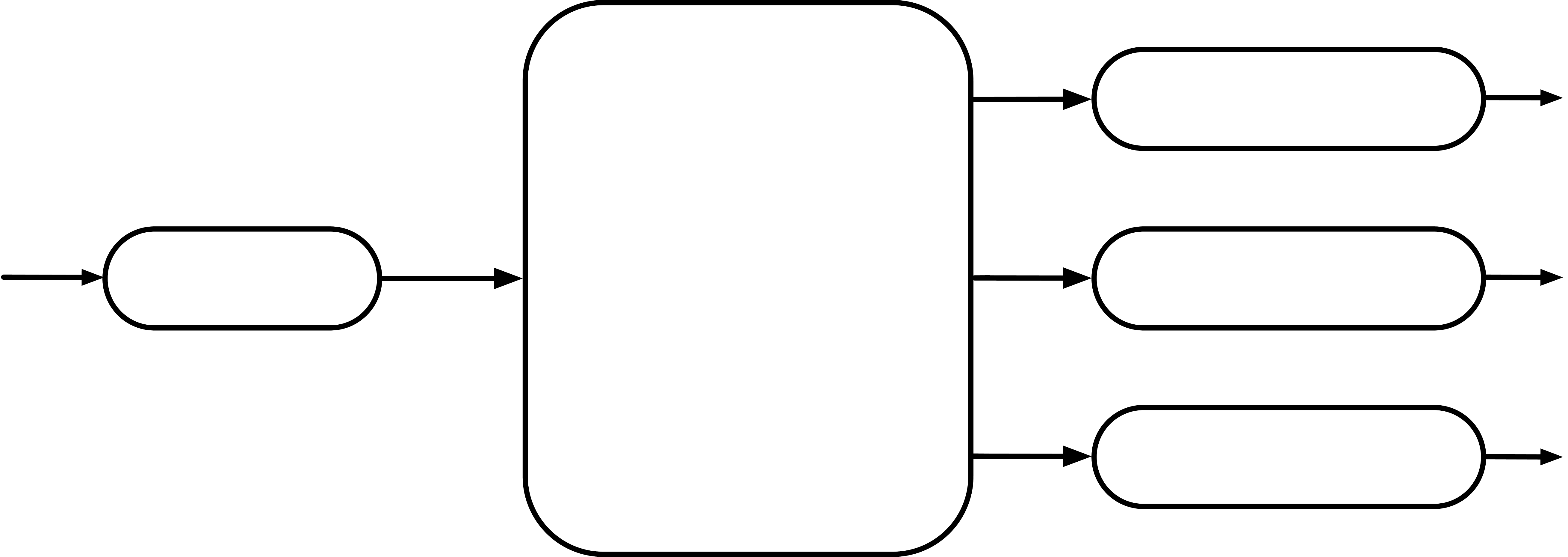}
\put (-8,19.4) {\scriptsize $(W,S)$}
\put (9,16.2) {\scriptsize Encoder}
\put (25.5,18.8) {\scriptsize $X^n$}
\put (74,27.8) {\scriptsize Receiver 1}
\put (74,16.2) {\scriptsize Receiver 2}
\put (72,4.8) {\scriptsize Eavesdropper}
\put (43,18) {\scriptsize WBC}
\put (36,14) {\scriptsize $p_{Y_{(1)}Y_{(2)}Z|X}$}
\put (96,31) {\scriptsize $(\hat{W}_{(1)},\hat{S}_{(1)})$}
\put (96,19.4) {\scriptsize $(\hat{W}_{(2)},\hat{S}_{(2)})$}
\put (101,5.8) {\scriptsize $\Ccancel[red]{{S}}$}
\end{overpic}
\caption{Channel model: CI-WBC.}\label{fig:chmodel}
\end{figure}

The achievable rate region is defined as the closure of the set of all achievable rate triples $(R_W,R_S,R_R)$. The following proposition defines an inner-bound on the achievable rate region.

\begin{proposition}[Adapted from \cite{chia2012three,watanabe2015optimal}]
\label{prop:MIB}
The region $\mathfrak{R}_{\text{\emph{CI-WBC}}}$ defined by the union over the tiples of rates $(R_W,R_S,R_R) \in \mathbb{R}_{+}^3$ satisfying
\begin{IEEEeqnarray*}{rCl}
R_W + R_S & \leq & \min \left\{  I(V ; Y_{(1)} ) ,  I(V ; Y_{(2)}) \right\}, \\ 
R_S & \leq & \min \left\{  I(V ; Y_{(1)} ) ,  I(V ; Y_{(2)}) \right\} - I(V ; Z), \\ 
R_W + R_R & \geq & I(X;Z), \\
R_R & \geq & I(X;Z|V),
\end{IEEEeqnarray*}
where the union is taken over all distributions $p_{VX}$ such that $V - X - (Y_{(1)},Y_{(2)},Z)$ forms a Markov chain, defines an inner-bound on the achievable region of the CI-WBC.
\end{proposition}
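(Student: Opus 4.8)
The goal is achievability only, so the plan is to exhibit, for every rate triple strictly inside the claimed region, a sequence of codes meeting \eqref{eq:reliabilitycond} and \eqref{eq:secrecycond}, by a standard random-coding and soft-covering argument that essentially merges the confidential-message coding of \cite{chia2012three} with the optimal-randomness bookkeeping of \cite{watanabe2015optimal}. Fix a test channel $p_{VX}$ with $V-X-(Y_{(1)},Y_{(2)},Z)$ and split the local randomness into two independent uniform parts of rates $R_T$ and $R_{R_2}$ with $R_T+R_{R_2}=R_R$. Build a two-layer superposition codebook: an outer layer of $2^{n(R_W+R_S+R_T)}$ codewords $v^n(w,s,t)$ drawn i.i.d.\ from $p_V$ and indexed by the private message $w$, the confidential message $s$, and a confusion index $t$; and, attached to each $v^n(w,s,t)$, an inner sub-codebook of $2^{nR_{R_2}}$ codewords $x^n(w,s,t,r_2)$ generated i.i.d.\ from $p_{X|V}$ conditioned on $v^n(w,s,t)$. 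To send $(w,s)$, the encoder draws $(t,r_2)$ uniformly and transmits $x^n(w,s,t,r_2)$.

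For reliability, each legitimate receiver $k$ runs joint-typicality decoding of the outer codeword $v^n$ from $y_{(k)}^n$ with respect to the induced memoryless channel $p_{Y_{(k)}|V}$ (the inner randomness is absorbed as channel noise); since $v^n$ determines $(w,s,t)$, this recovers the messages. The error probability vanishes as soon as $R_W+R_S+R_T<I(V;Y_{(k)})$ for both $k$, i.e.\ $R_W+R_S+R_T<\min_k I(V;Y_{(k)})$, and the compound nature of the problem is handled automatically since the single outer codebook serves both receivers.

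For secrecy, which is the heart of the argument, apply the soft-covering (channel resolvability) lemma at both layers. Conditioned on the realized codebook and on any $(w,s,t)$, the inner sub-codebook drives the eavesdropper's output law close, in codebook-expected total variation and exponentially in $n$, to the distribution obtained by passing $v^n(w,s,t)$ through the memoryless channel $p_{Z|V}$, provided $R_{R_2}>I(X;Z|V)$. Averaging over the confusion pair $(w,t)$ — here the private message $W$ is reused as extra confusion, which is why the secrecy cost falls on $R_S$ alone — the outer layer in turn drives this mixture close to the product law $p_Z^{\otimes n}$, provided $R_W+R_T>I(V;Z)$. Chaining the two estimates via the triangle inequality and a union bound over $s$, the conditional output law $p_{Z^n|S=s}$ is, in expectation over codebooks, within exponentially small total variation of $p_Z^{\otimes n}$ uniformly in $s$; the usual continuity bound relating total variation to entropy then gives $\mathbb{E}\bigl[I(S;Z^n)\bigr]\to 0$, and a selection argument fixes one codebook meeting all requirements at once.

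Finally, it remains to translate the existence of feasible $R_T,R_{R_2}\ge 0$ with $R_W+R_S+R_T\le\min_k I(V;Y_{(k)})$, $R_W+R_T\ge I(V;Z)$, $R_{R_2}\ge I(X;Z|V)$, and $R_T+R_{R_2}=R_R$ into inequalities on $(R_W,R_S,R_R)$ alone: Fourier--Motzkin elimination of $R_T$ and $R_{R_2}$, using the identity $I(V;Z)+I(X;Z|V)=I(X;Z)$ implied by the Markov chain, yields precisely the four inequalities defining $\mathfrak{R}_{\text{CI-WBC}}$. I expect the main obstacle to be the two-layer secrecy step: it requires a superposition version of the soft-covering lemma and a careful check that the two codebook-expectation estimates can be chained across the exponentially many message/confusion indices — which works only because soft-covering yields doubly-exponential concentration — and then derandomized together with the reliability bound; the reliability analysis and the Fourier--Motzkin elimination are routine.
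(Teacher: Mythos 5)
Your proposal is sound: the superposition codebook with the private message $W$ and the confusion index $T$ jointly resolving the eavesdropper's outer-layer observation, the inner soft-covering at rate above $I(X;Z|V)$, the doubly-exponential concentration needed for the union bound over $s$, and the Fourier--Motzkin elimination using $I(V;Z)+I(X;Z|V)=I(X;Z)$ do recover exactly the four inequalities of Proposition~\ref{prop:MIB}. Note, however, that the paper never proves this proposition internally: it is imported ("adapted") from \cite{chia2012three,watanabe2015optimal}, which argue essentially as you do, and the paper's own technical contribution (Theorem~\ref{th:th1}) is a \emph{constructive} polar coding scheme that achieves only the corner point \eqref{eq:targetrate} of the region, with reliability and strong secrecy established through explicit SC encoding/decoding, a chaining construction across $L$ blocks, total-variation coupling of the induced and target distributions (Lemma~\ref{lemma:distDMS}), and negligible-rate secret seeds to break the bidirectional dependencies in the secrecy analysis. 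So the two routes are complementary rather than identical: your random-coding/soft-covering argument yields the full region cleanly but non-constructively, with the derandomization hidden in a selection argument; the paper's polar-based route gives explicit low-complexity encoders and decoders with strong secrecy and optimal local-randomness rate, at the price of targeting a single corner point and requiring the seed and chaining machinery. The only caveat in your write-up is presentational: the claim that the two resolvability estimates ``chain across exponentially many indices'' should be stated as a concentration-plus-union-bound step applied before fixing the codebook (as you indicate parenthetically), since the expectation bounds alone do not survive a union over $2^{nR_S}$ messages; with that made explicit, the argument is complete.
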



\section{Review of Polar Codes}\label{sec:rev}
Let $(\mathcal{X} \times \mathcal{Y}, p_{XY})$ be a Discrete Memoryless Source (DMS), where\footnote{Throughout this paper, we assume binary polarization. Nevertheless, an extension to $q$-ary alphabets is possible \cite{5513555,csasouglu2009polarization}.} ${X} \in \{0,1\}$ and $Y \in \mathcal{Y}$. The polar transform over the $n$-sequence $X^n$, $n$ being any power of $2$, is defined as $U^n \triangleq X^n G_n$, where $G_n \triangleq \begin{brsm}1 & 1\\1 & 0\end{brsm}^{\otimes n}$ is the source polarization matrix \cite{arikan2010source}. Since $G_n = G_n^{-1}$, then $X^n = U^n G_n$.

The polarization theorem for source coding with side information \cite[Theorem~1]{arikan2010source} states that the polar transform extracts the randomness of $X^n$ in the sense that, as $n \rightarrow \infty$, the set of indices $j \in [1,n]$ can be divided practically into two disjoint sets, namely $\smash{\mathcal{H}_{X|Y}^{(n)}}$ and $\smash{\mathcal{L}_{X|Y}^{(n)}}$, such that $U(j)$ for $j \in \mathcal{H}_{X|Y}^{(n)}$ is practically independent of $(U^{1:j-1},Y^n)$ and uniformly distributed, that is, $H ({U(j) | U^{1:j-1}, Y^n} ) \rightarrow 1$, and $U(j)$ for $j \in \smash{\mathcal{L}_{X|Y}^{(n)}}$ is almost determined by $(U^{1:j-1}, Y^n)$, which means that $H ( U(j) | U^{1:j-1}, Y^n ) \rightarrow 0$. Formally, let $\delta_n \triangleq 2^{-n^{\beta}}$, where $\beta \in (0, \frac{1}{2})$, and
\begin{IEEEeqnarray*}{rCl}
\mathcal{H}_{X|Y}^{(n)} & \triangleq & \left\{ j \in [1,n] \! : H \! \left( U(j) \left| U^{1:j-1},Y^n  \right. \! \right) \geq 1-\delta_n \right\}, \\
\mathcal{L}_{X|Y}^{(n)} & \triangleq & \left\{ j \in [1,n] \! : H \! \left( U(j) \left| U^{1:j-1},Y^n  \right. \! \right) \leq \delta_n \right\}. 
\end{IEEEeqnarray*}
Then, by \cite[Theorem~1]{arikan2010source} we have $\smash{\lim_{n \rightarrow \infty} \frac{1}{n} | \mathcal{H}_{X|Y}^{(n)} | }  = H(X|Y)$ and $\smash{\lim_{n \rightarrow \infty} \frac{1}{n} | \mathcal{L}_{X|Y}^{(n)} | } = 1 - H(X|Y)$. Consequently, the number of elements $U(j)$ that \emph{have not polarized} is asymptotically negligible in terms of rate, that is, $\smash{\lim_{n \rightarrow \infty} \frac{1}{n} | ( \mathcal{H}_{X|Y}^{(n)} )^{\text{C}} \setminus  \mathcal{L}_{X|Y}^{(n)}  | } = 0$. 

Furthermore, \cite[Theorem~2]{arikan2010source} states that given the part $U[(\mathcal{L}_{X|Y}^{(n)} )^{\text{C}}]$ and the channel output observations $Y^n$, the remaining part $\smash{U[\mathcal{L}_{X|Y}^{(n)}]}$ can be reconstructed by using SC decoding with error probability in $O(n \delta_n)$.

Similarly to $\mathcal{H}_{X|Y}^{(n)}$ and $\mathcal{L}_{X|Y}^{(n)}$, the sets $\mathcal{H}_{X}^{(n)}$ and $\mathcal{L}_{X}^{(n)}$ can be defined by considering that the observations $Y^n$ are absent. Since conditioning does not increase the entropy, we have $\mathcal{H}_{X}^{(n)} \supseteq \mathcal{H}_{X|Y}^{(n)}$ and $\mathcal{L}_{X}^{(n)} \subseteq \mathcal{L}_{X|Y}^{(n)}$.  A discrete memoryless channel $(\mathcal{X}, p_{Y|X}, \mathcal{Y})$ with some arbitrary $p_X$ can be seen as a DMS $(\mathcal{X} \times \mathcal{Y}, p_{X}p_{Y|X})$. In channel polar coding, first we define the sets of indices $\mathcal{H}_{X|Y}^{(n)}$, $\mathcal{L}_{X|Y}^{(n)}$, $\mathcal{H}_{X}^{(n)}$ and $\mathcal{L}_{X}^{(n)}$ from the target distribution $p_{X}p_{Y|X}$. Then, based on the previous sets, the encoder somehow constructs\footnote{Since the polar-based encoder will construct random variables that must approach the target distribution of the DMS, throughout this paper we use \emph{tilde} above the random variables to emphazise this purpose.} $\tilde{U}^n$ and applies the inverse polar transform $\tilde{X}^n = \tilde{U}^n G_n$. Afterwards, the transmitter sends $\tilde{X}^n$ over the channel, which induces $\tilde{Y}^n$. Let $(\tilde{X}^n,\tilde{Y}^n) \sim \tilde{q}_{X^n}\tilde{q}_{Y^n|X^n}$, if $\mathbb{V} (\tilde{q}_{X^nY^n}, p_{X^nY^n}) \rightarrow 0$ then the receiver can reliably reconstruct $\tilde{U}[\mathcal{L}_{X|Y}^{(n)}]$ from $\tilde{Y}^n$ and $\smash{\tilde{U}[(\mathcal{L}_{X|Y}^{(n)} )^{\text{C}}]}$ by performing SC decoding \cite{korada2010polar}.

\section{Polar Coding Scheme}\label{sec:PCS} 
Let $(\mathcal{V} \times \mathcal{X} \times \mathcal{Y}_{(1)} \times \mathcal{Y}_{(2)} \times \mathcal{Z} , p_{VXY_{(1)}Y_{(2)}Z})$ denote the DMS that represents the input $(V,X)$ and the output $(Y_{(1)},Y_{(2)},Z)$ random variables of the CI-WBC, where $|\mathcal{V} |=|\mathcal{X}|=2$. Without loss of generality, and to avoid the trivial case $R_S=0$ in Proposition~\ref{prop:MIB}, we assume that 
\begin{IEEEeqnarray}{c}
 \label{eq:assumpRate1}
H(V | Z) > H (V |  Y_{(1)})  \geq H (V |  Y_{(2)}). 
\end{IEEEeqnarray}
If $H (V |  Y_{(1)}) < H (V |  Y_{(2)})$, one can simply exchange the role of $Y_{(1)}$ and $Y_{(2)}$ in the encoding scheme of Section \ref{sec:PCS}. We propose a polar coding scheme that achieves the following rate triple,
\begin{IEEEeqnarray}{rCl}
  \IEEEeqnarraymulticol{3}{l}{%
(R_W,R_S,R_R)
}\nonumber\\*%
\quad &=&(I(V;Z), I(V;Y_{(1)}) - I(V;Z), I(X;Z|V)),  \IEEEeqnarraynumspace \label{eq:targetrate}
\end{IEEEeqnarray}
which corresponds to the one of the region in Proposition~\ref{prop:MIB} such that the confidential message rate is maximum and the amount of local randomness is minimum. 

For the input random variable $V$ of the DMS, we define the polar transform $A^n \triangleq V^n G_n$ and the sets
\begin{IEEEeqnarray}{rCl}
\mathcal{H}_V^{(n)} & \triangleq & \big\{j \in [1,n] \! : H  \big( A(j) \big| A^{1:j-1} \big) \geq 1 - \delta_n \big\},  \IEEEeqnarraynumspace \label{eq:HU} \\
\mathcal{H}_{V|Z}^{(n)} &  \triangleq & \big\{j \in [1,n]  \! :H \big( A(j) \big| A^{1:j-1}, Z^n \big)  \nonumber\\*
\IEEEeqnarraymulticol{3}{r}{
\geq 1 - \delta_n \big\},  \label{eq:HUZ}
} \IEEEeqnarraynumspace \\
\mathcal{L}_{V|Z}^{(n)} & \triangleq & \big\{j \in [1,n] \! : H \big( A(j) \big| A^{1:j-1}, Z^n \big) \leq \delta_n \big\}, \label{eq:LUZ} \IEEEeqnarraynumspace \\
\mathcal{L}_{V|Y_{(k)}}^{(n)} & \triangleq & \big\{j \in [1,n] \! : H  \big( A(j) \big| A^{1:j-1}, Y_{(k)}^n  \big) \leq  \delta_n \big\}, \label{eq:LUY_k}  \IEEEeqnarraynumspace
\end{IEEEeqnarray}
where $k \in [1,2]$. For the input random variable $X$, we define $T^n \triangleq X^n G_n$ and the associated sets
\begin{IEEEeqnarray}{rCl}
\mathcal{H}_{X|V}^{(n)} & \triangleq & \big\{ j \in [1,n]  \! : H \big( T(j) \big| T^{1:j-1}, V^n  \big) \nonumber\\*
\IEEEeqnarraymulticol{3}{r}{
\geq 1 - \delta_n \big\}. \label{eq:HV-UT}
} \IEEEeqnarraynumspace  \\
\mathcal{H}_{X|VZ}^{(n)} & \triangleq & \big\{ j \in [1,n]  \! : H \big( T(j) \big| T^{1:j-1}, V^n, Z^n  \big) \nonumber\\*
\IEEEeqnarraymulticol{3}{r}{
\geq 1 - \delta_n \big\}. \label{eq:HV-UTZ}
}  \IEEEeqnarraynumspace  
\end{IEEEeqnarray}
We have $p_{A^nT^n}(a^n, t^n) = p_{V^nX^n}(a^n G_n, t^n G_n)$ due to the invertibility of $G_n$ and, for convenience, we write
\begin{IEEEeqnarray}{rCl}
\IEEEeqnarraymulticol{3}{l}{%
p_{A^nT^n}(a^n, t^n)
}\nonumber\\*%
\quad
& = & p_{A^n}(a^n) p_{T^n | V^n}(t^n | a^n G_n) \nonumber\\* 
& = & \Big( \prod_{j=1}^n p_{A(j)|A^{1:j-1}}(a(j)|a^{1:j-1}) \Big)  \nonumber\\*
& & \cdot \Big( \prod_{j=1}^n p_{T(j)|T^{1:j-1} V^n}(t(j)|t^{1:j-1}, a^n G_n) \Big). \IEEEeqnarraynumspace \label{eq:distDMS}
\end{IEEEeqnarray}

The non-degraded nature of the broadcast channel means having to use a chaining construction \cite{6875073}. Hence, consider that the encoding takes place over $L$ blocks indexed by $i \in [1,L]$. At the $i$-th block, the encoder will construct $\tilde{A}_i^n$, which will carry the private and the confidential messages intended to both legitimate receivers. Additionally, the encoder will store into $\tilde{A}_i^n$ some elements from $\tilde{A}_{i-1}^n$ (if $i \in [2,L]$) and $\tilde{A}_{i+1}^n$ (if $i \in [1,L-1]$) so that both legitimate receivers are able to reliably reconstruct $\tilde{A}_{1:L}^n$. Then, given $\tilde{V}_i^n = \tilde{A}_i^n G_n$, the encoder will perform the polar-based channel prefixing to construct $\tilde{T}_i^n$. Finally, it will obtain $\tilde{X}_i^n = \tilde{T}_i^n G_n$, which will be transmitted over the WBC inducing the channel outputs $(\tilde{Y}_{(1),i}^n,\tilde{Y}_{(2),i}^n,\tilde{Z}_i^n)$.

\subsection{General polar-based encoding}\label{sec:encoding}

Consider the construction of $\tilde{A}^n_{1:L}$. Besides the sets defined in \eqref{eq:HU}--\eqref{eq:LUY_k}, we define the partition of $\mathcal{H}_{V}^{(n)}$:
\begin{IEEEeqnarray}{rCl}
\mathcal{G}^{(n)} & \triangleq & \mathcal{H}^{(n)}_{V|Z},  \label{eq:sG} \\
\mathcal{C}^{(n)} & \triangleq & \mathcal{H}^{(n)}_V \cap \big( \mathcal{H}^{(n)}_{V|Z} \big)^{\text{C}}. \label{eq:sC}
\end{IEEEeqnarray}
Moreover, we also define
\begin{IEEEeqnarray}{rCl}
\mathcal{G}^{(n)}_{0} & \triangleq & \mathcal{G}^{(n)} \cap \mathcal{L}^{(n)}_{V|Y_{(1)}}\cap   \mathcal{L}^{(n)}_{V|Y_{(2)}}, \label{eq:sG0} \\
\mathcal{G}^{(n)}_{1} & \triangleq  & \mathcal{G}^{(n)} \cap \big( \mathcal{L}^{(n)}_{V|Y_{(1)}} \big)^{\text{C}} \cap \mathcal{L}^{(n)}_{V|Y_{(2)}} , \label{eq:sG1} \\
\mathcal{G}^{(n)}_{2} & \triangleq & \mathcal{G}^{(n)} \cap \mathcal{L}^{(n)}_{V|Y_{(1)}} \cap  \big( \mathcal{L}^{(n)}_{V|Y_{(2)}} \big)^{\text{C}} , \label{eq:sG2}  \\ 
\mathcal{G}^{(n)}_{1,2} & \triangleq &  \mathcal{G}^{(n)} \cap \big( \mathcal{L}^{(n)}_{V|Y_{(1)}} \big)^{\text{C}}  \cap   \big( \mathcal{L}^{(n)}_{V|Y_{(2)}} \big)^{\text{C}}, \label{eq:sG12}
\end{IEEEeqnarray}
which form a partition of the set $\mathcal{G}^{(n)}$, and
\begin{IEEEeqnarray}{rCl}
\mathcal{C}^{(n)}_{0} & \triangleq & \mathcal{C}^{(n)} \cap \mathcal{L}^{(n)}_{V|Y_{(1)}} \cap   \mathcal{L}^{(n)}_{V|Y_{(2)}}, \label{eq:sC0} \\
\mathcal{C}^{(n)}_{1} & \triangleq  & \mathcal{C}^{(n)} \cap  \big( \mathcal{L}^{(n)}_{V|Y_{(1)}} \big)^{\text{C}} \cap \mathcal{L}^{(n)}_{V|Y_{(2)}}, \label{eq:sC1} \\
\mathcal{C}^{(n)}_{2} & \triangleq & \mathcal{C}^{(n)} \cap \mathcal{L}^{(n)}_{V|Y_{(1)}} \cap  \big( \mathcal{L}^{(n)}_{V|Y_{(2)}} \big)^{\text{C}} , \label{eq:sC2} \\
\mathcal{C}^{(n)}_{1,2} & \triangleq & \mathcal{C}^{(n)}  \cap  \big( \mathcal{L}^{(n)}_{V|Y_{(1)}} \big)^{\text{C}}  \cap   \big( \mathcal{L}^{(n)}_{V|Y_{(2)}} \big)^{\text{C}} ,\label{eq:sC12} 
\end{IEEEeqnarray}
which form a partition of $\mathcal{C}^{(n)}$. 
These sets are graphically represented in Figure~\ref{fig:partitionHv}. Roughly speaking, according to \eqref{eq:HU}, $A[\mathcal{H}^{(n)}_V ]$ is the \emph{nearly uniformly} distributed part of $A^n$. Thus, $\tilde{A}_i[\mathcal{H}^{(n)}_V ]$, $i \in [1,L]$, is suitable for storing uniformly distributed random sequences. According to \eqref{eq:HUZ}, $A[\mathcal{H}^{(n)}_{V|Z}]$ is \emph{almost} independent of $Z^n$. Thus, $\tilde{A}_i [\mathcal{G}^{(n)} ]$ is suitable for storing information to be secured from the eavesdropper, whereas $\tilde{A}_i [\mathcal{C}^{(n)} ]$ is not. Sets in \eqref{eq:sG0}--\eqref{eq:sC12} with subscript 1 (sets inside the red curve in Figure~\ref{fig:partitionHv}) form $\mathcal{H}^{(n)}_V \cap ( \mathcal{L}^{(n)}_{V|Y_{(1)}} )^{\text{C}}$, while those with subscript 2 (sets inside the blue curve) form $\smash{\mathcal{H}^{(n)}_V \cap ( \mathcal{L}^{(n)}_{V|Y_{(2)}} )^{\text{C}}}$. From Section~\ref{sec:rev}, $\smash{\tilde{A}_i [ \mathcal{H}^{(n)}_V \cap ( \mathcal{L}^{(n)}_{V|Y_{(k)}} )^{\text{C}} ]}$ is the nearly uniformly distributed part of the sequence $\tilde{A}_i^n$ required by legitimate receiver $k$ to reliably reconstruct the entire sequence by performing SC decoding. 

\begin{figure}[h!]
\centering
\begin{overpic}[width=0.2\textwidth]{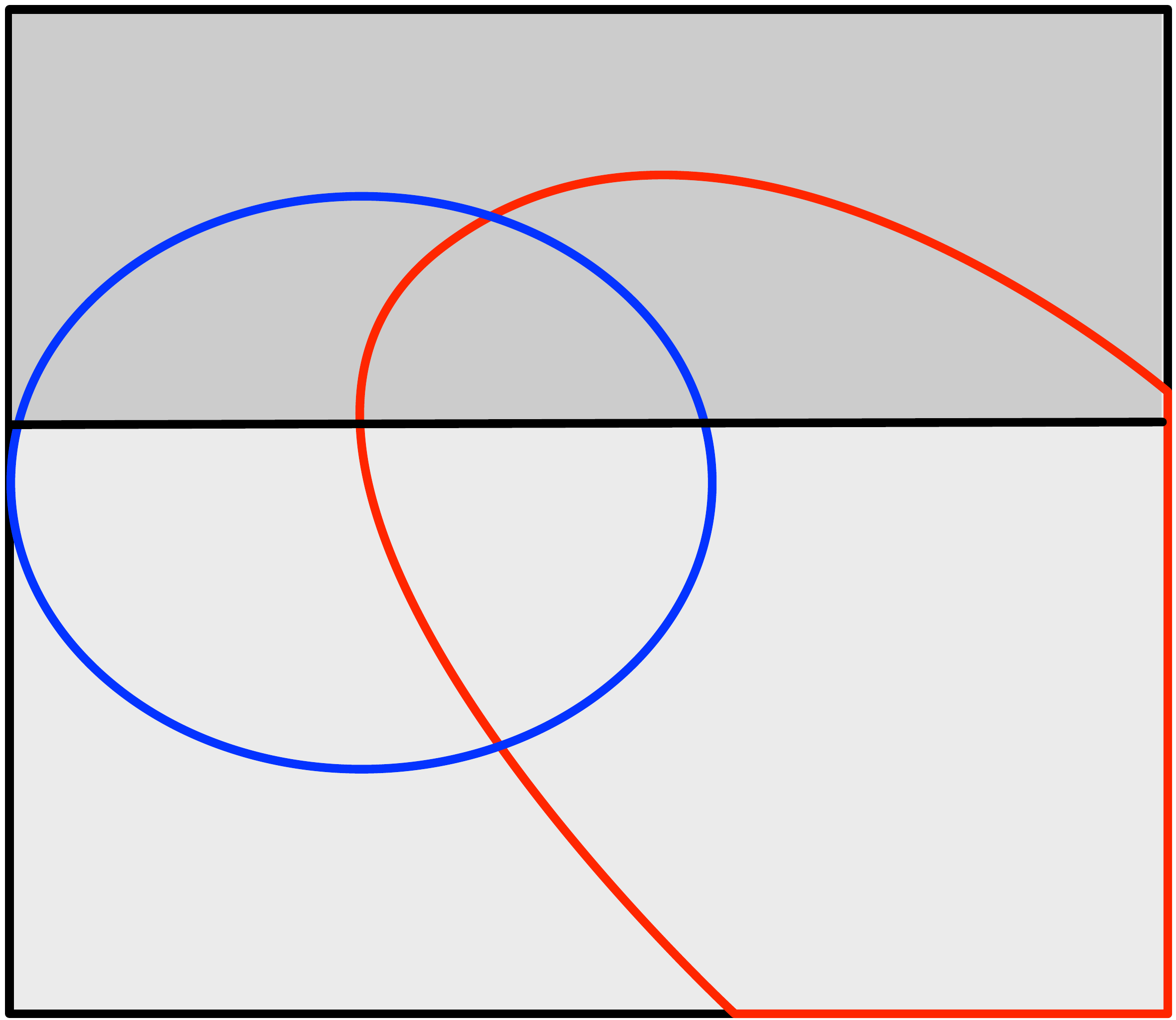}
\put (35,74) {\footnotesize $\mathcal{C}^{(n)}_0$}
\put (13,56) {\footnotesize $\mathcal{C}^{(n)}_2$}
\put (62,56) {\footnotesize $\mathcal{C}^{(n)}_1$}
\put (34,55) {\footnotesize $\mathcal{C}^{(n)}_{1,2}$}
\put (18,8) {\footnotesize $\mathcal{G}^{(n)}_0$}
\put (10,35) {\footnotesize $\mathcal{G}^{(n)}_2$}
\put (68,25) {\footnotesize $\mathcal{G}^{(n)}_1$}
\put (39,35) {\footnotesize $\mathcal{G}^{(n)}_{1,2}$}
\end{overpic}
\caption{Graphical representation of the sets in \eqref{eq:sG}--\eqref{eq:sC12}. The indices inside the soft and dark gray area form $\mathcal{G}^{(n)}$ and $\mathcal{C}^{(n)}$ respectively. The indices that form $\smash{\mathcal{H}^{(n)}_V \cap ( \mathcal{L}^{(n)}_{V|Y_{(1)}} )^{\text{C}}}$ are those inside the red curve, while those inside the blue curve form $\smash{\mathcal{H}^{(n)}_V \cap ( \mathcal{L}^{(n)}_{V|Y_{(2)}} )^{\text{C}}}$.}\label{fig:partitionHv}
\end{figure}

For sufficiently large $n$, assumption \eqref{eq:assumpRate1} imposes the following restriction on the size of the previous sets:
\begin{IEEEeqnarray}{c}
\big| \mathcal{G}^{(n)}_1 \big|  - \big| \mathcal{C}^{(n)}_2 \big| \geq \big| \mathcal{G}^{(n)}_2 \big|  - \big| \mathcal{C}^{(n)}_1 \big| > \big| \mathcal{C}^{(n)}_{1,2} \big| - \big| \mathcal{G}^{(n)}_0 \big|. \nonumber\\*
\label{eq:assumpRate1Impl2}
\end{IEEEeqnarray}
The left-hand inequality in \eqref{eq:assumpRate1Impl2} holds from the fact that 
\begin{IEEEeqnarray}{rCl}
  \IEEEeqnarraymulticol{3}{l}{%
\big| \mathcal{C}^{(n)}_1  \cup \mathcal{G}^{(n)}_1 \big| - \big| \mathcal{C}^{(n)}_2  \cup \mathcal{G}^{(n)}_2 \big|
}\nonumber\\*%
\quad 
& = & \big| \mathcal{H}^{(n)}_V \cap \big( \mathcal{L}^{(n)}_{V|Y_{(1)}} \big)^{\text{C}} \setminus  \mathcal{H}^{(n)}_V \cap \big( \mathcal{L}^{(n)}_{V|Y_{(2)}} \big)^{\text{C}} \big|  \nonumber \\*
& & -  \big| \mathcal{H}^{(n)}_V \cap \big( \mathcal{L}^{(n)}_{V|Y_{(2)}} \big)^{\text{C}} \setminus  \mathcal{H}^{(n)}_V \cap \big( \mathcal{L}^{(n)}_{V|Y_{(1)}} \big)^{\text{C}} \big|  \nonumber \\*
& = & \big| \mathcal{H}^{(n)}_V \cap \big( \mathcal{L}^{(n)}_{V|Y_{(1)}} \big)^{\text{C}} \big| -  \big| \mathcal{H}^{(n)}_V \cap \big( \mathcal{L}^{(n)}_{V|Y_{(2)}} \big)^{\text{C}} \big| \nonumber \\*
& \geq &  0, \nonumber
\end{IEEEeqnarray}
where the positivity holds by \cite[Theorem~1]{arikan2010source} because
\begin{IEEEeqnarray}{rCl}
  \IEEEeqnarraymulticol{3}{l}{%
\lim_{n \rightarrow \infty} \frac{1}{n} \big| \mathcal{H}^{(n)}_V \cap \big( \mathcal{L}^{(n)}_{V|Y_{(k)}} \big)^{\text{C}} \big| 
}\nonumber\\*%
\quad 
& = & \lim_{n \rightarrow \infty}  \frac{1}{n} \big| \mathcal{H}^{(n)}_{V|Y_{(k)}} \big| \nonumber \\*
& & + \lim_{n \rightarrow \infty}  \frac{1}{n} \big| \mathcal{H}^{(n)}_V \cap \big( \mathcal{L}^{(n)}_{V|Y_{(k)}} \big)^{\text{C}} \setminus  \mathcal{H}^{(n)}_{V|Y_{(k)}}  \big| \nonumber \\*
& = & H(V|Y_{(k)}), \nonumber
\end{IEEEeqnarray}
for any $k \in [1,2]$. Similarly, the right-hand inequality in \eqref{eq:assumpRate1Impl2} follows from \cite[Theorem~1]{arikan2010source} and the fact that
\begin{IEEEeqnarray}{rCl}
  \IEEEeqnarraymulticol{3}{l}{%
\big| \mathcal{G}^{(n)}_0  \cup \mathcal{G}^{(n)}_2 \big| - \big| \mathcal{C}^{(n)}_{1}  \cup \mathcal{C}^{(n)}_{1,2} \big|
}\nonumber\\*%
\quad 
& = & \big| \mathcal{H}^{(n)}_{V|Z} \setminus \mathcal{H}^{(n)}_V \cap \big( \mathcal{L}^{(n)}_{V|Y_{(1)}} \big)^{\text{C}} \big|  \nonumber \\*
& & - \big| \mathcal{H}^{(n)}_V  \cap \big( \mathcal{L}^{(n)}_{V|Y_{(1)}} \big)^{\text{C}} \setminus  \mathcal{H}^{(n)}_{V|Z}  \big|  \nonumber \\
 &  = &  \big| \mathcal{H}^{(n)}_{V|Z} \big| -  \big| \mathcal{H}^{(n)}_V \cap \big( \mathcal{L}^{(n)}_{V|Y_{(1)}} \big)^{\text{C}} \big|. \nonumber
\end{IEEEeqnarray}
Thus, according to \eqref{eq:assumpRate1Impl2}, we must consider four cases:
\begin{itemize}
\item[A.] $| \mathcal{G}^{(n)}_1 | > | \mathcal{C}^{(n)}_2 | $, ${| \mathcal{G}^{(n)}_2 |  > | \mathcal{C}^{(n)}_1 |}$ and ${| \mathcal{G}^{(n)}_0 |  \geq | \mathcal{C}^{(n)}_{1,2} |}$;
\item[B.] $| \mathcal{G}^{(n)}_1 | > | \mathcal{C}^{(n)}_2 | $, ${| \mathcal{G}^{(n)}_2 |  > | \mathcal{C}^{(n)}_1 |}$ and ${| \mathcal{G}^{(n)}_0 |  < | \mathcal{C}^{(n)}_{1,2} |}$;
\item[C.] $| \mathcal{G}^{(n)}_1 | \geq | \mathcal{C}^{(n)}_2 |$, $| \mathcal{G}^{(n)}_2 |  \leq | \mathcal{C}^{(n)}_1 |$ and $| \mathcal{G}^{(n)}_0 |  > | \mathcal{C}^{(n)}_{1,2} |$;
\item[D.] $| \mathcal{G}^{(n)}_1 | < | \mathcal{C}^{(n)}_2 | $, $| \mathcal{G}^{(n)}_2 |  < | \mathcal{C}^{(n)}_1 |$ and $| \mathcal{G}^{(n)}_0 |  > | \mathcal{C}^{(n)}_{1,2} |$.
\end{itemize}

The generic encoding process for all cases is summarized in Algorithm~\ref{alg:genericenc}. For $i \in [1,L]$, let $W_i$ be a uniformly distributed vector of length $| \mathcal{C}^{(n)} |$ that represents the private message. The encoder forms $\tilde{A}_{i}[\mathcal{C}^{(n)}]$ by simply storing $W_i$. Indeed, if $i \in [1,L-1]$, notice that the encoder forms $\tilde{A}_{i+1}[\mathcal{C}^{(n)}]$ before constructing the entire sequence $\tilde{A}_{i}^n$. 

From $\tilde{A}_{i}[\mathcal{C}^{(n)}]$, $i \in [1,L]$, we define the sequences
\begin{IEEEeqnarray}{rCl}
\Psi_{i}^{(V)} & \triangleq & \tilde{A}_{i}[\mathcal{C}_2^{(n)}],  \\
\Gamma_{i}^{(V)}  & \triangleq & \tilde{A}_{i}[\mathcal{C}_{1,2}^{(n)}],  \\
\Theta_{i}^{(V)} & \triangleq & \tilde{A}_{i}[\mathcal{C}_1^{(n)}]. 
\end{IEEEeqnarray}
Notice that $[\Psi_i^{(V)},\Gamma_i^{(V)}] = \tilde{A}_i[\mathcal{C}^{(n)}_2 \cup \mathcal{C}^{(n)}_{1,2}]$ is required by legitimate receiver 2 to reliably estimate $\tilde{A}_i^n$ entirely and, thus, the encoder repeats $[\Psi_i^{(V)},\Gamma_i^{(V)}]$, if $i \in [1,L-1]$, conveniently in $\tilde{A}_{i+1}[\mathcal{G}^{(n)}]$ (the function \texttt{form\_A$_{\text{\texttt{G}}}$} is responsible of the chaining construction and is described later). On the other hand, $[\Theta_i^{(V)},\Gamma_i^{(V)}]= \tilde{A}_i[\mathcal{C}^{(n)}_1 \cup \mathcal{C}^{(n)}_{1,2}]$ is required by legitimate receiver 1. Nevertheless, in order to satisfy the strong secrecy condition in~\eqref{eq:secrecycond}, $[\Theta_i^{(V)},\Gamma_i^{(V)}]$, $i \in [2,L]$, is not repeated directly into $\tilde{A}_{i-1}[\mathcal{G}^{(n)}]$, but the encoder copies instead $\bar{\Theta}_i^{(V)}$ and $\bar{\Gamma}_i^{(V)}$ obtained as follows. Let $\kappa_{\Theta}^{(V)}$ and $\kappa_{\Gamma}^{(V)}$ be uniformly distributed keys with length $\smash{|\mathcal{C}_{1}^{(n)}|}$ and $\smash{|\mathcal{C}_{1,2}^{(n)} |}$ respectively, which are privately shared between transmitter and both legitimate receivers. For any $i \in [2,L]$, we define the sequences
\begin{IEEEeqnarray}{rCl}
\bar{\Theta}^{(V)}_{i} \triangleq \tilde{A}_{i}[\mathcal{C}_{1}^{(n)}] \oplus \kappa_{\Theta}^{(V)},  \\
\bar{\Gamma}_i^{(V)} \triangleq \tilde{A}_{i}[\mathcal{C}_{1,2}^{(n)}] \oplus \kappa_{\Gamma}^{(V)}.
\end{IEEEeqnarray}
Since these secret keys are reused in all blocks, their size becomes negligible in terms of rate for $L$ large enough.

\algrenewcommand\algorithmicindent{0.5em}
\begin{algorithm}[t]
\caption{Generic PC encoding scheme}\label{alg:genericenc}
\begin{small}
\begin{algorithmic}[1]
\Require Private and confidential messages $W_{1:L}$ and $S_{1:L}$; randomization sequences $R_{1:L}$; random sequence $\Lambda^{(X)}_{0}$; and secret keys $\kappa_{\Theta}^{(V)}$, $\kappa_{\Gamma}^{(V)}$, $\kappa_{{\Upsilon \Phi}_{(1)}}^{(V)}$ and $\kappa_{{\Upsilon \Phi}_{(2)}}^{(V)}$.
\State $\Psi^{(V)}_{0}$, $\Gamma^{(V)}_{0}$, $\Pi^{(V)}_{0}$, $\Lambda^{(V)}_{0}$, $\bar{\Theta}^{(V)}_{L+1}$, $\bar{\Gamma}^{(V)}_{L+1} \leftarrow \varnothing$
\State $\tilde{A}_{1}[\mathcal{C}^{(n)}] \leftarrow W_1$
\State $\Psi^{(V)}_{1}, \Gamma^{(V)}_{1}  \leftarrow \tilde{A}_{1}[\mathcal{C}^{(n)}] $
\State $\bar{\Theta}^{(V)}_{1}, \bar{\Gamma}^{(V)}_{1}   \leftarrow \varnothing$ 
\Comment For notation purposes
\For{$i = 1$ \text{to} $L$}
\If{$i \neq L$}  
\State $  \tilde{A}_{i+1}[\mathcal{C}^{(n)}] \leftarrow W_{i+1}$ 
\State $ \Psi^{(V)}_{i+1}, \Gamma^{(V)}_{i+1}, \bar{\Theta}^{(V)}_{i+1}, \bar{\Gamma}^{(V)}_{i+1}    \leftarrow \!  \big(  \tilde{A}_{i+1}[\mathcal{C}^{(n)}] , \kappa_{\Theta}^{(V)}  , \kappa_{\Gamma}^{(V)}   \big)$
\EndIf
\State $\tilde{A}_{i}^n$, $\Pi^{(V)}_{i}$, $\Lambda^{(V)}_{i} \leftarrow$ \texttt{form\_A$_{\text{\texttt{G}}}$}$\big(i, S_i,\bar{\Theta}^{(V)}_{i+1}, \bar{\Gamma}^{(V)}_{i+1}, \dots $ \par
$\qquad \qquad \qquad \qquad  \qquad \quad \, \,   \Psi^{(V)}_{i-1}, \Gamma^{(V)}_{i-1}, \Pi^{(V)}_{i-1}, \Lambda^{(V)}_{i-1}\big)$ 
\State \textbf{if} $i=1$ \textbf{then} $\Upsilon_{(1)}^{(V)} \leftarrow \tilde{A}_1 \big[ \mathcal{H}_V^{(n)} \cap \big( \mathcal{L}_{V|Y_{(1)}}^{(n)} \big)^{\text{C}}\big]$
\State \textbf{if} $i=L$ \textbf{then} $\Upsilon_{(2)}^{(V)} \leftarrow \tilde{A}_L \big[ \mathcal{H}_V^{(n)} \cap \big( \mathcal{L}_{V|Y_{(2)}}^{(n)} \big)^{\text{C}}\big]$
\For{$j \in \big( \mathcal{H}^{(n)}_V \big)^{\text{C}}$}
\If{$j \in \big( \mathcal{H}^{(n)}_V \big)^{\text{C}}  \setminus \mathcal{L}^{(n)}_V$}
\State $\tilde{A}(j) \leftarrow p_{A(j)|A^{1:j-1}}  \big( \tilde{a}_i(j) \big| \tilde{a}_i^{1:j-1}  \big)$
\ElsIf{$j \in \mathcal{L}^{(n)}_V$}
\State $\tilde{A}(j) \leftarrow \argmax_{a \in \mathcal{V}} p_{A(j)|A^{1:j-1}} \big( {a} | {a}^{1:j-1} \big)$
\EndIf
\EndFor
\State $\Phi_{(k),i}^{(V)} \leftarrow \tilde{A}_i \big[ \big( \mathcal{H}_V^{(n)}  \big)^{\text{C}} \cap \big( \mathcal{L}_{V|Y_{(k)}}^{(n)} \big)^{\text{C}} \big], \quad k \in [1,2]$
\State $\tilde{X}_i^n, \Lambda_{i}^{(X)} \leftarrow \text{\texttt{pb\_ch\_pref}}\big( \tilde{A}_i^n G_n, R_i, \Lambda_{i-1}^{(X)} \big)$
\EndFor
\State Send $\big( \Phi_{(k),i}^{(V)},\Upsilon_{(k)}^{(V)} \big) \oplus \kappa_{{\Upsilon \Phi}_{(k)}}^{(V)}$ to the receiver $k \in [1,2]$
\State \textbf{return} $\tilde{X}_{1:L}^n$
\end{algorithmic}
\end{small}
\end{algorithm}

The function \texttt{form\_A$_{\text{\texttt{G}}}$} in Algorithm~\ref{alg:genericenc} constructs sequences $\tilde{A}_{1:L}[\mathcal{G}^{(n)}]$ differently depending on which case, among cases A, B, C or D described before, characterizes the given CI-WBC. This part of the encoding is described in detail in Section~\ref{sec:formAG} and Algorithm~\ref{alg:formA}. 

Then, given $\tilde{A}_i [\mathcal{C}^{(n)} \cup \mathcal{G}^{(n)}]$, the encoder forms the remaining entries of $\tilde{A}_i^n$, i.e., $\tilde{A}_i [(\mathcal{H}_V^{(n)})^{\text{C}}]$, as follows. If $j \in \mathcal{L}_{V}^{(n)}$, it constructs $\tilde{A}_{i}(j)$ deterministically by using SC encoding as in \cite{7447169}. Thus, we define the SC encoding function $\xi^{(V)}_{(j)}: \{0,1\}^{j-1} \rightarrow \{ 0,1 \}$ in Algorithm~\ref{alg:genericenc} as
\begin{IEEEeqnarray}{c}
\xi^{(V)}_{(j)} \big( {a}^{1:j-1} \big)  \triangleq \argmax_{a \in \mathcal{V}} p_{A(j)|A^{1:j-1}} \left( {a} \left| {a}^{1:j-1} \right. \right), \IEEEeqnarraynumspace \label{eq:scdeterministic}
\end{IEEEeqnarray} 
$p_{A(j)|A^{1:j-1}}$ corresponding to the distribution of the original DMS --see \eqref{eq:distDMS}--. Therefore, notice that only the part $\tilde{A}_i [ ( \mathcal{H}^{(n)}_V )^{\text{C}} \setminus \mathcal{L}^{(n)}_V]$ of $\tilde{A}_i^n$ is constructed randomly.

Finally, for $i \in [1,L]$, given $\tilde{V}_i^n = \tilde{A}_i^n G_n$, a randomization sequence $R_i$ and a uniformly distributed random sequence $\Lambda_0^{(V)}$, the encoder performs the polar-based channel prefixing (function \texttt{pb\_ch\_pref} in Algorithm~\ref{alg:genericenc}) to obtain $\tilde{X}_i^n$, which is transmitted over the WBC inducing outputs $\tilde{Y}_{(1),i}^n$, $\tilde{Y}_{(2),i}^n$ and $\tilde{Z}_i^n$. This part of the encoding is described in detail in Section~\ref{sec:PCSCP}. 

Furthermore, the encoder obtains
\begin{IEEEeqnarray}{c}
\Phi_{(k),i}^{(V)} \triangleq \tilde{A}_i \big[ ( \mathcal{H}_V^{(n)}  )^{\text{C}} \cap ( \mathcal{L}_{V|Y_{(k)}}^{(n)} )^{\text{C}} \big]
\end{IEEEeqnarray} 
for any $k \in [1,2]$ and $i \in [1,L]$, which is required by legitimate receiver $k$ to reliably estimate $\tilde{A}_i^n$ entirely. Since $\Phi_{(k),i}^{(V)}$ is not \emph{nearly uniform}, the encoder cannot make it available to the legitimate receiver~$k$ by means of the chaining structure. Also, the encoder obtains 
\begin{IEEEeqnarray}{rCl}
\Upsilon_{(1)}^{(V)} & \triangleq & \tilde{A}_1 \big[ \mathcal{H}_V^{(n)} \cap ( \mathcal{L}_{V|Y_{(1)}}^{(n)} )^{\text{C}} \big],  \\
\Upsilon_{(2)}^{(V)} & \triangleq & \tilde{A}_L \big[ \mathcal{H}_V^{(n)} \cap ( \mathcal{L}_{V|Y_{(2)}}^{(n)} )^{\text{C}} \big]. 
\end{IEEEeqnarray} 
The sequence $\smash{\Upsilon_{(1)}^{(V)}}$ is required by legitimate receiver~1 to initialize the decoding process, while the sequence $\Upsilon_{(2)}^{(V)}$ is required by legitimate receiver~2. Therefore, the transmitter additionally sends $\smash{( \Upsilon_{(k)}^{(V)}, \Phi_{(k),i}^{(V)}) \oplus \kappa_{{\Upsilon \Phi}_{(k)}}^{(V)}}$ to legitimate receiver~$k$, $\kappa_{{\Upsilon \Phi}_{(k)}}^{(V)}$ being a uniformly distributed key with size
\begin{IEEEeqnarray}{c}
L  \Big| \big( \mathcal{H}_V^{(n)}  \big)^{\text{C}} \cap \big( \mathcal{L}_{V|Y_{(k)}}^{(n)} \big)^{\text{C}} \Big| +  \Big| \mathcal{H}_V^{(n)} \cap \big( \mathcal{L}_{V|Y_{(k)}}^{(n)} \big)^{\text{C}} \Big| \nonumber 
\end{IEEEeqnarray} 
that is privately shared between transmitter and the corresponding receiver. We show in Section~\ref{sec:performance_rates} that the length of $\kappa_{{\Upsilon \Phi}_{(1)}}^{(V)}$ and $\kappa_{{\Upsilon \Phi}_{(2)}}^{(V)}$ is asymptotically negligible in terms of rate.

\subsection{Function \texttt{form\_A$_{\text{\texttt{G}}}$}.}\label{sec:formAG}
Function \texttt{form\_A$_{\text{\texttt{G}}}$} encodes the confidential messages $S_{1:L}$ and builds the chaining construction. 

Based on the sets in \eqref{eq:sG}--\eqref{eq:sC12}, let $\mathcal{R}^{(n)}_{1} \subseteq \mathcal{G}^{(n)}_{0} \cup \mathcal{G}^{(n)}_{2}$, $\mathcal{R}^{\prime (n)}_{1} \subseteq \mathcal{G}^{(n)}_{2}$,  $\mathcal{R}^{(n)}_{2} \subseteq \mathcal{G}^{(n)}_{1}$, $\mathcal{R}^{\prime (n)}_{2} \subseteq \mathcal{G}^{(n)}_{1}$, $\mathcal{R}^{(n)}_{1,2} \subseteq \mathcal{G}^{(n)}_{0}$, $\mathcal{R}^{\prime (n)}_{1,2} \subseteq \mathcal{G}^{(n)}_{0}$, $\mathcal{I}^{(n)} \subseteq \mathcal{G}^{(n)}_{0} \cup \mathcal{G}^{(n)}_{2}$, $\mathcal{R}^{(n)}_{\text{S}} \subseteq \mathcal{G}^{(n)}_{1}$ and $\mathcal{R}^{(n)}_{\Lambda} \subseteq \mathcal{G}^{(n)}_{1}$ form an additional partition of $\mathcal{G}^{(n)}$. The definition of $\mathcal{R}^{(n)}_{1}$, $\mathcal{R}^{\prime (n)}_{1}$, $\mathcal{R}^{(n)}_{2}$, $\mathcal{R}^{\prime (n)}_{2}$, $\mathcal{R}^{(n)}_{1,2}$ and $\smash{\mathcal{R}^{\prime (n)}_{1,2}}$ will depend on the particular case (among A to D), while
\begin{IEEEeqnarray}{rCl}
\mathcal{I}^{(n)} & \triangleq & \big( \mathcal{G}^{(n)}_{0} \cup \mathcal{G}^{(n)}_{2} \big) \nonumber \\*
\IEEEeqnarraymulticol{3}{r}{
\setminus   \big(  \mathcal{R}^{(n)}_{\text{1}} \cup  \mathcal{R}^{\prime (n)}_{1} \cup \mathcal{R}^{(n)}_{1,2} \cup \mathcal{R}^{\prime (n)}_{1,2} \big),
} \IEEEeqnarraynumspace  \label{eq:ASetI} \\
\mathcal{R}^{(n)}_{\text{S}} & \triangleq & \text{any subset of } \mathcal{G}^{(n)}_{1} \setminus \big( \mathcal{R}^{(n)}_{\text{2}} \cup  \mathcal{R}^{\prime (n)}_{2} \big) \nonumber \\*
  \IEEEeqnarraymulticol{3}{r}{
\text{ with size } \big| \mathcal{I}^{(n)} \cap \mathcal{G}^{(n)}_{2} \big|, 
} \IEEEeqnarraynumspace \label{eq:ASetRs} \\
\mathcal{R}_{\Lambda}^{(n)} & \triangleq & \mathcal{G}_{1,2}^{(n)} \cup \big( \mathcal{G}^{(n)}_{1} \setminus \big( \mathcal{R}^{(n)}_{2} \cup \mathcal{R}^{\prime (n)}_{2} \cup \mathcal{R}^{(n)}_{\text{S}} \big) \big). \IEEEeqnarraynumspace \label{eq:ASetLambda}
\end{IEEEeqnarray}

For $i \in [1,L]$, let $S_i$ denote a uniformly distributed vector that represents the confidential message. The message $S_1$ has size $| \mathcal{I}^{(n)} \cup \mathcal{G}^{(n)}_{1} \cup \mathcal{G}^{(n)}_{1,2} |$; for $i \in [2,L-1]$, $S_i$ has size $| \mathcal{I}^{(n)}|$; and $S_L$ has size $| \mathcal{I}^{(n)} \cup \mathcal{G}_2^{(n)} |$.

For $i \in [1,L]$, we write $\Psi_i^{(V)} \triangleq \big[ \Psi_{1,i}^{(V)}, \Psi_{2,i}^{(V)}  \big]$, $\Gamma_i^{(V)} \triangleq \big[ \Gamma_{1,i}^{(V)}, \Gamma_{2,i}^{(V)}  \big]$, $\bar{\Theta}_i^{(V)}  \triangleq  \big[ \bar{\Theta}_{1,i}^{(V)}, \bar{\Theta}_{2,i}^{(V)}  \big]$ and $\bar{\Gamma}_i^{(V)}  \triangleq  \big[ \bar{\Gamma}_{1,i}^{(V)}, \bar{\Gamma}_{2,i}^{(V)}  \big]$; and we define $\Psi_{p,i}$, $\Gamma_{p,i}$, $\bar{\Theta}_{p,i}$ and $\bar{\Gamma}_{p,i}$, for any $p \in [1,2]$, accordingly in each case.

\algrenewcommand\algorithmicindent{0.5em}
\begin{algorithm}[t]
\caption{Function \texttt{form\_A$_{\text{\texttt{G}}}$}}\label{alg:formA}
\begin{small}
\begin{algorithmic}[1]
\Require $i$, $S_i$, $\bar{\Theta}^{(V)}_{i+1}$, $\bar{\Gamma}^{(V)}_{i+1}$, $\Psi^{(V)}_{i-1}$, $\Gamma^{(V)}_{i-1}$, $\Pi^{(V)}_{i-1}$, $\Lambda^{(V)}_{i-1}$
\State Define a partition of $\mathcal{G}^{(n)}$ according to each case: $\mathcal{R}^{(n)}_{1}$, $\mathcal{R}^{\prime (n)}_{1}$,  $\mathcal{R}^{(n)}_{2}$, $\mathcal{R}^{\prime (n)}_{2}$, $\mathcal{R}^{(n)}_{1,2}$, $\mathcal{R}^{\prime (n)}_{1,2}$, $\mathcal{I}^{(n)}$, $\mathcal{R}^{(n)}_{\text{S}}$ and $\mathcal{R}^{(n)}_{\Lambda}$
\State \textbf{if} $i=1$ \textbf{then} $\tilde{A}_{1}[\mathcal{I}^{(n)} \cup \mathcal{G}^{(n)}_{1} \cup \mathcal{G}^{(n)}_{1,2}] \leftarrow S_1$
\State \textbf{if} $i \in [2,L-1]$ \textbf{then} $\tilde{A}_{i}[\mathcal{I}^{(n)}] \leftarrow S_i$
\State \textbf{if} $i = L$ \textbf{then} $\tilde{A}_{L}[\mathcal{I}^{(n)} \cup \mathcal{G}^{(n)}_{2}] \leftarrow S_L$
\State $\Psi^{(V)}_{1,i-1}$, $\Psi^{(V)}_{2,i-1} \leftarrow \Psi^{(V)}_{i-1}$
\State $\Gamma^{(V)}_{1,i-1}$, $\Gamma^{(V)}_{2,i-1} \leftarrow \Gamma^{(V)}_{i-1}$
\State $\bar{\Theta}^{(V)}_{1,i+1}$, $\bar{\Theta}^{(V)}_{2,i+1} \leftarrow \bar{\Theta}^{(V)}_{i+1}$
\State $\bar{\Gamma}^{(V)}_{1,i+1}$, $\bar{\Gamma}^{(V)}_{2,i+1} \leftarrow \bar{\Gamma}^{(V)}_{i+1}$
\State $\tilde{A}_{i}[\mathcal{R}_{1,2}^{(n)}] \leftarrow \Gamma^{(V)}_{1,i-1} \oplus \bar{\Gamma}^{(V)}_{1,i+1}$
\State $\tilde{A}_{i}[\mathcal{R}_{1,2}^{\prime (n)}] \leftarrow \Psi^{(V)}_{2,i-1} \oplus \bar{\Theta}^{(V)}_{2,i+1}$
\If{$i \in [1,L-1]$}
\State $\tilde{A}_{1}[\mathcal{R}_{1}^{(n)}] \leftarrow \bar{\Theta}^{(V)}_{1,i+1}$
\State $\tilde{A}_{1}[\mathcal{R}_{1}^{\prime (n)}] \leftarrow \bar{\Gamma}^{(V)}_{2,i+1}$
\EndIf
\If{$i \in [2,L]$}
\State $\tilde{A}_{i}[\mathcal{R}_{2}^{(n)}] \leftarrow \Psi^{(V)}_{1,i-1}$
\State $\tilde{A}_{i}[\mathcal{R}_{2}^{\prime (n)}] \leftarrow \Gamma^{(V)}_{2,i-1}$
\State $\tilde{A}_{i}[\mathcal{R}_{\text{S}}^{(n)}] \leftarrow \Pi^{(V)}_{i-1}$
\State $\tilde{A}_{i}[\mathcal{R}_{\Lambda}^{(n)}] \leftarrow \Lambda^{(V)}_{i-1}$
\EndIf
\State $\Pi_i^{(V)} \leftarrow \tilde{A}_i [\mathcal{I}^{(n)} \cap \mathcal{G}^{(n)}_{2}]$
\State $\Lambda_i^{(V)} \leftarrow \tilde{A}_i [\mathcal{R}^{(n)}_{\Lambda}]$
\State \textbf{return} the sequences $\tilde{A}_{i}^n$, $\Pi_i^{(V)}$ and $\Lambda_i^{(V)}$
\end{algorithmic}
\end{small}
\end{algorithm}

\subsubsection{Case A}
In this case, recall that $| \mathcal{G}^{(n)}_1 | > | \mathcal{C}^{(n)}_2 |$, ${| \mathcal{G}^{(n)}_2 |  > | \mathcal{C}^{(n)}_1 |}$ and ${| \mathcal{G}^{(n)}_0 |  \geq | \mathcal{C}^{(n)}_{1,2} |}$. We define
\begin{IEEEeqnarray}{rCl}
\mathcal{R}^{(n)}_{1} & \triangleq & \text{any subset of } \mathcal{G}^{(n)}_2 \text{ with size } \big| \mathcal{C}^{(n)}_1 \big|, \label{eq:ASetR1} \\
\mathcal{R}^{(n)}_{2} & \triangleq & \text{any subset of } \mathcal{G}^{(n)}_{1} \text{ with size } \big| \mathcal{C}^{(n)}_2 \big|, \label{eq:ASetR2} \\
\mathcal{R}^{(n)}_{1,2} & \triangleq & \text{any subset of } \mathcal{G}^{(n)}_0 \text{ with size } \big| \mathcal{C}^{(n)}_{1,2} \big|,  \label{eq:ASetR12}
\end{IEEEeqnarray}
and $\mathcal{R}^{\prime (n)}_{1} = \mathcal{R}^{\prime (n)}_{2} = \mathcal{R}^{\prime (n)}_{1,2} \triangleq \emptyset$. By the assumption of Case A, it is clear that $\mathcal{R}^{(n)}_{1}$, $\mathcal{R}^{(n)}_{2}$ and $\mathcal{R}^{(n)}_{1,2}$ exist. Also, by \eqref{eq:assumpRate1Impl2}, the set $\mathcal{I}^{(n)}$ exists, and so will $\mathcal{R}^{(n)}_{\text{S}}$ because
\begin{IEEEeqnarray}{rCl}
\IEEEeqnarraymulticol{3}{l}{%
\big| \mathcal{G}^{(n)}_{1} \setminus \big( \mathcal{R}^{(n)}_{\text{2}} \cup  \mathcal{R}^{\prime (n)}_{2} \big) \big| - \big| \mathcal{I}^{(n)} \cap \mathcal{G}^{(n)}_{2}\big| 
} \nonumber \\*
& = & \big| \mathcal{G}^{(n)}_{1} \setminus \big( \mathcal{R}^{(n)}_{\text{2}} \cup  \mathcal{R}^{\prime (n)}_{2} \big) \big| - \big| \big(  \mathcal{G}^{(n)}_{2} \setminus \mathcal{R}^{(n)}_{\text{1}} \cup  \mathcal{R}^{\prime (n)}_{1} \big) \big| \nonumber \\*
& = & \big| \mathcal{G}^{(n)}_{1} \big| - \big| \mathcal{C}^{(n)}_{2} \big|  -   \big| \mathcal{G}^{(n)}_{2} \big| - \big| \mathcal{C}^{(n)}_{1} \big|   \nonumber \\
& \geq &  0. \nonumber
\end{IEEEeqnarray}

For $i \in [1,L]$, we define $\Psi_{1,i}^{(V)} \triangleq \Psi_{i}^{(V)}$, $\Gamma_{1,i}^{(V)} \triangleq \Gamma_{i}^{(V)}$, $\bar{\Theta}_{1,i}^{(V)} \triangleq \bar{\Theta}_{i}^{(V)}$, $\bar{\Gamma}_{1,i}^{(V)} \triangleq \bar{\Gamma}_{i}^{(V)}$ and, therefore, we have $\Psi_{2,i}^{(V)} = \Gamma_{2,i}^{(V)} = \bar{\Theta}_{2,i}^{(V)} = \bar{\Gamma}_{2,i}^{(V)} \triangleq \varnothing$.

From \eqref{eq:sC2}, we have $\mathcal{C}_2^{(n)} \subseteq \mathcal{L}_{V|Y_{(1)}}^{(n)} \setminus \mathcal{L}_{V|Y_{(2)}}^{(n)}$. Thus, $\smash{\Psi_{i-1}^{(V)}=\tilde{A}_{i-1}[\mathcal{C}_2^{(n)}]}$ is needed by receiver 2 to reliably reconstruct $\tilde{A}_{i-1}^n$, but can be reliably inferred by receiver 1 given $\smash{\tilde{A}_{i-1}[(\mathcal{L}_{V|Y_{(1)}}^{(n)})^{\text{C}}]}$. Thus, according to Algorithm~\ref{alg:formA}, the encoder repeats the entire sequence $\Psi_{i-1}^{(V)}$ in $\tilde{A}_{i}[\mathcal{R}_{2}^{(n)}] \subseteq \tilde{A}_{i}[\mathcal{L}_{V|Y_{(2)}}^{(n)} \setminus \mathcal{L}_{V|Y_{(1)}}^{(n)}]$. 

Similarly, from \eqref{eq:sC1}, we have $\mathcal{C}_1^{(n)} \subseteq \mathcal{L}_{V|Y_{(1)}}^{(n)} \setminus \mathcal{L}_{V|Y_{(2)}}^{(n)}$. Thus, $\smash{\Theta_{i+1}^{(V)}=\tilde{A}_{i+1}[\mathcal{C}_1^{(n)}]}$ is needed by receiver 1 to reliably reconstruct $\smash{\tilde{A}_{i+1}^n}$ but given $\smash{\tilde{A}_{i}[(\mathcal{L}_{V|Y_{(2)}}^{(n)})^{\text{C}}]}$ can be inferred by receiver~2. Thus, the encoder repeats the sequence $\bar{\Theta}_{i+1}^{(V)}$ in $\tilde{A}_{i}[\mathcal{R}_{1}^{(n)}] \subseteq \tilde{A}_{i}[\mathcal{L}_{V|Y_{(2)}}^{(n)} \setminus \mathcal{L}_{V|Y_{(1)}}^{(n)}]$.

Finally, from \eqref{eq:sC12}, $\mathcal{C}_{1,2}^{(n)} \subseteq (\mathcal{L}_{V|Y_{(2)}}^{(n)})^{\text{C}} \cap (\mathcal{L}_{V|Y_{(1)}}^{(n)})^{\text{C}}$. Thus, $\smash{\Gamma_{i-1}^{(V)}}$ and $\smash{\Gamma_{i+1}^{(V)}}$ are needed by both receivers to reliably reconstruct the sequences $\smash{\tilde{A}_{i-1}^n}$ and $\smash{\tilde{A}_{i+1}^n}$ respectively. Consequently, the encoder repeats $\Gamma_{i-1}^{(V)}$ and $\bar{\Gamma}_{i+1}^{(V)}$ in $\smash{\tilde{A}_{i}[\mathcal{R}_{1,2}^{(n)}] \subseteq \tilde{A}_{i}[\mathcal{L}_{V|Y_{(1)}}^{(n)} \cap \mathcal{L}_{V|Y_{(2)}}^{(n)}]}$. Indeed, both sequences are repeated in the same entries of $\tilde{A}_{i}[\mathcal{G}_{0}^{(n)}]$ by performing $\Gamma_{i-1}^{(V)} \oplus \bar{\Gamma}_{i+1}^{(V)}$. Since $\Gamma_0^{(V)} = \bar{\Gamma}_{L+1}^{(V)} = \varnothing$, only $\bar{\Gamma}_{2}^{(V)}$ is repeated at block 1 and $\Gamma_{L-1}^{(V)}$ at block $L$.

Moreover, part of secret message $S_i$, $i \in [1,L]$, is stored into some entries of $\tilde{A}_{i}^n$ whose indices belong to $\mathcal{G}_{2}^{(n)}$. Thus, in any block $i \in [2,L]$, the encoder repeats 
\begin{IEEEeqnarray}{c}
\Pi_{i-1}^{(V)} \triangleq \tilde{A}_{i-1}[\mathcal{I}^{(n)} \cap \mathcal{G}_{2}^{(n)}]
\end{IEEEeqnarray}
in $\tilde{A}_{i}[\mathcal{R}_{\text{S}}^{(n)}]  \subseteq \tilde{A}_{i}[\mathcal{L}_{V|Y_{(2)}}^{(n)} \setminus \mathcal{L}_{V|Y_{(1)}}^{(n)}]$. Also, it repeats 
\begin{IEEEeqnarray}{c}
\Lambda_{i-1}^{(V)} \triangleq \tilde{A}_{i-1}[\mathcal{R}_{\Lambda}^{(n)}] 
\end{IEEEeqnarray}
in $\tilde{A}_{i}[\mathcal{R}_{\Lambda}^{(n)}]$. Hence, $\Lambda_{1}^{(V)}$ is repeated in all blocks. 


These sets that form the partition of $\mathcal{G}^{(n)}$ in Case A can be seen in Figure~\ref{fig:EncCasA}, which also displays the encoding process that aims to construct $\tilde{A}_{1:L}[\mathcal{C}^{(n)} \cup \mathcal{G}^{(n)}]$.

\begin{figure}[h!]
\centering
\begin{overpic}[width=0.8\linewidth]{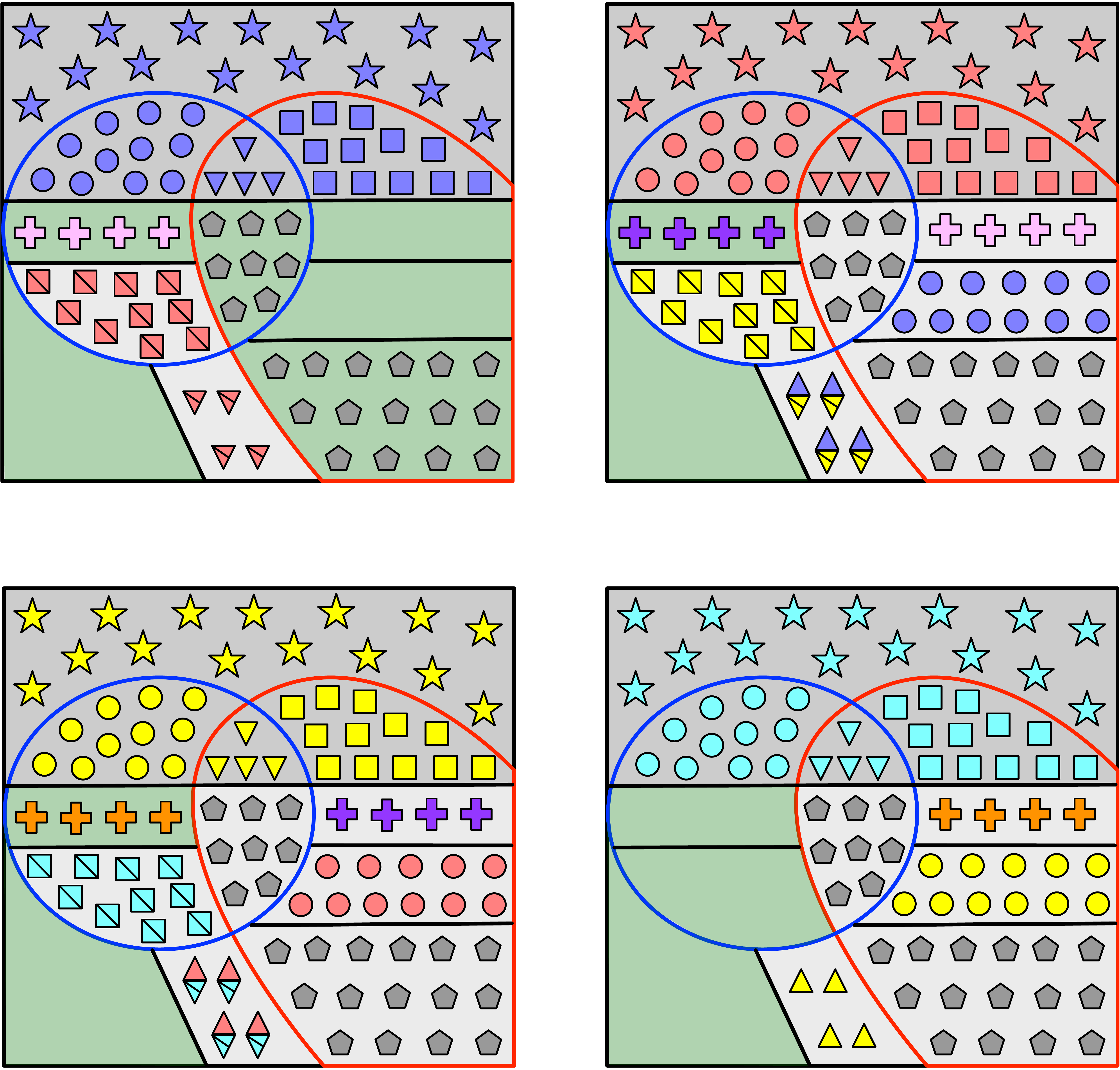}
\put (16,47) {\small Block 1}
\put (69,47) {\small Block 2}
\put (16,-5) {\small Block 3}
\put (69,-5) {\small Block 4}
\end{overpic}
\vspace{0.7cm}
\caption{For Case A, graphical representation of the encoding that leads to the construction of $\tilde{A}_{1:L}[ \mathcal{H}_V^{(n)} ]$ when $L=4$. Consider the block 2, $\mathcal{R}^{(n)}_{1}$, $\mathcal{R}^{(n)}_{2}$, $\mathcal{R}^{(n)}_{1,2}$, $\mathcal{R}^{(n)}_{\text{S}}$ and $\mathcal{R}^{(n)}_{\Lambda}$ are those areas filled with yellow squares, blue circles, blue and yellow diamonds, pink crosses, and gray pentagons, respectively; the set $\mathcal{I}^{(n)}$ is the green filled area. At block $i \in [1,L]$, $W_i$ is represented by symbols of the same color (e.g., red symbols at block 2), and $\Theta_i^{(V)}$, $\Psi_i^{(V)}$ and $\Gamma_i^{(V)}$ are represented by squares, circles and triangles respectively. Also, $\bar{\Theta}_i^{(V)}$ and $\bar{\Gamma}_i^{(V)}$ are denoted by squares and triangles, respectively, with a line through them. At block $i \in [2,L-1]$, the diamonds denote $\Gamma_{i-1}^{(V)} \oplus \bar{\Gamma}_{i+1}^{(V)}$. In block $i \in [1,L]$, $S_i$ is stored into those entries whose indices belong to the green area. For $i \in [1,L-1]$, $\Pi_i^{(V)}$ is denoted by crosses (e.g., purple crosses at block 2), and is repeated into $\tilde{A}_{i+1}\big[ \mathcal{R}^{(n)}_{\text{S}} \big]$. The sequence $\Lambda_1^{(V)}$ from $S_1$ is represented by gray pentagons and is repeated in all blocks. The sequences $\Upsilon_{(1)}^{(V)}$ and $\Upsilon_{(2)}^{(V)}$ are those entries inside the red and blue curve at block 1 and $L$, respectively.}\label{fig:EncCasA} 
\end{figure}

\subsubsection{Case B}
In this case, recall that $| \mathcal{G}^{(n)}_1 | > | \mathcal{C}^{(n)}_2 |$, ${| \mathcal{G}^{(n)}_2 |  > | \mathcal{C}^{(n)}_1 |}$ and ${| \mathcal{G}^{(n)}_0 |  < | \mathcal{C}^{(n)}_{1,2} |}$. We define $\mathcal{R}^{(n)}_{1}$ and $\mathcal{R}^{(n)}_{2}$ as in \eqref{eq:ASetR1} and \eqref{eq:ASetR2} respectively, and $\mathcal{R}_{1,2}^{\prime (n)} \triangleq \emptyset$. Now, since $| \mathcal{G}^{(n)}_0 |  < | \mathcal{C}^{(n)}_{1,2} |$, for any $i \in [1,L]$ only a part of $\Gamma_{i-1}^{(V)} \oplus \bar{\Gamma}_{i+1}^{(V)}$ can be repeated entirely in $\tilde{A}_i[\mathcal{G}_0^{(n)}]$. Thus, we define $\mathcal{R}^{(n)}_{1,2} \triangleq \mathcal{G}^{(n)}_{0}$ and 
\begin{IEEEeqnarray}{rCl}
\mathcal{R}^{\prime (n)}_{1} & \triangleq & \text{any subset of } \mathcal{G}^{(n)}_2 \setminus \mathcal{R}^{(n)}_{1} \nonumber \\*
& &  \qquad \qquad \, \,
 \text{ with size } \big| \mathcal{C}^{(n)}_{1,2} \big| - \big| \mathcal{G}^{(n)}_{0} \big|, 
 \IEEEeqnarraynumspace \label{eq:BSetR1p} \\
\mathcal{R}^{\prime (n)}_{2} & \triangleq & \text{any subset of } \mathcal{G}^{(n)}_1 \setminus \mathcal{R}^{(n)}_{2} \nonumber \\*
& & \qquad \qquad \, \,
 \text{ with size } \big| \mathcal{C}^{(n)}_{1,2} \big| - \big| \mathcal{G}^{(n)}_{0} \big|.
 \IEEEeqnarraynumspace \label{eq:BSetR2p}
\end{IEEEeqnarray}
Obviously, $\smash{\mathcal{R}^{(n)}_{1,2}}$ exists and, by the assumption of Case B, so do $\smash{\mathcal{R}^{(n)}_{1}}$ and $\smash{\mathcal{R}^{(n)}_{2}}$. By \eqref{eq:assumpRate1Impl2}, $\smash{\mathcal{R}^{\prime (n)}_{1}}$ exists and so does $\smash{\mathcal{I}^{(n)}}$. Indeed, since $\smash{\mathcal{G}^{(n)}_{0} \setminus\mathcal{R}^{(n)}_{1,2} = \emptyset}$, then $\smash{\mathcal{I}^{(n)} \subseteq \mathcal{G}^{(n)}_{2}}$. Again by \eqref{eq:assumpRate1Impl2}, $\smash{\mathcal{R}^{\prime (n)}_{2}}$ exists and so does $\smash{\mathcal{R}^{(n)}_{\text{S}}}$ because
\begin{IEEEeqnarray}{rCl}
\IEEEeqnarraymulticol{3}{l}{%
\big| \mathcal{G}^{(n)}_{1} \setminus \big( \mathcal{R}^{(n)}_{\text{2}} \cup  \mathcal{R}^{\prime (n)}_{2} \big) \big| - \big| \big(  \mathcal{G}^{(n)}_{2} \setminus \mathcal{R}^{(n)}_{\text{1}} \cup  \mathcal{R}^{\prime (n)}_{1} \big) \big|
} \nonumber \\* 
& = & \big| \mathcal{G}^{(n)}_{1} \big| - \big| \mathcal{C}^{(n)}_{2} \big| - \big( \big| \mathcal{C}^{(n)}_{1,2} \big| - \big| \mathcal{G}^{(n)}_{0} \big| \big) \nonumber \\
&  &  - \Big(  \big| \mathcal{G}^{(n)}_{2} \big| - \big| \mathcal{C}^{(n)}_{1} \big|  - \big( \big| \mathcal{C}^{(n)}_{1,2} \big| - \big| \mathcal{G}^{(n)}_{0} \big| \big) \Big) \nonumber \\
& = & \big| \mathcal{G}^{(n)}_{1} \big| - \big| \mathcal{C}^{(n)}_{2} \big|  -  \big| \mathcal{G}^{(n)}_{2} \big| + \big| \mathcal{C}^{(n)}_{1} \big|  \nonumber \\
& \geq &  0. \nonumber
\end{IEEEeqnarray}
Indeed, since $\smash{\mathcal{I}^{(n)} \subseteq \mathcal{G}^{(n)}_{2}}$, notice that $|\smash{\mathcal{R}_{\text{S}}^{(n)}} | = | \smash{\mathcal{I}^{(n)}}|$. These sets that form the partition of $\mathcal{G}^{(n)}$ in Case B can be seen in Figure~\ref{fig:EncCasB}, which also displays the encoding process that aims to construct $\tilde{A}_{1:L}[\mathcal{C}^{(n)} \cup \mathcal{G}^{(n)}]$. 

\begin{figure}[h!]
\centering
\begin{overpic}[width=0.8\linewidth]{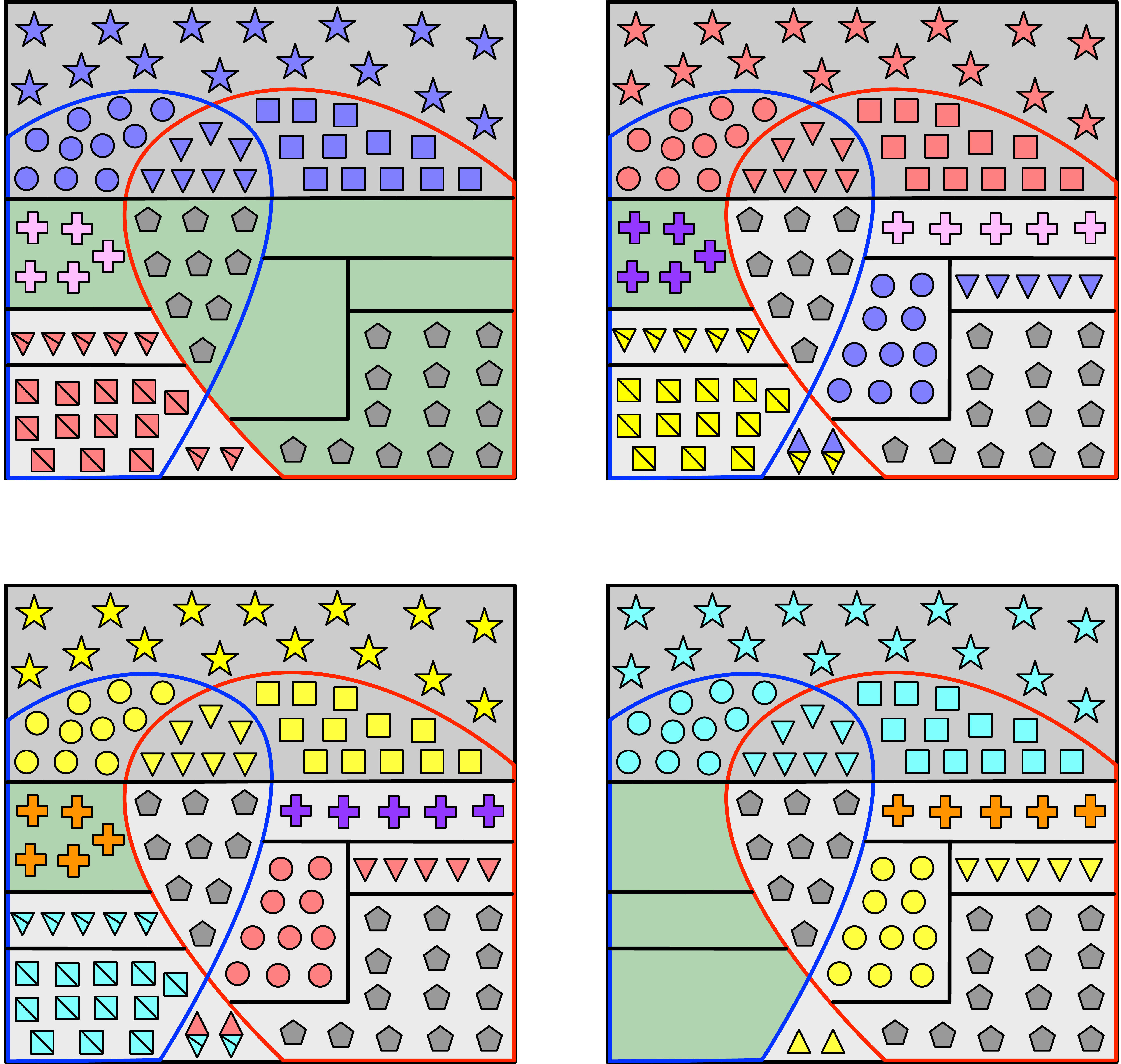}
\put (16,47) {\small Block 1}
\put (69,47) {\small Block 2}
\put (16,-5) {\small Block 3}
\put (69,-5) {\small Block 4}
\end{overpic}
\vspace{0.7cm}
\caption{For Case B, graphically representation of the encoding that leads to the construction of $\tilde{A}_{1:L}[ \mathcal{H}_V^{(n)} ]$ when $L=4$. Consider the block 2, the sets $\mathcal{R}^{(n)}_{1}$, $\mathcal{R}^{\prime(n)}_{1}$, $\mathcal{R}^{(n)}_{2}$, $\mathcal{R}^{\prime (n)}_{2}$, $\mathcal{R}^{(n)}_{1,2}$, $\mathcal{R}^{(n)}_{\text{S}}$ and $\mathcal{R}^{(n)}_{\Lambda}$ are those areas filled with yellow squares, yellow triangles, blue circles, blue triangles, blue and yellow diamonds, pink crosses, and gray pentagons, respectively; and $\mathcal{I}^{(n)}$ is the green filled area with purple crosses. At block $i \in [1,L]$, $W_i$ is represented by symbols of the same color (e.g., red symbols at block 2), and $\Theta_i^{(V)}$, $\Psi_i^{(V)}$ and $\Gamma_i^{(V)}$ are represented by squares, circles and triangles respectively. Also, $\bar{\Theta}_i^{(V)}$ and $\bar{\Gamma}_i^{(V)}$ are denoted by squares and triangles, respectively, with a line through them. At block $i \in [2,3]$, the diamonds denote $\Gamma_{i-1,1}^{(V)} \oplus \bar{\Gamma}_{1,i+1}^{(V)}$. In block $i \in [1,L]$, $S_i$ is stored into those entries whose indices belong to the green area. For $i \in [2,L-1]$, $\Pi_i^{(V)} = S_i$ and, therefore, $S_{i}$ is repeated entirely into $\tilde{A}_{i+1}[ \mathcal{R}^{(n)}_{\text{S}} ]$. The sequence $\Lambda_1^{(V)}$ from $S_1$ is represented by gray pentagons and is repeated in all blocks. The sequences $\Upsilon_{(1)}^{(V)}$ and $\Upsilon_{(2)}^{(V)}$ are the entries inside the red and blue curve at block 1 and $L$, respectively.
}\label{fig:EncCasB} 
\end{figure}

In this case, for any $i \in [1,L]$, $\Psi_{1,i}^{(V)} \triangleq \Psi_{i}^{(V)}$, $\bar{\Theta}_{1,i}^{(V)} \triangleq \bar{\Theta}_{i}^{(V)}$ and $\Psi_{2,i}^{(V)} = \bar{\Theta}_{2,i}^{(V)} \triangleq  \varnothing$; and we define $\Gamma_{1,i}^{(V)}$ and $\bar{\Gamma}_{1,i}^{(V)}$ as any part of $\Gamma_{i}^{(V)}$ and $\bar{\Gamma}_{i}^{(V)}$, respectively, with size $|\mathcal{R}^{(n)}_{1,2}|$, and $\Gamma_{2,i}^{(V)}$ and $\bar{\Gamma}_{2,i}^{(V)}$ as the remaining parts with size $|\mathcal{C}^{(n)}_{1,2}| - | \mathcal{R}^{(n)}_{1,2} |$. Now, the encoder copies $\Gamma_{1,i-1}^{(V)} \oplus \bar{\Gamma}_{1,i+1}^{(V)}$ into $\tilde{A}_i[\mathcal{R}_{1,2}^{(n)}]$, and $\Gamma_{2,i-1}^{(V)}$ and $\bar{\Gamma}_{2,i+1}^{(V)}$ into $\tilde{A}_i[\mathcal{R}_{2}^{\prime (n)}]$ and $\tilde{A}_i[\mathcal{R}_{1}^{\prime (n)}]$ respectively. Moreover, since $\mathcal{I}^{(n)} \subseteq \mathcal{G}_2^{(n)}$, notice that $\Pi_{i}^{(V)} = S_i$ for any $i \in [2,L-1]$.

\subsubsection{Case C}
In this case, recall that $| \mathcal{G}^{(n)}_1 | \geq | \mathcal{C}^{(n)}_2 |$, $| \mathcal{G}^{(n)}_2 |  \leq | \mathcal{C}^{(n)}_1 |$ and $| \mathcal{G}^{(n)}_0 |  > | \mathcal{C}^{(n)}_{1,2} |$. Hence, we define $\mathcal{R}^{(n)}_2$ and $\mathcal{R}^{(n)}_{1,2}$ as in \eqref{eq:ASetR2} and \eqref{eq:ASetR12} respectively, and $\mathcal{R}^{\prime (n)}_1 = \mathcal{R}^{\prime (n)}_2 = \mathcal{R}^{\prime (n)}_{1,2} \triangleq \emptyset$. On the other hand, since $| \mathcal{G}^{(n)}_2 |  \leq | \mathcal{C}^{(n)}_1 |$, now for $i \in [1,L-1]$ only a part of $\bar{\Theta}_{i+1}^{(V)}$ can be repeated entirely in $\tilde{A}_i[\mathcal{G}_2^{(n)}]$, and we define
\begin{IEEEeqnarray}{rCl}
\mathcal{R}^{(n)}_{1} & \triangleq & \text{the union of } \mathcal{G}^{(n)}_{2} \text{ with any subset of } \nonumber \\
& & \quad \mathcal{G}^{(n)}_{0} \setminus  \mathcal{R}_{1,2}^{(n)} \text{ with size } \big| \mathcal{C}^{(n)}_{1} \big| - \big| \mathcal{G}^{(n)}_{2} \big|. \IEEEeqnarraynumspace \label{eq:CSetR1}
\end{IEEEeqnarray}
It is clear that $\mathcal{R}^{(n)}_{2}$ and $\mathcal{R}^{(n)}_{1,2}$ exist. By \eqref{eq:assumpRate1Impl2}, $\mathcal{R}^{(n)}_{1}$ also exists and, hence, so does $\mathcal{I}^{(n)}$. Since $\mathcal{R}^{(n)}_{1} \supseteq \mathcal{G}^{(n)}_{2}$, $\mathcal{I}^{(n)} \cap \mathcal{G}^{(n)}_2 = \emptyset$ and $\mathcal{R}^{(n)}_{\text{S}} = \emptyset$. These sets that form $\mathcal{G}^{(n)}$ are represented in Figure~\ref{fig:EncCasC}, which also displays the part of the encoding that aims to construct $\tilde{A}_{1:L}[\mathcal{C}^{(n)} \cup \mathcal{G}^{(n)}]$.

\begin{figure}[h!]
\centering
\begin{overpic}[width=0.8\linewidth]{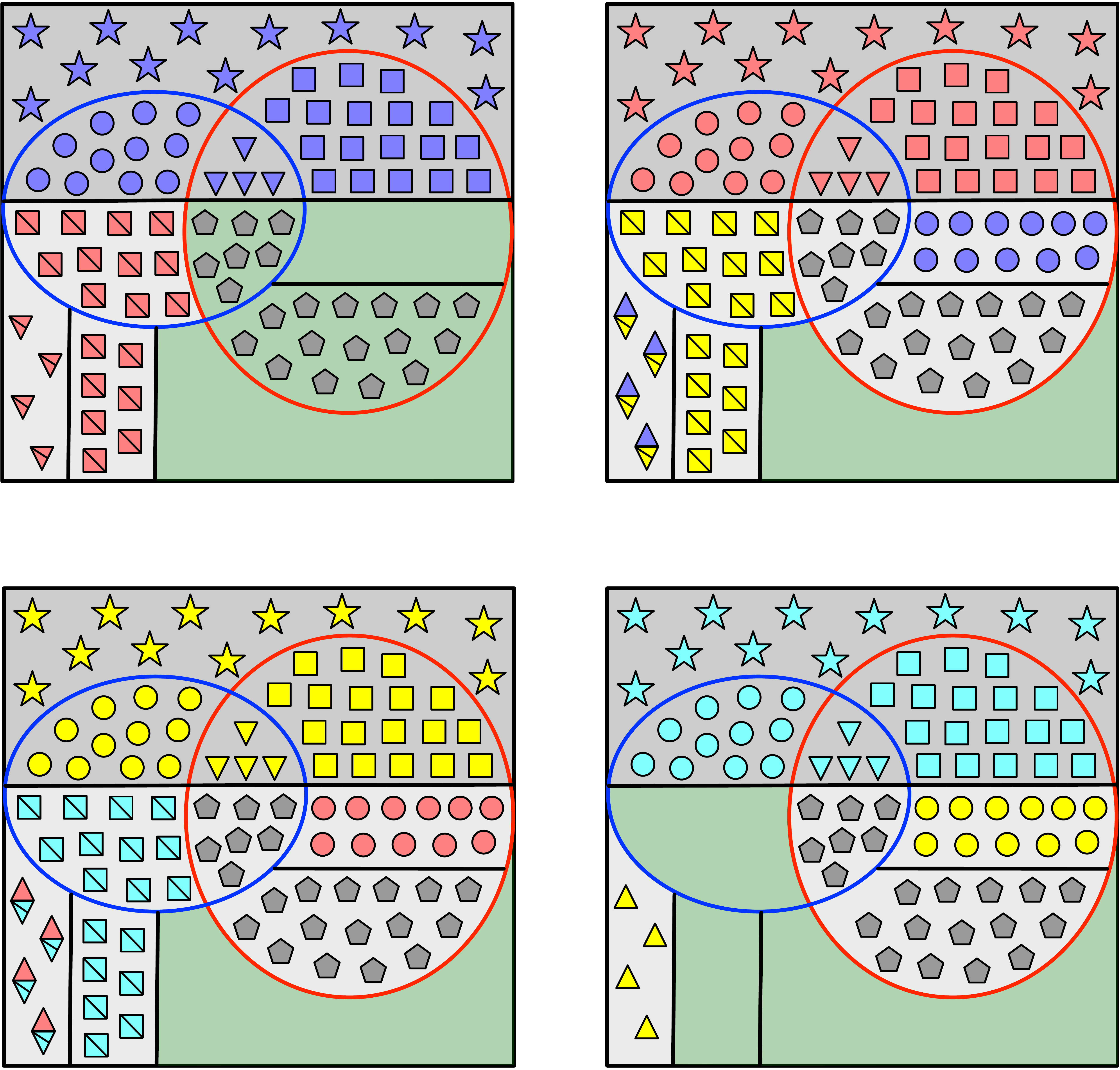}
\put (16,47) {\small Block 1}
\put (69,47) {\small Block 2}
\put (16,-5) {\small Block 3}
\put (69,-5) {\small Block 4}
\end{overpic}
\vspace{0.7cm}
\caption{For Case C, graphically representation of the encoding that leads to the construction of $\tilde{A}_{1:L}[ \mathcal{H}_V^{(n)} ]$ when $L=4$. Consider the block 2, $\mathcal{R}^{(n)}_{1}$, $\mathcal{R}^{(n)}_{2}$, $\mathcal{R}^{(n)}_{1,2}$ and $\mathcal{R}^{(n)}_{\Lambda}$ are those areas filled with yellow squares, blue circles, blue and yellow diamonds, and gray pentagons, respectively; and $\mathcal{I}^{(n)}$ is the green filled area. At block $i \in [1,L]$, $W_i$ is represented by symbols of the same color (e.g., red symbols at block 2), and $\Theta_i^{(V)}$, $\Psi_i^{(V)}$ and $\Gamma_i^{(V)}$ are represented by squares, circles and triangles respectively. Also, $\bar{\Theta}_i^{(V)}$ and $\bar{\Gamma}_i^{(V)}$ are denoted by squares and triangles, respectively, with a line through them. At block $i \in [2,3]$, the diamonds denote $\Gamma_{i-1,1}^{(V)} \oplus \bar{\Gamma}_{1,i+1}^{(V)}$. For $i \in [1,L]$, $S_i$ is stored into those entries belonging to the green area. $\Lambda_1^{(V)}$ is represented by gray pentagons and is repeated in all blocks. The sequences $\Upsilon_{(1)}^{(V)}$ and $\Upsilon_{(2)}^{(V)}$ are the entries inside the red and blue circumference at block 1 and $L$, respectively.}\label{fig:EncCasC} 
\end{figure}

In this case, for $i \in [1,L]$, we define $\Psi_{1,i}^{(V)} \triangleq \Psi_{i}^{(V)}$, $\Gamma_{1,i}^{(V)} \triangleq \Gamma_{i}^{(V)}$, $\bar{\Theta}_{1,i}^{(V)} \triangleq \bar{\Theta}_{i}^{(V)}$, $\bar{\Gamma}_{1,i}^{(V)} \triangleq \bar{\Gamma}_{i}^{(V)}$, and $\Psi_{2,i}^{(V)} = \Gamma_{2,i}^{(V)} = \bar{\Theta}_{2,i}^{(V)} = \bar{\Gamma}_{2,i}^{(V)} \triangleq \varnothing$. Moreover, notice that  $\Pi_{i}^{(V)} = \varnothing$ because $\mathcal{I}^{(n)} \cap \mathcal{G}^{(n)}_2 = \emptyset$.

\subsubsection{Case D}
In this case, recall that $| \mathcal{G}^{(n)}_1 | < | \mathcal{C}^{(n)}_2 |$, $| \mathcal{G}^{(n)}_2 |  \leq | \mathcal{C}^{(n)}_1 |$ and $| \mathcal{G}^{(n)}_0 |  > | \mathcal{C}^{(n)}_{1,2} |$. As in Case A and Case C, since $| \mathcal{G}^{(n)}_0 |  > | \mathcal{C}^{(n)}_{1,2} |$ we define $\mathcal{R}^{(n)}_{1,2}$ as in \eqref{eq:ASetR12} and $\mathcal{R}^{\prime (n)}_1 = \mathcal{R}^{\prime (n)}_2 \triangleq \emptyset$. Thus, for $i \in [1,L]$, we set $\Gamma_{1,i}^{(V)} \triangleq \Gamma_{i}^{(V)}$, $\bar{\Gamma}_{1,i}^{(V)} \triangleq \bar{\Gamma}_{i}^{(V)}$ and $\bar{\Gamma}_{2,i}^{(V)} = \Gamma_{2,i}^{(V)} \triangleq \varnothing$. On the other hand, since $| \mathcal{G}^{(n)}_1 | < | \mathcal{C}^{(n)}_2 |$, now for $i \in [2,L]$ only a part of $\bar{\Psi}_{i-1}^{(V)}$ can be repeated entirely in $\tilde{A}_i[\mathcal{G}_1^{(n)}]$, and we define $\mathcal{R}^{(n)}_{2} \triangleq \mathcal{G}_{1}^{(n)}$ and
\begin{IEEEeqnarray}{rCl}
\mathcal{R}_{1,2}^{\prime (n)} & \triangleq & \text{any subset of } \mathcal{G}_0^{(n)} \setminus \mathcal{R}_{1,2}^{(n)} \nonumber \\
& & \qquad \qquad \, \, \text{ with size } \big| \mathcal{C}_{2}^{(n)} \big| - \big| \mathcal{G}_{1}^{(n)} \big|. \IEEEeqnarraynumspace \label{eq:DSetR12p}
\end{IEEEeqnarray}
By \eqref{eq:assumpRate1Impl2}, it is clear that $\mathcal{R}_{1,2}^{\prime (n)}$ exists. Also, for $i\in [1,L]$, we define $\Psi_{1,i}^{(V)}$ as any part $\Psi_{i}^{(V)}$ with size $| \mathcal{G}_{1}^{(n)} |$, and $\Psi_{2,i}^{(V)}$ as the remaining part with size $|\mathcal{C}_{2}^{(n)} \big| - \big| \mathcal{G}_{1}^{(n)}|$. 

Despite $\big| \mathcal{G}^{(n)}_2 \big|  < \big| \mathcal{C}^{(n)}_1 \big|$ as in Case C, the set $\mathcal{R}_1^{(n)}$ is not defined as in \eqref{eq:CSetR1}, but
\begin{IEEEeqnarray}{rCl}
\mathcal{R}^{(n)}_{1} & \triangleq & \text{the union of } \mathcal{G}^{(n)}_{2} \text{ with any subset }  \nonumber \\
& & \quad  \text{of } \mathcal{G}^{(n)}_{0} \setminus  \big( \mathcal{R}_{1,2}^{(n)} \cup \mathcal{R}_{1,2}^{\prime (n)}  \big) \text{ with size } \nonumber \\
& & \qquad  \big| \mathcal{C}^{(n)}_{1} \big| - \big| \mathcal{G}^{(n)}_{2} \big| -  \big( \big| \mathcal{C}^{(n)}_{2} \big|  - \big| \mathcal{G}^{(n)}_{1} \big|\big). \IEEEeqnarraynumspace \label{eq:DSetR1}
\end{IEEEeqnarray}
By the assumption in \eqref{eq:assumpRate1Impl2}, the set $\mathcal{R}_{1}^{(n)}$ exists because 
\begin{IEEEeqnarray}{rCl}
\IEEEeqnarraymulticol{3}{l}{%
\big|\mathcal{G}^{(n)}_{0} \setminus \big( \mathcal{R}_{1,2}^{(n)} \cup \mathcal{R}_{1,2}^{\prime (n)}  \big) \big| - \big| \mathcal{R}^{(n)}_{1} \big|
} \nonumber \\*
& = & \big| \mathcal{G}^{(n)}_{0} \big| - \big| \mathcal{C}^{(n)}_{1,2} \big| - \big| \mathcal{C}^{(n)}_{2} \big| + \big| \mathcal{G}^{(n)}_{1} \big| \nonumber \\
&  &  - \Big( \big| \mathcal{C}^{(n)}_{1} \big| - \big| \mathcal{G}^{(n)}_{2} \big| -  \big| \mathcal{C}^{(n)}_{2} \big|  + \big| \mathcal{G}^{(n)}_{1} \big|  \Big) \nonumber \\
& = & \big| \mathcal{G}^{(n)}_{0} \big| - \big| \mathcal{C}^{(n)}_{1,2} \big|  -  \big| \mathcal{C}^{(n)}_{1} \big| + \big| \mathcal{G}^{(n)}_{2} \big|   \nonumber \\
& \geq &  0. \nonumber
\end{IEEEeqnarray}
Also, for $i\in [1,L]$, we define $\bar{\Theta}_{1,i}^{(V)}$ as any part $\bar{\Theta}_{i}^{(V)}$ with size $| \mathcal{C}^{(n)}_{1} | -  ( | \mathcal{C}^{(n)}_{2} |  - | \mathcal{G}^{(n)}_{1} | )$, and $\bar{\Theta}_{2,i}^{(V)}$ as the remaining part with size $|\mathcal{C}_{2}^{(n)} \big| - \big| \mathcal{G}_{1}^{(n)}|$. 

Thus, according to Algorithm~\ref{alg:formA},  instead of repeating $\Psi_{2,i-1}^{(V)}$, that is, the part of $\Psi_{i-1}^{(V)}$ that does not fit in $\tilde{A}_{i}^n[\mathcal{G}_{1}^{(n)}]$, in a specific part of $\tilde{A}_{i}^n[\mathcal{G}_{0}^{(n)}]$, the encoder stores $\Psi_{2,i-1}^{(V)} \oplus \bar{\Theta}_{2,i+1}^{(V)}$ into $\tilde{A}_{i}[\mathcal{R}_{1,2}^{\prime (n)}] \subseteq \tilde{A}_{i}[\mathcal{G}_{0}^{(n)}]$, where $\bar{\Theta}_{2,i+1}^{(V)}$ denotes part of those elements of $\bar{\Theta}_{i+1}^{(V)}$ that do not fit in $\tilde{A}_{i}^n[\mathcal{G}_{2}^{(n)}]$. Furthermore, as in Case C, since $\mathcal{I}^{(n)} \cap \mathcal{G}^{(n)}_2 = \emptyset$, we have $\Pi_{i}^{(V)} = \varnothing$.


%
%
%
%


The sets that form the partition of $\mathcal{G}^{(n)}$ in Case D can be seen in Figure~\ref{fig:EncCasD}, which also displays the encoding process that aims to construct of $\tilde{A}_{1:L}[\mathcal{C}^{(n)} \cup \mathcal{G}^{(n)}]$.

\begin{figure}[t]
\centering
\begin{overpic}[width=0.8\linewidth]{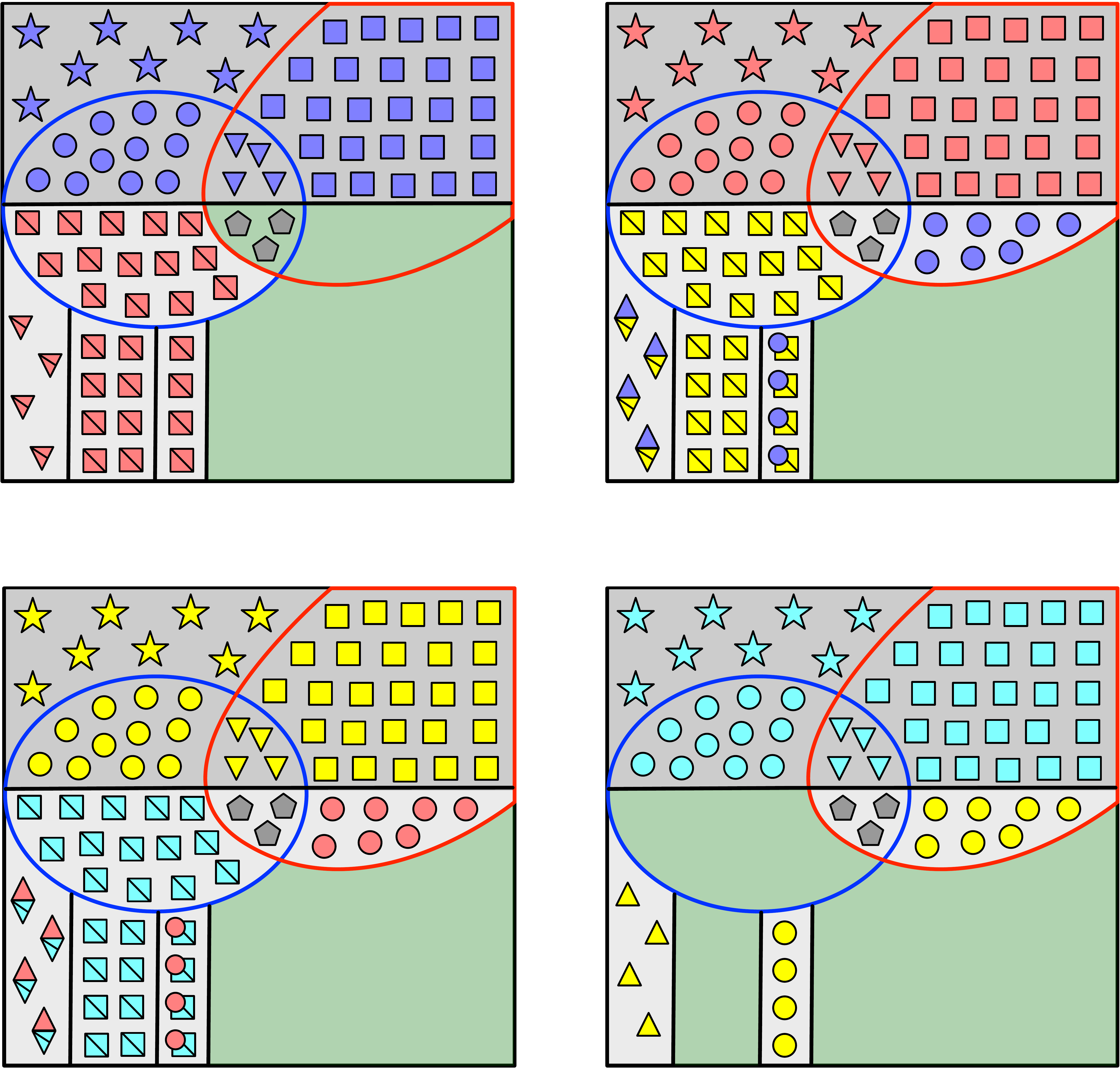}
\put (16,47) {\small Block 1}
\put (69,47) {\small Block 2}
\put (16,-5) {\small Block 3}
\put (69,-5) {\small Block 4}
\end{overpic}
\vspace{0.7cm}
\caption{For Case D, graphically representation of the encoding that leads to the construction of $\tilde{A}_{1:L}\big[ \mathcal{H}_V^{(n)} \big]$ when $L=4$. Consider the block 2, $\mathcal{R}^{(n)}_{1}$, $\mathcal{R}^{(n)}_{2}$, $\mathcal{R}^{(n)}_{1,2}$, $\mathcal{R}^{\prime (n)}_{1,2}$ and $\mathcal{R}^{(n)}_{\Lambda}$ are those areas filled with yellow squares, blue circles, blue and yellow diamonds, yellow squares overlapped by blue circles, and gray pentagons, respectively; the set $\mathcal{I}^{(n)}$ is the green filled area. At block $i \in [1,L]$, $W_i$ is represented by symbols of the same color (e.g., red symbols at block 2), and $\Theta_i^{(V)}$, $\Psi_i^{(V)}$ and $\Gamma_i^{(V)}$ are represented by squares, circles and triangles respectively. Also, $\bar{\Theta}_i^{(V)}$ and $\bar{\Gamma}_i^{(V)}$ are denoted by squares and triangles, respectively, with a line through them. At block $i \in [2,3]$, $\Gamma_{i-1,1}^{(V)} \oplus \bar{\Gamma}_{1,i+1}^{(V)}$ are represented by diamonds, and the squares overlapped by circles denote $\Psi_{2,i-1}^{(V)} \oplus \bar{\Theta}_{2,i+1}^{(V)}$. At block $i \in [1,L]$, $S_i$ is stored into those entries that belong to the green area. $\Lambda_1^{(V)}$ is denoted by gray pentagons and is repeated in all blocks. The sequences $\Upsilon_{(1)}^{(V)}$ and $\Upsilon_{(2)}^{(V)}$ are the entries inside the red and blue curve at block 1 and $L$, respectively.}\label{fig:EncCasD} 
\end{figure}

\subsection{Channel prefixing}\label{sec:PCSCP}
For $i \in [1,L]$, let $R_i$ be a uniformly distributed vector of length $| \mathcal{H}^{(n)}_{X|V} \setminus \mathcal{H}^{(n)}_{X|VZ} |$ that represents the randomization sequence (local randomness). Moreover, let $\Lambda^{(X)}_0$ be a uniformly distributed random sequence of size $| \mathcal{H}^{(n)}_{X|VZ} |$. The channel prefixing aims to construct $\tilde{X}_i^n = \tilde{T}_i^n G_n$ and is summarized in Algorithm~\ref{alg:polarchannelpref}. 

\begin{algorithm}[h!]
\caption{Function \texttt{pb\_ch\_pref}}\label{alg:polarchannelpref}
\begin{small}
\begin{algorithmic}[1]
\Require $\tilde{V}_i^n$, $R_i$, $\Lambda^{(X)}_{i-1}$
\State $\tilde{T}_i[\mathcal{H}^{(n)}_{X|VZ}] \leftarrow \Lambda^{(X)}_{i-1}$
\State $\tilde{T}_i [\mathcal{H}^{(n)}_{X|V} \cap (\mathcal{H}^{(n)}_{X|VZ})^{\text{C}}] \leftarrow R_{i}$
\For{$j \in \big( \mathcal{H}^{(n)}_{X|V} \big)^{\text{C}}$}
\If{$j \in \big( \mathcal{H}^{(n)}_{X|V} \big)^{\text{C}}  \setminus \mathcal{L}^{(n)}_{X|V}$}
\State $\tilde{T}(j) \leftarrow \! p_{T(j)|T^{1:j-1}V^n} \! \big( \tilde{t}_i(j) \big| \tilde{t}_i^{1:j-1}, \tilde{v}_i^n  \big)$
\ElsIf{$j \in \mathcal{L}^{(n)}_{X|V}$}
\State $\tilde{T}(j) \leftarrow \xi^{(X)}_{(j)} \big( \tilde{t}_i^{1:j-1}, \tilde{v}_i^n \big)$ 
\EndIf
\EndFor 
\State $\tilde{X}_{i}^n \leftarrow \tilde{T}_i^n G_n$
\State $\Lambda^{(X)}_{i} \leftarrow \tilde{T}_i [\mathcal{H}^{(n)}_{X|V} \setminus \mathcal{H}^{(n)}_{X|VZ}]$
\State \textbf{return} $\tilde{X}_{i}^n$ and $\Lambda^{(X)}_{i}$
\end{algorithmic}
\end{small}
\end{algorithm}

Notice that the sequence $\Lambda^{(X)}_0$ is copied in $\tilde{T}_i[\mathcal{H}^{(n)}_{X|VZ}]$ at any block $i \in [1,L]$, while $R_i$ is stored into $\tilde{T}_i[\mathcal{H}^{(n)}_{X|V} \cap (\mathcal{H}^{(n)}_{X|VZ})^{\text{C}}]$. After obtaining $\tilde{T}_i[\mathcal{H}^{(n)}_{X|V}]$, and given the sequence $\tilde{V}_i^n \triangleq \tilde{A}_i^n G_n$, the encoder forms the remaining entries of $\tilde{T}_i^n$, that is, $\tilde{T}_i [(\mathcal{H}_{X|V}^{(n)})^{\text{C}}]$ as follows. If $j \in (\mathcal{H}_{X|V}^{(n)})^{\text{C}} \setminus \mathcal{L}_{X|V}^{(n)}$, the encoder randomly draws $\tilde{T}_{i}(j)$ from $p_{T(j)|T^{1:j-1}|V^n}$ corresponding to the joint distribution of the original DMS --see \eqref{eq:distDMS}--. Otherwise, if $j \in \mathcal{L}_{X|V}^{(n)}$, it constructs $\tilde{T}_{i}(j)$ deterministically by using SC encoding as in \cite{7447169}. Thus, we define the SC encoding function $\xi^{(X)}_{(j)}: \{0,1\}^{(j-1)\cdot n} \rightarrow \{ 0,1 \}$ as 
\begin{IEEEeqnarray}{rCl}
\IEEEeqnarraymulticol{3}{l}{%
\xi_{(j)}^{(X)} \big( {t}^{1:j-1} , {v}^n \big)
} \nonumber \\* 
& \triangleq & \argmax_{t \in \mathcal{X}} p_{T(j)|T^{1:j-1}V^n} \left( {t} \left| {t}^{1:j-1}, v^n \right. \right), \label{eq:scdeterministicX}
\end{IEEEeqnarray}
Hence, besides the required local randomness, only $\Lambda^{(X)}_0$ and $\tilde{T}_{1:L}[ ( \mathcal{H}^{(n)}_V )^{\text{C}}\setminus \mathcal{L}^{(n)}_{X|V} ]$ are constructed randomly.

\subsection{Decoding}\label{sec:decoding}
Consider that $(\Upsilon_{(k)}^{(V)}, \Phi_{(k),1:L}^{(V)})$, $k \in [1,L]$, is available to the $k$-th legitimate receiver. In the decoding process, both legitimate receivers form the estimates $\hat{A}^n_{1:L}$ of $\tilde{A}^n_{1:L}$ and then output the messages $(\hat{W}_{1:L},\hat{S}_{1:L})$.

\subsubsection{Legitimate receiver 1}\label{sec:PCSD1}
This receiver forms the estimates $\hat{A}^n_{1:L}$ by going forward, i.e., from $\hat{A}^n_{1}$ to $\hat{A}^n_{L}$, and this process is summarized in Algorithm~\ref{alg:decoding1}.

\begin{algorithm}
\caption{Decoding at legitimate receiver 1}\label{alg:decoding1}
\begin{small}
\begin{algorithmic}[1]
\Require $\Upsilon_{(1)}^{(V)}$, $\Phi_{(1),1:L}^{(V)}$, $\kappa_{\Theta}^{(V)}$ and  $\kappa_{\Gamma}^{(V)}$, and $\tilde{Y}_{(1),1:L}^n$.
\State $\hat{\Lambda}_{1:L-1}^{(V)} \leftarrow \Upsilon_{(1)}^{(V)}$  
\State $\hat{A}_1^n \leftarrow \big( \Upsilon_{(1)}^{(V)}, \Phi_{(1),1}^{(V)}, \tilde{Y}_{(1),1}^n \big)$
\For{$i = 1$ \text{to} $L-1$} 
\State $\hat{\Psi}_{i}^{(V)} \leftarrow \hat{A}_i [\mathcal{C}_{2}^{(n)}]$
\State $\hat{\Gamma}_{i}^{(V)} \leftarrow \hat{A}_i[\mathcal{C}_{1,2}^{(n)}]$
\State $\hat{\bar{\Theta}}_{i+1}^{(V)} \leftarrow \big( \hat{A}_i[\mathcal{R}_{1}^{(n)}], \hat{A}_i[\mathcal{R}_{1,2}^{\prime (n)}] \oplus \hat{\Psi}_{2,i-1}^{(V)} \big)$
\State $\hat{\Theta}_{i+1}^{(V)} \leftarrow \hat{\bar{\Theta}}_{i+1}^{(V)} \oplus \kappa_{\Theta}^{(V)}$
\State $\hat{\bar{\Gamma}}_{i+1}^{(V)} \leftarrow \big( \hat{A}_i[\mathcal{R}_{1,2}^{(n)}]  \oplus \hat{\Gamma}_{1,i-1}^{(V)}, \hat{A}_i[\mathcal{R}_{1}^{\prime (n)}] \big)$
\State $\hat{\Gamma}_{i+1}^{(V)} \leftarrow \hat{\bar{\Gamma}}_{i+1}^{(V)} \oplus \kappa_{\Gamma}^{(V)}$
\State $\hat{\Pi}_{i}^{(V)} \leftarrow \hat{A}_{i}[ \mathcal{I}^{(n)} \cap \mathcal{G}_{2}^{(n)}]$
\State $\hat{\Upsilon}_{(1),i+1}^{\prime (V)} \leftarrow  \big(\hat{\Psi}_{1,i}^{(V)} , \hat{\Gamma}_{2,i}^{(V)} ,\hat{\Theta}_{i+1}^{(V)} ,\hat{\Gamma}_{i+1}^{(V)} , \hat{\Pi}_{i}^{(V)} , \hat{\Lambda}_{i}^{(V)} \big)$
\State $\hat{A}_{i+1}^n \leftarrow \big( \hat{\Upsilon}_{(1),i+1}^{\prime (V)}, \Phi_{(1),i+1}^{(V)}, \tilde{Y}_{(1),i+1}^n \big)$
\EndFor
\end{algorithmic}
\end{small}
\end{algorithm}

In all cases (among Case A to Case D), receiver~1 constructs $\hat{A}^n_{1}$ as follows. Given $\smash{\Upsilon_{(1)}^{(V)}}$ (all the elements inside the red curve at block 1 in Figures~\ref{fig:EncCasA}--\ref{fig:EncCasD}) and $\Phi_{(1),1}^{(V)}$, notice that receiver~1 knows $\smash{\tilde{A}_{1}[ (\mathcal{L}_{V|Y_{(1)}}^{(n)} )^{\text{C}}]}$. Therefore, from $(\Upsilon_{(1)}^{(V)}$, $\Phi_{(1),1}^{(V)})$ and channel observations $\tilde{Y}_{(1),1}^n$, receiver~1 performs SC decoding for source coding with side information \cite{arikan2010source} to form $\hat{A}^n_{1}$. Moreover, since $\Lambda_1^{(V)} \subseteq \Upsilon_{(1)}^{(V)}$ has been replicated in all blocks, legitimate receiver~1 gets $\smash{\hat{\Lambda}_{1:L-1}^{(V)}}$ (gray pentagons at all blocks).

For $i \in [1,L-1]$, consider the construction of $\hat{A}_{i+1}^n$. First, from $\hat{A}_{i}^n$ that has already been estimated, receiver~1 gets $\hat{\Psi}_{i}^{(V)} = \hat{A}_i[\mathcal{C}_2^{(n)}]$ (e.g., blue or red circles at block 1 or 2 respectively in Figures~\ref{fig:EncCasA}--\ref{fig:EncCasD}) and $\hat{\Gamma}_{i}^{(V)}= \hat{A}_i[\mathcal{C}_{1,2}^{(n)}]$ (blue or red triangles at block 1 or 2 respectively). 

Also, from $\hat{A}_i^n$, receiver~1 obtains $\hat{\Theta}_{i+1}^{(V)}$ as follows. At block~1, in all cases it gets $\smash{\hat{\bar{\Theta}}_{2}^{(V)}} = \tilde{A}_1[\mathcal{R}_1^{(n)} \cup \mathcal{R}_{1,2}^{\prime (n)}]$ (all the red squares with a line through them at block 1 in Figures~\ref{fig:EncCasA}--\ref{fig:EncCasD}). At block $i \in [2,L-1]$, we distinguish two situations. In Case D, receiver~1 gets $\smash{\hat{\bar{\Theta}}_{1,i+1}^{(V)} = \hat{A}_i[\mathcal{R}^{(V)}_1]}$ (e.g., yellow squares with a line through them at block~2 in Figure~\ref{fig:EncCasD}) and $\smash{\hat{\Psi}_{2,i-1}^{(V)} \oplus \hat{\bar{\Theta}}_{2,i+1}^{(V)}}$ (yellow squares with a line through them overlapped by blue circles). Since $\hat{\Psi}_{2,i-1}^{(V)} \subset \hat{A}_{i-1}^n$ (blue circles) has already been estimated, receiver~1 obtains $\smash{\hat{\bar{\Theta}}_{2,i+1}^{(V)}}$ (yellow squares with a line through them). Otherwise, in other cases, receiver~1 obtains $\hat{\bar{\Theta}}_{i+1}^{(V)} = \hat{A}_i[\mathcal{R}_{1}^{(n)}]$ directly (yellow squares with a line through them at block 2 in Figures~{\ref{fig:EncCasA}--\ref{fig:EncCasC}}). 
Then, given $\hat{\bar{\Theta}}_{i+1}^{(V)} = [\hat{\bar{\Theta}}_{1,i+1}^{(V)}, \hat{\bar{\Theta}}_{2,i+1}^{(V)}]$,
in all cases receiver~1 recovers the sequence $\hat{\Theta}_{i+1}^{(V)}=\hat{\bar{\Theta}}_{i+1}^{(V)} \oplus \kappa_{\Theta}^{(V)}$.

From $\hat{A}_i^n$, receiver~1 also obtains $\hat{\Gamma}_{i+1}^{(V)}$ as follows.
At block~$1$, in all cases it gets $\hat{\bar{\Gamma}}_{2}^{(V)} = \tilde{A}_1 [\mathcal{R}^{(n)}_{1,2} \cup \mathcal{R}^{\prime (n)}_{1}]$ directly (e.g., all red triangles with a line through them at block~1 in Figures~\ref{fig:EncCasA}--\ref{fig:EncCasD}).
At block $i \in [2,L-1]$, in all cases receiver~1 obtains $\hat{\Gamma}_{1,i-1}^{(V)} \oplus \hat{\bar{\Gamma}}_{1,i+1}^{(V)} = \tilde{A}_i[\mathcal{R}^{(n)}_{1,2}]$ (e.g., blue and yellow diamonds with a line through them at block~2). Since $\hat{\Gamma}_{1,i-1}^{(V)} \subset \tilde{A}_{i-1}^n$ (blue triangles) has already been estimated, receiver~1 obtains $\hat{\bar{\Gamma}}_{1,i+1}^{(V)} = \tilde{A}_i[\mathcal{R}^{(n)}_{1,2}] \oplus \hat{\Gamma}_{1,i-1}^{(V)}$ (yellow triangles with a line through them). Also, only in Case B, receiver~1 obtains $\hat{\bar{\Gamma}}_{2,i+1}^{(V)} = \tilde{A}_{i}[\mathcal{R}^{\prime (n)}_{1}]$ (remaining yellow triangles with a line through them at block~3 in Figure~\ref{fig:EncCasB}).
Then, given $\hat{\bar{\Gamma}}_{i+1}^{(V)} = [\hat{\bar{\Gamma}}_{1,i+1}^{(V)}, \hat{\bar{\Gamma}}_{2,i+1}^{(V)}]$, 
in all cases receiver~1 recovers the sequence $\hat{\Gamma}_{i+1}^{(V)}=\hat{\bar{\Gamma}}_{i+1}^{(V)} \oplus \kappa_{\Gamma}^{(V)}$.

Only in cases A and B, it gets $\hat{\Pi}_i^{(V)}=\hat{A}_i[\mathcal{I}^{(n)} \cap \mathcal{G}_2^{(n)}]$ (e.g. purple crosses at block 2 in Figures~\ref{fig:EncCasA}--\ref{fig:EncCasB}). Let 
\begin{IEEEeqnarray}{rCl}
\Upsilon^{\prime (V)}_{(1),i+1} & \triangleq & \big[ \hat{\Psi}_{1,i}^{(V)}, \hat{\Gamma}_{2,i}^{(V)}, \hat{\Theta}_{i+1}^{(V)}, \hat{\Gamma}_{i+1}^{(V)}, \hat{\Pi}_{i}^{(V)} , \hat{\Lambda}_{i}^{(V)} \big]. \IEEEeqnarraynumspace 
\end{IEEEeqnarray}
Notice that $\Upsilon^{\prime (V)}_{(1),i+1} = \tilde{A}_{i+1}[\mathcal{H}_V^{(n)} \cap (\mathcal{L}_{V|Y_{(1)}}^{(n)})^{\text{C}}]$ (all the elements inside red curve at block~$i+1$ in Figures~\ref{fig:EncCasA}--\ref{fig:EncCasD}). Thus, receiver~1 performs SC decoding to form $\hat{A}^n_{i+1}$ by using $\Upsilon^{\prime (V)}_{(1),i+1}$, $\Phi_{(1),i+1}^{(V)}$ and the observations $\tilde{Y}_{(1),i+1}^n$.

\subsubsection{Legitimate receiver 2}\label{sec:PCSD2}
This receiver forms the estimates $\hat{A}^n_{1:L}$ by going backward, i.e., from $\hat{A}^n_{L}$ to $\hat{A}^n_{1}$, and this process is summarized in Algorithm~\ref{alg:decoding2}.

\begin{algorithm}[b]
\caption{Decoding at legitimate receiver 2}\label{alg:decoding2}
\begin{small}
\begin{algorithmic}[1]
\Require $\Upsilon_{(2)}^{(V)}$, $\Phi_{(2),1:L}^{(V)}$, $\kappa_{\Theta}^{(V)}$ and  $\kappa_{\Gamma}^{(V)}$, and $\tilde{Y}_{(2),1:L}^n$.
\State $\hat{A}_L^n \leftarrow \big( \Upsilon_{(2)}^{(V)}, \Phi_{(2),L}^{(V)}, \tilde{Y}_{(2),L}^n \big)$ 
\State $\hat{\Lambda}_{1:L-1}^{(V)} \leftarrow \hat{A}_L^n$
\For{$i = L$ \text{to} $2$} 
\State $\hat{\bar{\Theta}}_{i}^{(V)} \leftarrow \hat{A}_i [\mathcal{C}_{1}^{(n)}] \oplus \kappa_{\Theta}^{(V)}$
\State $\hat{\bar{\Gamma}}_{i}^{(V)} \leftarrow \hat{A}_i [\mathcal{C}_{1,2}^{(n)}] \oplus \kappa_{\Gamma}^{(V)}$
\State $\hat{\Psi}_{i-1}^{(V)} \leftarrow \big( \hat{A}_i[\mathcal{R}_{2}^{(n)}], \hat{A}_i[\mathcal{R}_{1,2}^{\prime (n)}] \oplus \hat{\bar{\Theta}}_{2,i+1}^{(V)} \big)$
\State $\hat{\Gamma}_{i-1}^{(V)} \leftarrow \big( \hat{A}_i[\mathcal{R}_{1,2}^{(n)}] \oplus \hat{\bar{\Gamma}}_{1,i+1}^{(V)}, \hat{A}_i[\mathcal{R}_{2}^{\prime (n)}] \big)$
\State $\hat{\Pi}_{i-1}^{(V)} \leftarrow \hat{A}_{i}[ \mathcal{R}_{\text{S}}^{(n)}]$
\State $\Upsilon^{\prime(V)}_{(2),i-1} \leftarrow \big( \hat{\bar{\Theta}}_{1,i}^{(V)}, \hat{\bar{\Gamma}}_{2,i}^{(V)}, \hat{\Psi}_{i-1}^{(V)}, \hat{\Gamma}_{i-1}^{(V)}, \hat{\Pi}_{i-1}^{(V)} , \hat{\Lambda}_{i-1}^{(V)} \big)$
\State $\hat{A}_{i-1}^n \leftarrow \big( \Upsilon^{\prime(V)}_{(2),i-1} , \Phi_{(2),i-1}^{(V)}, \tilde{Y}_{(2),i-1}^n \big)$
\EndFor
\end{algorithmic}
\end{small}
\end{algorithm}

In all cases (among Case A to Case D), receiver~2 constructs $\hat{A}^n_{L}$ as follows. Given $\smash{\Upsilon_{(2)}^{(V)}}$ (all the elements inside blue curve at block 4 in Figures~\ref{fig:EncCasA}--\ref{fig:EncCasD}) and $\Phi_{(2),L}^{(V)}$, notice that receiver~2 knows $\smash{\tilde{A}_{L}[ (\mathcal{L}_{V|Y_{(2)}}^{(n)} )^{\text{C}}]}$. Hence, from $(\Upsilon_{(2)}^{(V)}$, $\Phi_{(2),L}^{(V)})$ and channel observations $\tilde{Y}_{(2),L}^n$, receiver~2 performs SC decoding for source coding with side information to form $\hat{A}^n_{L}$. Since $\Lambda_1^{(V)}$ has been replicated in all blocks, from $\tilde{A}_{L}^n$ receiver~2 obtains $\smash{\hat{\Lambda}_{1:L-1}^{(V)}} =\tilde{A}_{L}[\mathcal{R}_{\Lambda}^{(n)}]$ (gray pentagons at all blocks).

For $i \in [2,L]$, consider the construction of $\hat{A}_{i-1}^n$. First, from $\hat{A}_{i}^n$ that has already been estimated, legitimate receiver~2 obtains the sequence $\hat{\Theta}_{i}^{(V)}=\tilde{A}_i[\mathcal{C}_1^{(n)}]$ (e.g., cyan or yellow squares at block 4 or 3  respectively in Figures~\ref{fig:EncCasA}--\ref{fig:EncCasD}). Given $\hat{\Theta}_{i}^{(V)}$, 
the encoder computes $\hat{\bar{\Theta}}_{i}^{(V)} = \hat{\Theta}_{i}^{(V)} \oplus \kappa_{\Theta}^{(V)}$ (corresponding previous squares with a line through them). Also, receiver~2 obtains $\hat{\Gamma}_{i}^{(V)}= \tilde{A}_i[\mathcal{C}_{1,2}^{(n)}]$ (cyan or yellow triangles at block 4 or 3 respectively in Figures~\ref{fig:EncCasA}--\ref{fig:EncCasD}). Given this sequence, 
receiver~2 computes $\hat{\bar{\Gamma}}_{i}^{(V)} = \hat{\Gamma}_{i}^{(V)} \oplus \kappa_{\Gamma}^{(V)}$ (corresponding previous triangles with a line through them).

Also, from $\hat{A}_i^n$, receiver~2 obtains $\hat{\Theta}_{i+1}^{(V)}$ as follows. At block $L$, in all cases it gets $\hat{\Psi}_{L-1}^{(V)}= \tilde{A}_i [\mathcal{R}^{(n)}_{2} \cup \mathcal{R}^{\prime (n)}_{1,2}]$ directly (all yellow circles at block $L$ in Figures~\ref{fig:EncCasA}--\ref{fig:EncCasD}). At block $i \in [2,L-1]$, we distinguish two situations. In Case D, it obtains $\hat{\Psi}_{1,i-1}^{(V)} = \tilde{A}_{i}[\mathcal{R}^{(n)}_{2}]$ (e.g., red circles at block~3 in Figure~\ref{fig:EncCasD}) and $\hat{\Psi}_{2,i-1}^{(V)} \oplus \hat{\bar{\Theta}}_{2,i+1}^{(V)} = \tilde{A}_i[\mathcal{R}^{\prime (n)}_{1,2}]$ (cyan squares with a line through them overlapped by red circles). Since $\smash{\hat{\bar{\Theta}}_{2,i+1}^{(V)}}$ (cyan squares with a line through them) has already been estimated, receiver~2 obtains $\hat{\Psi}_{2,i-1}^{(V)} = \tilde{A}_i[\mathcal{R}^{\prime (n)}_{1,2}] \oplus \hat{\bar{\Theta}}_{2,i+1}^{(V)}$ (red circles). Otherwise, in other cases, receiver~2 obtains directly $\hat{\Psi}_{i-1}^{(V)} = \tilde{A}_{i}[\mathcal{R}_{2}^{(n)}]$ (e.g., red circles at block 3 in Figures~{\ref{fig:EncCasA}--\ref{fig:EncCasC}}).

From $\hat{A}_i^n$, receiver~2 also obtains $\hat{\Gamma}_{i-1}^{(V)}$ as follows.
At block~$L$, in all cases it gets $\hat{\bar{\Gamma}}_{L-1}^{(V)} = \tilde{A}_L [\mathcal{R}^{(n)}_{1,2} \cup \mathcal{R}^{\prime (n)}_{2}]$ (e.g., all yellow triangles at block~$L$ in Figures~\ref{fig:EncCasA}--\ref{fig:EncCasD}).
At block $i \in [2,L-1]$, in all cases receiver~2 obtains $\hat{\Gamma}_{1,i-1}^{(V)} \oplus \hat{\bar{\Gamma}}_{1,i+1}^{(V)} = \tilde{A}_i[\mathcal{R}^{(n)}_{1,2}]$ (e.g., red and cyan diamonds with a line through them at block~3). Since $\hat{\bar{\Gamma}}_{1,i+1}^{(V)}$ (cyan triangles with a line through them) has already been estimated, receiver~2 obtains $\hat{\Gamma}_{1,i-1}^{(V)} = \tilde{A}_i[\mathcal{R}^{(n)}_{1,2}] \oplus \hat{\bar{\Gamma}}_{1,i+1}^{(V)}$ (red triangles). Also, only in Case B, receiver~2 obtains the sequence $\hat{\Gamma}_{2,i-1}^{(V)} = \tilde{A}_{i}[\mathcal{R}^{\prime (n)}_{2}]$ (remaining red triangles at block~3 in Figure~\ref{fig:EncCasB}).

Finally, only in Case A and Case B, receiver~2 obtains the sequence $\smash{\hat{\Pi}_{i-1}^{(V)} = \tilde{A}_i[\mathcal{R}_{\text{S}}]}$ (e.g., purple crosses at block 3 in Figures~\ref{fig:EncCasA}--\ref{fig:EncCasB}). Let 
\begin{IEEEeqnarray}{rCl}
\Upsilon^{\prime(V)}_{(2),i-1} & \triangleq & \big[ \hat{\bar{\Theta}}_{1,i}^{(V)}, \hat{\bar{\Gamma}}_{2,i}^{(V)}, \hat{\Psi}_{i-1}^{(V)}, \hat{\Gamma}_{i-1}^{(V)}, \hat{\Pi}_{i-1}^{(V)} , \hat{\Lambda}_{i-1}^{(V)} \big]. \IEEEeqnarraynumspace 
\end{IEEEeqnarray}
Notice that\footnote{We have $\smash{\Upsilon^{\prime (V)}_{(2),i-1} \supseteq \tilde{A}_{i-1}[\mathcal{H}_V^{(n)} \cap (\mathcal{L}_{V|Y_{(2)}}^{(n)})^{\text{C}}]}$ because part of $\hat{\bar{\Theta}}_{1,i}^{(V)}$ in cases C and D could be copied in some entries of $\tilde{A}_{i-1}[\mathcal{G}_0^{(n)}]$.} $\Upsilon^{\prime (V)}_{(2),i-1} \supseteq \tilde{A}_{i-1}[\mathcal{H}_V^{(n)} \cap (\mathcal{L}_{V|Y_{(2)}}^{(n)})^{\text{C}}]$ (all the elements inside blue curve at block~$i-1$ in Figures~\ref{fig:EncCasA}--\ref{fig:EncCasD}). Thus, receiver~2 performs SC decoding to form $\hat{A}^n_{i-1}$ by using $\smash{\Upsilon^{\prime (V)}_{(2),i-1}}$, $\smash{\Phi_{(2),i-1}^{(V)}}$ and $\smash{\tilde{Y}_{(2),i-1}^n}$.


\section{Performance of the Polar Coding Scheme}\label{sec:performance}
The analysis of the polar coding scheme of Section~\ref{sec:PCS} leads to the following theorem.

\begin{theorem}\label{th:th1}
Let $(\mathcal{X}, p_{Y_{(1)}Y_{(2)} Z|X}, \mathcal{Y}_{(1)} \times \mathcal{Y}_{(2)} \times \mathcal{Z})$  be an arbitrary WBC such that $\mathcal{X} \in \{0,1\}$. The polar coding scheme described in Section~\ref{sec:PCS} achieves the corner point in \eqref{eq:targetrate} of the region $\mathfrak{R}_{\text{\emph{CI-WBC}}}$ defined in Proposition~\ref{prop:MIB}.
\end{theorem}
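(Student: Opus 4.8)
The plan is to follow the standard four‑part template for polar‑coding achievability proofs — rate computation, distribution approximation, reliability, strong secrecy — and to isolate the chaining‑induced bidirectional dependencies as the genuinely new difficulty. First I would verify the \emph{transmission rates} by counting cardinalities of the index sets of Section~\ref{sec:PCS}: the private message occupies $L|\mathcal{C}^{(n)}|$ bits, the confidential message occupies $(L-2)|\mathcal{I}^{(n)}|+|\mathcal{I}^{(n)}\cup\mathcal{G}_1^{(n)}\cup\mathcal{G}_{1,2}^{(n)}|+|\mathcal{I}^{(n)}\cup\mathcal{G}_2^{(n)}|$ bits, and the local randomness occupies $L|\mathcal{H}_{X|V}^{(n)}\setminus\mathcal{H}_{X|VZ}^{(n)}|$ bits, while the privately shared sequences $\kappa_\Theta^{(V)},\kappa_\Gamma^{(V)},\kappa_{\Upsilon\Phi_{(k)}}^{(V)},\Lambda_0^{(V)},\Lambda_0^{(X)}$ contribute either $O(n)$ bits or $O\big(n+L|(\mathcal{H}_V^{(n)})^{\text{C}}\cap(\mathcal{L}_{V|Y_{(k)}}^{(n)})^{\text{C}}|\big)$ bits. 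Dividing by $nL$ and using $\tfrac1n|\mathcal{H}^{(n)}_{X|Y}|\to H(X|Y)$, $\tfrac1n|\mathcal{L}^{(n)}_{X|Y}|\to 1-H(X|Y)$ and $\tfrac1n|(\mathcal{H}^{(n)})^{\text{C}}\setminus\mathcal{L}^{(n)}|\to0$ from \cite[Theorem~1]{arikan2010source}, then letting $n\to\infty$ and afterwards $L\to\infty$, every key overhead and every boundary‑block surplus averages away and one obtains exactly $(R_W,R_S,R_R)=(I(V;Z),\,I(V;Y_{(1)})-I(V;Z),\,I(X;Z|V))$; here one checks $|\mathcal{C}^{(n)}|/n\to H(V)-H(V|Z)$, $|\mathcal{I}^{(n)}|/n\to H(V|Z)-H(V|Y_{(1)})$ (the non‑vanishing per‑block chaining overhead in $\mathcal{G}^{(n)}$ being precisely $H(V|Y_{(1)})$), and $|\mathcal{H}_{X|V}^{(n)}\setminus\mathcal{H}_{X|VZ}^{(n)}|/n\to I(X;Z|V)$, with the minimum in Proposition~\ref{prop:MIB} resolved by assumption~\eqref{eq:assumpRate1}.

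Second, \emph{distribution approximation}: I would show $\mathbb{V}\big(\tilde q_{A_{1:L}^nT_{1:L}^n},\prod_{i=1}^L p_{A^nT^n}\big)\to0$. In each block the randomized coordinates of $\tilde A_i^n$ and $\tilde T_i^n$ are drawn from the exact conditionals of \eqref{eq:distDMS}, the $\mathcal{L}_V^{(n)}$ and $\mathcal{L}_{X|V}^{(n)}$ coordinates are fixed by SC encoding \eqref{eq:scdeterministic}, \eqref{eq:scdeterministicX}, and all coordinates carrying messages, keys, or chained sequences are filled with uniform bits that lie inside $\mathcal{H}_V^{(n)}$ (resp.\ $\mathcal{H}_{X|V}^{(n)}$), where the target marginal is already $\delta_n$‑close to uniform. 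A Kullback–Leibler bound of the Honda–Yamamoto / \cite{korada2010polar} type followed by Pinsker controls each block by $O(\sqrt{n\delta_n})$; since the chaining only substitutes already‑near‑uniform coordinates by other near‑uniform coordinates, the cross‑block coupling does not degrade this estimate, and a telescoping bound over the $L$ blocks gives $O(L\sqrt{n\delta_n})\to0$. This step licenses computing all subsequent entropy quantities under the clean product target $p$ at an additive cost $O(nL\cdot L\sqrt{n\delta_n})$ via uniform continuity of entropy.

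Third, \emph{reliability}. For receiver $k$, the decoder (Algorithm~\ref{alg:decoding1} going forward, Algorithm~\ref{alg:decoding2} going backward) has available at each block $i$ the subvector $\tilde A_i[(\mathcal{L}_{V|Y_{(k)}}^{(n)})^{\text{C}}]$: the part outside $\mathcal{H}_V^{(n)}$ comes from $\Phi_{(k),i}^{(V)}$, and the part in $\mathcal{H}_V^{(n)}\cap(\mathcal{L}_{V|Y_{(k)}}^{(n)})^{\text{C}}$ is $\Upsilon_{(k)}^{(V)}$ at the initial block and is reassembled from the previously decoded adjacent block together with $\kappa_\Theta^{(V)},\kappa_\Gamma^{(V)}$ thereafter — which is exactly why the chaining is arranged with receiver~1 going forward and receiver~2 going backward, and why the sets \eqref{eq:sC0}–\eqref{eq:sC12} are intersected with $\mathcal{L}_{V|Y_{(1)}}^{(n)}$ and $\mathcal{L}_{V|Y_{(2)}}^{(n)}$ as they are. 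Given this side information and the channel outputs, SC decoding for source coding with side information \cite[Theorem~2]{arikan2010source} reconstructs $\tilde A_i^n$ with block error $O(n\delta_n)$; chaining the $V$‑layer and $X$‑layer SC decoders, summing over the $L$ blocks, and then letting $n\to\infty$ and $L\to\infty$ in this order makes the error probability of each receiver vanish, establishing \eqref{eq:reliabilitycond}. Cases A–D are handled uniformly by invoking the already‑verified nonemptiness of $\mathcal{R}_1^{(n)},\mathcal{R}_2^{(n)},\mathcal{R}_{1,2}^{(n)},\mathcal{I}^{(n)},\mathcal{R}_{\mathrm S}^{(n)},\mathcal{R}_\Lambda^{(n)}$, which follow from \eqref{eq:assumpRate1Impl2}.

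Fourth, and hardest, \emph{strong secrecy}: bound $I(S_{1:L};\tilde Z_{1:L}^n)$. After transferring to the target $p$ at cost $o(1)$, the core estimate is that every coordinate of $\tilde A_i^n$ carrying $S$‑bits lies in $\mathcal{G}^{(n)}=\mathcal{H}_{V|Z}^{(n)}\subseteq\mathcal{H}_V^{(n)}$, so by telescoping the definition \eqref{eq:HUZ} one gets $H\big(\tilde A_i[\mathcal{H}_{V|Z}^{(n)}]\,\big|\,\tilde Z_i^n\big)\ge|\mathcal{H}_{V|Z}^{(n)}|-n\delta_n$, i.e.\ per‑block leakage $O(n\delta_n)$. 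The obstacle is the bidirectional chaining: a given $S_i$ reappears in the adjacent block through $\Pi_i^{(V)}$, the part $\Lambda_1^{(V)}$ of $S_1$ reappears in every block, and $\tilde A_i[\mathcal{G}^{(n)}]$ also re‑encodes the non‑secret private coordinates $\tilde A_{i\mp1}[\mathcal{C}^{(n)}]$ of the neighbouring blocks, so the eavesdropper gets several correlated looks. I would resolve this by (i) noting that the forward‑copied coordinates carry only $W$, which may be conditioned on without loss; (ii) showing that the backward‑copied coordinates $\bar\Theta_i^{(V)}=\tilde A_i[\mathcal{C}_1^{(n)}]\oplus\kappa_\Theta^{(V)}$, $\bar\Gamma_i^{(V)}=\tilde A_i[\mathcal{C}_{1,2}^{(n)}]\oplus\kappa_\Gamma^{(V)}$ (and $\Lambda_0^{(V)},\Lambda_0^{(X)}$) are, by the private one‑time‑pad keys — which are never combined with $S$ — indistinguishable from fresh uniform bits independent of everything the eavesdropper observes, despite the keys being reused across blocks; and (iii) performing an induction over the $L$ blocks in which each step adds the block's own $O(n\delta_n)$ leakage, the $O(n\delta_n)$ from the single shared sequence $\Pi$ linking it to its predecessor, and the negligible contribution of the globally replicated $\Lambda_1^{(V)}$. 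Collecting terms gives $I(S_{1:L};\tilde Z_{1:L}^n)=O\big(L^2 n\delta_n + nL\cdot L\sqrt{n\delta_n}\big)\to0$ on letting $n\to\infty$ and then $L\to\infty$, which together with the first three parts proves the theorem. The bookkeeping in (iii) — precisely identifying the ``dangerous'' coordinates of each $\tilde A_i^n$ and checking that every one of them is either a deterministic function of $W$ and the keys or else lies in $\mathcal{H}_{V|Z}^{(n)}$ — is where the bulk of the work lies, and is exactly what the new chaining construction was designed to make tractable.
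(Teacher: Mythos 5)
Your overall four-step architecture (rates, distribution approximation, reliability, secrecy) matches the paper's, and your rate and reliability computations are essentially those of Sections~\ref{sec:performance_rates} and~\ref{sec:performance_reliability}. The gap is in your second and fourth steps. The claim that $\mathbb{V}\big(\tilde q_{A_{1:L}^nT_{1:L}^n},\prod_{i=1}^L p_{A^nT^n}\big)\to0$ is false: under $\tilde q$ the chaining makes the blocks strongly correlated --- the sequence $\Lambda_1^{(V)}$ (and $\Lambda_0^{(X)}$ in the prefixing layer) is copied verbatim into every block, and $\Pi_i^{(V)}$, $\Psi_i^{(V)}$, $\Gamma_i^{(V)}$, $\bar\Theta_{i+1}^{(V)}$ are repeated in adjacent blocks --- whereas under the product target these coordinates are independent across blocks, so the total variation between the two joint distributions is bounded away from $0$ (indeed it tends to $1$). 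The paper's Lemma~\ref{lemma:distDMS} asserts only \emph{per-block} closeness $\mathbb{V}(\tilde q_{A_i^nT_i^n},p_{A^nT^n})\le\delta_n^{(*)}$, and the entire difficulty of the secrecy proof is that one cannot simply ``transfer to the clean product target'': the cross-block dependencies must be dismantled explicitly. Consequently your step (iv), which begins by transferring to $p$ over all $L$ blocks at cost $o(1)$, is not licensed.

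Relatedly, your claim in (ii) that the one-time-padded sequences $\bar\Theta_i^{(V)},\bar\Gamma_i^{(V)}$ are, across all blocks simultaneously, indistinguishable from fresh uniform bits independent of everything the eavesdropper observes is too strong precisely because $\kappa_\Theta^{(V)}$ and $\kappa_\Gamma^{(V)}$ are reused: $\bar\Theta_i^{(V)}\oplus\bar\Theta_{i'}^{(V)}=\tilde A_i[\mathcal{C}_1^{(n)}]\oplus\tilde A_{i'}[\mathcal{C}_1^{(n)}]$ is key-free. The crypto lemma yields only the pairwise independence of $\Gamma_i^{(V)}$ and $\bar\Gamma_i^{(V)}$ (equivalently, of $W_{1,i}=\bar\Omega_i^{(V)}\oplus\kappa_\Omega^{(V)}$ from $\tilde A_{i-1}^n$), which is exactly what Lemma~\ref{lemma:secrecy2} uses, one induction step at a time, after reformulating the encoding so that all dependencies run forward (regarding $\bar\Omega_i^{(V)}$ as generated at block $i-1$ and $W_{1,i}$ as derived from it) and then invoking d-separation on the resulting Bayesian graph of Figure~\ref{fig:dependencies}. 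Your induction sketch in (iii) points in the right direction, but without this causal reformulation and the d-separation argument --- and without abandoning the global product-approximation shortcut --- the bound $I(S_{1:L};\tilde Z_{1:L}^n)\to0$ does not follow from what you have written.
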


The proof of Theorem~\ref{th:th1} follows in four steps and is provided in the following subsections. In Section~\ref{sec:performance_rates} we show that the polar coding scheme approaches \eqref{eq:targetrate}. Then, in Section~\ref{sec:performance_rates} we prove that the joint distribution of $(\tilde{V}_i^n, \tilde{X}_i^n,\tilde{Y}_{(1),i}^n,\tilde{Y}_{(2),i}^n,\tilde{Z}_{i}^n)$ is asymptotically indistinguishable of the one of the original DMS. Finally, in Section~\ref{sec:performance_reliability} and Section~\ref{sec:performance_secrecy} we show that the polar coding scheme satisfies the reliability and the secrecy conditions \eqref{eq:reliabilitycond} and \eqref{eq:secrecycond} respectively.

\subsection{Transmission Rates}\label{sec:performance_rates}
We prove that the polar coding scheme described in Section~\ref{sec:PCS} approaches the rate tuple \eqref{eq:targetrate}. Also, we show that the overall length of the secret keys $\kappa_{\Theta}^{(V)}$, $\kappa_{\Gamma}^{(V)}$, $\kappa_{\Upsilon\Phi_{(1)}}^{(V)}$ and $\kappa_{\Upsilon\Phi_{(2)}}^{(V)}$, and the additional randomness used in the encoding (besides the randomization sequences) are asymptotically negligible in terms of rate.

\subsubsection{Private message rate}\label{sec:rate}
For $i\in[1,L]$, we have $W_{i} = \tilde{A}_{i}[ \mathcal{C}^{(n)}]$. According to the definition of $\mathcal{C}^{(n)}$ in \eqref{eq:sC}, and since $\mathcal{H}^{(n)}_{V|Z} \subseteq \mathcal{H}^{(n)}_{V}$, the rate of $W_{1:L}$ is
\begin{IEEEeqnarray}{rCl}
\frac{1}{nL}\sum_{i=1}^L |W_i| & = & \frac{1}{n} \Big| \mathcal{H}^{(n)}_{V} \cap \big( \mathcal{H}^{(n)}_{V|Z} \big)^{\text{C}} \Big| \nonumber \\
& = & \frac{1}{n} \big| \mathcal{H}^{(n)}_{V} \big| - \frac{1}{n} \big| \mathcal{H}^{(n)}_{V|Z} \big| \nonumber \\
& \IEEEeqnarraymulticol{2}{l}{%
\, \, \xrightarrow{n \rightarrow \infty}  H(V) - H(V|Z) }\nonumber \\
& = & I(V;Z), \nonumber
\end{IEEEeqnarray}
where the limit holds by \cite[Theorem~1]{arikan2010source}. Therefore, the private message rate achieved by the polar coding scheme is $R_{W}$ in \eqref{eq:targetrate}.

\subsubsection{Confidential message rate}
From Section~\ref{sec:formAG}, in all cases $|S_1| = \tilde{A}_1[\mathcal{I}^{(n)}  \mathcal{G}^{(n)}_{1} \cup \mathcal{G}^{(n)}_{1,2}]$; for $i\in[2,L-1]$, $S_{i} = \tilde{A}_{i}[ \mathcal{I}^{(n)}]$; and $S_L=\tilde{A}_L[\mathcal{I}^{(n)}  \mathcal{G}^{(n)}_2]$. Thus, we have
\begin{IEEEeqnarray}{rCl}
\IEEEeqnarraymulticol{3}{l}{
\frac{1}{nL} \sum_{i=1}^L |S_i|
}\nonumber \\
& = & \frac{( L \! - \! 2  ) \big| \mathcal{I}^{(n)} \big| \! + \! \big| \mathcal{I}^{(n)}  \mathcal{G}^{(n)}_{1} \cup \mathcal{G}^{(n)}_{1,2} \big| \! + \!   \big| \mathcal{I}^{(n)} \cup \mathcal{G}^{(n)}_{2} \big|}{nL} \nonumber \\
& = & \frac{1}{n} \big| \mathcal{I}^{(n)} \big| + \frac{1}{nL} \Big( \big| \mathcal{G}^{(n)}_{1} \big| + \big| \mathcal{G}^{(n)}_{2} \big| + \big| \mathcal{G}^{(n)}_{1,2} \big| \Big) \nonumber \\
& = & \frac{1}{n} \big| \mathcal{I}^{(n)} \big| + \frac{1}{nL}\big| \mathcal{G}^{(n)} \setminus \mathcal{G}^{(n)}_0  \big|  \nonumber \\
& \stackrel{(a)}{=} & \frac{\big| \mathcal{G}^{(n)}_0 \big| \!  + \!  \big| \mathcal{G}^{(n)}_{2} \big| \!  - \!  \big| \mathcal{R}^{(n)}_{1,2} \big| \! - \!  \big| \mathcal{R}^{\prime (n)}_{1,2} \big| \!  - \! \big| \mathcal{R}^{(n)}_{1} \big| \! - \! \big| \mathcal{R}^{\prime (n)}_{1} \big|}{n} \nonumber \\
& & +  \frac{1}{nL}\big| \mathcal{G}^{(n)} \setminus \mathcal{G}^{(n)}_0  \big| \nonumber \\
& \stackrel{(b)}{=} & \frac{\big| \mathcal{G}^{(n)}_0 \big| \!  + \!  \big| \mathcal{G}^{(n)}_{2} \big| \!  - \!  \big| \mathcal{C}^{(n)}_{1} \big| \! - \!  \big| \mathcal{C}^{(n)}_{1,2} \big| \!}{n} +  \frac{\big| \mathcal{G}^{(n)} \setminus \mathcal{G}^{(n)}_0  \big|}{nL} \nonumber \\
& \stackrel{(c)}{=} & \frac{\Big| \mathcal{H}^{(n)}_{V|Z} \cap \mathcal{L}^{(n)}_{V|Y_{(1)}} \Big| - \Big| \big( \mathcal{H}^{(n)}_{V|Z} \big)^{\text{C}} \cap \big( \mathcal{L}^{(n)}_{V|Y_{(1)}} \big)^{\text{C}} \Big|}{n} \nonumber \\
& & + \frac{\Big| \mathcal{H}^{(n)}_{V|Z} \cap \Big( \mathcal{L}^{(n)}_{V|Y_{(1)}}\cap   \mathcal{L}^{(n)}_{V|Y_{(2)}}\Big)^{\text{C}}\Big|}{nL} \nonumber \\
& \geq & \frac{\Big| \mathcal{H}^{(n)}_{V|Z} \cap \mathcal{L}^{(n)}_{V|Y_{(1)}} \Big| - \Big| \big( \mathcal{H}^{(n)}_{V|Z} \big)^{\text{C}} \cap \big( \mathcal{L}^{(n)}_{V|Y_{(1)}} \big)^{\text{C}} \Big|}{n} \nonumber \\
& & + \frac{\Big| \mathcal{H}^{(n)}_{V|Z} \Big| -  \Big| \big( \mathcal{L}^{(n)}_{V|Y_{(1)}} \big)^{\text{C}} \Big|}{nL} \nonumber \\
& = & \frac{\Big| \mathcal{H}^{(n)}_{V|Z} \Big| - \Big|  \big( \mathcal{L}^{(n)}_{V|Y_{(1)}} \big)^{\text{C}} \Big|}{n} + \frac{\Big| \mathcal{H}^{(n)}_{V|Z} \Big| -  \Big| \big( \mathcal{L}^{(n)}_{V|Y_{(1)}} \big)^{\text{C}} \Big|}{nL} \nonumber \\
& \IEEEeqnarraymulticol{2}{l}{%
\, \, \xrightarrow{n \rightarrow \infty}  H(V|Z) \! - \! H(V|Y_{(1)}) \! + \! \frac{H(V|Z) \! - \! H(V|Y_{(1)})}{L}
} \nonumber \\
& \IEEEeqnarraymulticol{2}{l}{%
\, \, \xrightarrow{L \rightarrow \infty}  I(V;Y_{(1)}) - I(V;Z)
}\nonumber 
\end{IEEEeqnarray}
where $(a)$ holds by the definition of $\mathcal{I}^{(n)}$ in \eqref{eq:ASetI}; $(b)$ holds because, in all cases, $|\mathcal{R}^{(n)}_{1,2}| + | \mathcal{R}^{\prime (n)}_{1} | = |\mathcal{C}_{1,2}^{(n)}|$ and $|\mathcal{R}^{(n)}_{1}| + | \mathcal{R}^{\prime (n)}_{1,2} | = |\mathcal{C}_{1}^{(n)}|$, $(c)$ follows from the partition of $\mathcal{H}_{V}^{(n)}$ defined in \eqref{eq:sG0}--\eqref{eq:sC12}; and the limit when $n$ goes to infinity holds by \cite[Theorem~1]{arikan2010source}. Hence, the coding scheme achieves the confidential message rate $R_{S}$ in \eqref{eq:targetrate}.

\subsubsection{Randomization sequence rate.}
For $i\in[1,L]$, we have $R_{i} = \tilde{T}_{i}[ \mathcal{H}^{(n)}_{X|V} \cap ( \mathcal{H}^{(n)}_{X|VZ} )^{\text{C}}]$. Therefore, we have
\begin{IEEEeqnarray}{rCl}
\frac{1}{nL}\sum_{i=1}^L |R_i| & = & \frac{1}{n} \Big|\mathcal{H}^{(n)}_{X|V} \cap \big( \mathcal{H}^{(n)}_{X|VZ} \big)^{\text{C}}\Big| \nonumber \\
& \stackrel{(a)}{=} & \frac{1}{n} \big| \mathcal{H}^{(n)}_{X|V} \big| - \frac{1}{n} \big| \mathcal{H}^{(n)}_{X|VZ} \big| \nonumber \\
& \IEEEeqnarraymulticol{2}{l}{%
\, \, \xrightarrow{n \rightarrow \infty}  H(X|Z) - H(X|VZ) }\nonumber \\
& = & I(X;Z|V), \nonumber
\end{IEEEeqnarray}
where $(a)$ holds because $\mathcal{H}^{(n)}_{X|VZ} \supseteq \mathcal{H}^{(n)}_{X|V}$, and the limit by \cite[Theorem~1]{arikan2010source}. Thus, the randomization sequence rate used by the polar coding scheme is $R_{R}$ in \eqref{eq:targetrate}.

\subsubsection{Privacy shared sequence rate}
The transmitter and the $k$-th legitimate receiver must privately share the keys $\kappa_{\Theta}^{(V)}$, $\kappa_{\Gamma}^{(V)}$ and $\kappa_{\Upsilon\Phi_{(k)}}^{(V)}$. Hence, the overall rate is
\begin{IEEEeqnarray}{rCl}
\IEEEeqnarraymulticol{3}{l}{
\frac{\big|\kappa_{\Theta}^{(V)}\big| + \big|\kappa_{\Gamma}^{(V)}\big| + \sum_{k=1}^2 \big| \kappa_{\Upsilon\Phi_{(k)}}^{(V)}\big|}{nL}
} \nonumber \\
& = & \frac{ \big|  \mathcal{C}^{(n)}_{1} \big|  + \big|\mathcal{C}^{(n)}_{1,2} \big|}{nL}  \nonumber \\
& & \! \! \! + \! \sum_{k=1}^2 \! \frac{L  \Big|  \big( \mathcal{H}_V^{(n)}  \big)^{\text{C}} \!  \cap \big( \mathcal{L}_{V|Y_{(k)}}^{(n)} \big)^{\text{C}} \Big| \! \!  + \! \Big| \mathcal{H}_V^{(n)} \cap \big( \mathcal{L}_{V|Y_{(k)}}^{(n)} \big)^{\text{C}} \Big|}{nL} \nonumber \\
& \stackrel{(a)}{=} & \frac{ \Big| \mathcal{H}_V^{(n)} \cap \big(  \mathcal{H}_{V|Z}^{(n)} \big)^{\text{C}} \cap \big( \mathcal{L}_{V|Y_{(1)}}^{(n)} \big)^{\text{C}} \Big|}{nL} \nonumber \\
& & \! \! \! + \! \sum_{k=1}^2 \! \frac{L  \Big|  \big( \mathcal{H}_V^{(n)}  \big)^{\text{C}} \!  \cap \big( \mathcal{L}_{V|Y_{(k)}}^{(n)} \big)^{\text{C}} \Big| \! \!  + \! \Big| \mathcal{H}_V^{(n)} \cap \big( \mathcal{L}_{V|Y_{(k)}}^{(n)} \big)^{\text{C}} \Big|}{nL} \nonumber \\ 
& \stackrel{(b)}{\leq} & \frac{ \Big|  \big( \mathcal{L}_{V|Y_{(1)}}^{(n)} \big)^{\text{C}} \Big|}{nL} \!  \nonumber \\
& &  + \sum_{k=1}^2  \frac{ L \Big| \big( \mathcal{H}_{V|Y_{(k)}}^{(n)} \big)^{\text{C}}  \setminus \mathcal{L}_{V|Y_{(k)}}^{(n)} \Big|   +  \Big| \big(  \mathcal{L}_{V|Y_{(k)}}^{(n)} \big)^{\text{C}} \Big|}{nL} \nonumber \\
 & \IEEEeqnarraymulticol{2}{l}{%
\, \, \xrightarrow{n \rightarrow \infty}  \frac{2 H(V|Y_{(1)}) + H(V|Y_{(2)}) }{L}
}\nonumber \\
& \IEEEeqnarraymulticol{2}{l}{%
\, \, \xrightarrow{L \rightarrow \infty} 0,
}\nonumber 
\end{IEEEeqnarray}
where $(a)$ follows from the definition of $\mathcal{C}^{(n)}_{1}$ and $\mathcal{C}^{(n)}_{1,2}$ in \eqref{eq:sC1} and \eqref{eq:sC12} respectively; for the second term in $(b)$ we have used $(\mathcal{H}_V^{(n)})^{\text{C}} \subseteq (\mathcal{H}_{V|Y_{(k)}})^{\text{C}}$; and the limit when $n$ goes to infinity holds by \cite[Theorem~1]{arikan2010source}.

\subsubsection{Rate of the additional randomness}
Besides the randomization sequences $R_{1:L}$, the encoder uses the random sequence $\Lambda_0^{(X)}$, with size $|\mathcal{H}_{X|V}^{(n)}|$, for the polar-based channel prefixing. Moreover, for $i \in [1,L]$, the encoder randomly draws those elements $\tilde{A}_{i}(j)$ such that $j \in (\mathcal{H}_V^{(n)})^{\text{C}}\setminus \mathcal{L}_V^{(n)}$, and those elements $\tilde{T}_{i}(j)$ such that $j \in (\mathcal{H}_{X|V}^{(n)})^{\text{C}}\setminus \mathcal{L}_{X|V}^{(n)}$. Nevertheless, we have  
\begin{IEEEeqnarray}{rCl}
\IEEEeqnarraymulticol{3}{l}{
\frac{\Big| \mathcal{H}_{X|V}^{(n)} \Big| + L \Big| \big(\mathcal{H}_{V}^{(n) }\big)^{\text{C}}\setminus \mathcal{L}_{V}^{(n)}\Big| + L \Big| \big(\mathcal{H}_{X|V}^{(n) }\big)^{\text{C}}\setminus \mathcal{L}_{X|V}^{(n)}\Big|}{nL}
} \nonumber \\
& \IEEEeqnarraymulticol{2}{l}{%
\, \, \xrightarrow{n \rightarrow \infty}  \frac{H(X|V)}{L}
}\nonumber \\
& \IEEEeqnarraymulticol{2}{l}{%
\, \, \xrightarrow{L \rightarrow \infty} 0,
}\nonumber 
\end{IEEEeqnarray}
where the limit when $n$ approaches to infinity follows from applying \cite[Theorem~1]{arikan2010source}.

\subsection{Distribution of the DMS after the polar encoding}\label{sec:performance_statistics}
For $i \in [1,L]$, let $\tilde{q}_{A_i^n T_i^n}$ denote the distribution of $(\tilde{A}_i^n,\tilde{T}_i^n)$ after the encoding. The following lemma proves 
that $\tilde{q}_{A_i^n T_i^n}$ and $p_{A^n T^n}$, the later being a marginal of the joint distribution of the original DMS in \eqref{eq:distDMS}, are nearly statistically indistinguishable for sufficiently large $n$ and, consequently, so are $\tilde{q}_{V_i^n X_i^n Y_{(1),i}^n Y_{(2),i}^n Z_i^n }$ and $p_{V^n X^nY_{(1)}^n Y_{(2)}^n Z^n }$. This result is crucial for the reliability and secrecy performance of the polar coding scheme.

\begin{lemma}
\label{lemma:distDMS}
For $i \in  [1,L]$, we have 
\begin{IEEEeqnarray}{rCl}
\mathbb{V}(\tilde{q}_{A_i^n T_i^n},p_{A^n T^n}) & \leq & \delta^{(*)}_n, \nonumber
\end{IEEEeqnarray}
which implies
\begin{IEEEeqnarray}{c}
\mathbb{V}(\tilde{q}_{V_i^n X_i^n Y_{(1),i}^n Y_{(2),i}^n Z_i^n },p_{V^n X^nY_{(1)}^n Y_{(2)}^n Z^n }) \leq \delta^{(*)}_n, \nonumber 
\end{IEEEeqnarray}
where
\begin{IEEEeqnarray}{rCl}
\delta^{(*)}_n & \triangleq & 2 n \sqrt{ 4 \sqrt{ \! n \delta_n \ln 2} \big( 2n \!  - \! \log \big(2 \sqrt{n \delta_n \ln 2} \big) \big) \! + \! \delta_n} \nonumber \\
& & + 2 \sqrt{n \delta_n \ln 2}. \nonumber
\end{IEEEeqnarray}
\end{lemma}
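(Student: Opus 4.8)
The plan is to compare $p_{A^n T^n}$ with the encoder-induced distribution through one intermediate ``idealized'' distribution, control the relative entropies, convert to total variation by Pinsker's inequality, and assemble with the triangle inequality. The second inequality of the lemma follows immediately from the first: $\tilde V_i^n=\tilde A_i^n G_n$ and $\tilde X_i^n=\tilde T_i^n G_n$ are deterministic invertible functions of $(\tilde A_i^n,\tilde T_i^n)$, and $(\tilde Y_{(1),i}^n,\tilde Y_{(2),i}^n,\tilde Z_i^n)$ is obtained from $\tilde X_i^n$ through the memoryless channel $p_{Y_{(1)}Y_{(2)}Z|X}$ exactly as in the original DMS; since total variation distance does not increase under a common (stochastic) map, the bound transfers verbatim. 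Hence all the work is in bounding $\mathbb{V}(\tilde q_{A_i^n T_i^n},p_{A^n T^n})$.

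First I would establish that the block-$i$ distribution factorizes as expected, namely
\[
\tilde q_{A_i^n}(a^n)=\Big(\prod_{j\in\mathcal{H}_V^{(n)}}\tfrac12\Big)\Big(\prod_{j\in(\mathcal{H}_V^{(n)})^{\text{C}}\setminus\mathcal{L}_V^{(n)}} p_{A(j)|A^{1:j-1}}(a(j)|a^{1:j-1})\Big)\Big(\prod_{j\in\mathcal{L}_V^{(n)}}\mathds{1}\{a(j)=\xi_{(j)}^{(V)}(a^{1:j-1})\}\Big),
\]
and the analogous factorization for $\tilde q_{T_i^n|A_i^n}$ with $\mathcal{H}_{X|V}^{(n)}$, $\mathcal{L}_{X|V}^{(n)}$ and $\xi_{(j)}^{(X)}$. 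The nontrivial point is that every coordinate written into $\tilde A_i[\mathcal{H}_V^{(n)}]$ by Algorithms~\ref{alg:genericenc}--\ref{alg:formA} is conditionally uniform given the lower-indexed coordinates of $\tilde A_i^n$: each such coordinate is either a fresh uniform message symbol (a part of $W_i$, $S_i$, or a portion of the uniform $W_{i-1}$ copied through the chaining), or a sequence one-time-padded with a key $\kappa_\Theta^{(V)}$ or $\kappa_\Gamma^{(V)}$ that is independent of everything else used in block $i$; in either case the value is uniform and independent of the preceding coordinates, so the conditional-entropy arguments of Section~\ref{sec:rev} apply unchanged to the marginal $\tilde q_{A_i^n}$.

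Next I would introduce $\bar q_{A_i^n T_i^n}$, obtained from $\tilde q_{A_i^n T_i^n}$ by replacing the deterministic maps $\xi_{(j)}^{(V)}$, $\xi_{(j)}^{(X)}$ on $\mathcal{L}_V^{(n)}$, $\mathcal{L}_{X|V}^{(n)}$ by the true conditionals $p_{A(j)|A^{1:j-1}}$, $p_{T(j)|T^{1:j-1}V^n}$. Then $\bar q$ differs from $p$ only through the uniform substitutions on the high-entropy sets, so the chain rule for relative entropy gives
\[
\mathbb{D}(p_{A^n},\bar q_{A_i^n})=\sum_{j\in\mathcal{H}_V^{(n)}}\big(1-H(A(j)|A^{1:j-1})\big)\le|\mathcal{H}_V^{(n)}|\,\delta_n\le n\delta_n,
\]
and likewise $\mathbb{D}(p_{T^n|V^n},\bar q_{T_i^n|A_i^n})\le n\delta_n$ after averaging over $A_i^n$; combining these with Pinsker's inequality yields $\mathbb{V}(p_{A^n T^n},\bar q_{A_i^n T_i^n})\le 2\sqrt{n\delta_n\ln2}=:\epsilon$. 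It remains to bound $\mathbb{V}(\bar q_{A_i^n T_i^n},\tilde q_{A_i^n T_i^n})$. For each $j\in\mathcal{L}_V^{(n)}\cup\mathcal{L}_{X|V}^{(n)}$, continuity of conditional entropy (a bound of the form $|H(A(j)|A^{1:j-1})_p-H(A(j)|A^{1:j-1})_{\bar q}|\le 2\epsilon(2n-\log\epsilon)$, the $2n$ coming from the joint alphabet $\{0,1\}^{2n}$) shows that under $\bar q$ the conditional entropy at coordinate $j$ is at most $\delta_n+2\epsilon(2n-\log\epsilon)$; Markov's inequality together with $h_2^{-1}(x)\le\sqrt{x}$ for small $x$ then bounds the $\bar q$-probability that coordinate $j$ disagrees with its argmax by $\sqrt{\delta_n+2\epsilon(2n-\log\epsilon)}$, and a union bound over the at most $2n$ such coordinates gives $\mathbb{V}(\bar q_{A_i^n T_i^n},\tilde q_{A_i^n T_i^n})\le 2n\sqrt{4\sqrt{n\delta_n\ln2}\,(2n-\log(2\sqrt{n\delta_n\ln2}))+\delta_n}$. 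Adding the two estimates by the triangle inequality produces exactly $\delta_n^{(*)}$.

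The main obstacle is the first step: verifying the factorization of $\tilde q_{A_i^n}$ in the presence of the new chaining construction, which writes into block $i$ quantities depending on the messages and keys of blocks $i-1$, $i$ and $i+1$ simultaneously. One must check, case by case over Cases~A--D and over the boundary blocks $i=1$ and $i=L$, that no index of $\mathcal{H}_V^{(n)}$ receives a value that is a nontrivial deterministic function of other indices of the \emph{same} block, and that every repeated sequence which is not a fresh message is masked by an independent one-time pad --- this is precisely where the keys $\kappa_\Theta^{(V)}$, $\kappa_\Gamma^{(V)}$ earn their keep, even though the bidirectional dependencies they partially hide will resurface in the secrecy analysis. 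Once this is in place, the remaining ingredients --- the relative-entropy identity for the uniform substitution, Pinsker's inequality, continuity of conditional entropy, and the bookkeeping that turns the per-coordinate estimates into $\delta_n^{(*)}$ --- are standard.
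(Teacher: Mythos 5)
Your proposal is correct and follows essentially the same route as the paper's Appendix~\ref{app:distributionDMS}: an intermediate distribution that replaces the deterministic SC-encoding maps by the true conditionals, a KL-divergence-plus-Pinsker bound of order $2\sqrt{n\delta_n\ln 2}$ for the uniform substitutions on the high-entropy sets, an entropy-continuity argument combined with a per-coordinate disagreement bound and union bound over $\mathcal{L}_V^{(n)}\cup\mathcal{L}_{X|V}^{(n)}$ for the deterministic coordinates, the triangle inequality, and invariance of total variation under the common channel $p_{Y_{(1)}Y_{(2)}Z|X}$. The only differences are cosmetic: the paper states the argument for a general $M$-level DMS (specializing to $M=2$) and phrases the argmax-disagreement step via an explicit coupling and a Jensen-type bound, while you work directly with $(A^n,T^n)$ and make explicit the (paper-asserted) verification that everything written into $\tilde{A}_i[\mathcal{H}_V^{(n)}]$ is conditionally uniform.
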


\begin{proof}
See Appendix~\ref{app:distributionDMS}
\end{proof}

\subsection{Reliability analysis}\label{sec:performance_reliability}
We prove that both legitimate receivers can reliably reconstruct the private and the confidential messages $(W_{1:L},S_{1:L})$ with arbitrary small error probability. 

For $i \in [1,L]$ and $k \in [1,2]$, let $\tilde{q}_{V_i^n Y_{(k),i}^n}$ and $p_{V^n Y_{(k)}^n}$ be marginals of $\tilde{q}_{V_i^n X_i^n Y_{(1),i}^n Y_{(2),i}^n Z_i^n }$ and $p_{V^n X^n Y_{(1)}^n Y_{(2)}^n Z^n }$ respectively, and define an optimal coupling \cite[Proposition~4.7]{levin2009markov} between $\tilde{q}_{V_i^n Y_{(k),i}^n}$ and $p_{V^n Y_{(k)}^n}$ such that 
\begin{IEEEeqnarray}{c}
\mathbb{P} [ \mathcal{E}_{V_i^n Y_{(k),i}^n} ] = \mathbb{V} ( \tilde{q}_{V_i^n Y_{(k),i}^n} , p_{V^n Y_{(k)}^n} ), \nonumber
\end{IEEEeqnarray}
where $\mathcal{E}_{V_i^n Y_{(k),i}^n} \triangleq \big\{ \big( \tilde{V}_i^n, \tilde{Y}_{(k),i}^n \big) \neq \big( V^n, Y_{(k)}^n \big) \big\}$. Additionally, define the error event
\begin{IEEEeqnarray}{c}
\mathcal{E}_{(k),i} \triangleq \Big\{ \tilde{A}_i \big[ \big( \mathcal{L}_{V|Y_{(k)}}^{(n)} \big)^{\text{C}} \big]  \neq \hat{A}_i \big[ \big( \mathcal{L}_{V|Y_{(k)}}^{(n)} \big)^{\text{C}} \big] \Big\}. \nonumber 
\end{IEEEeqnarray}
Recall that $(\Upsilon_{(k)}^{(V)},\Phi_{(k),1:L}^{(V)})$ is available to the $k$-th legitimate receiver. Thus, notice that we have
\begin{IEEEeqnarray}{c}
\mathbb{P} [ \mathcal{E}_{(1),1} ] = \mathbb{P} [ \mathcal{E}_{(2),L} ]  = 0 \label{eq:e1L}  
\end{IEEEeqnarray}
because given $\Upsilon_{(1)}^{(V)}$ and $\Phi_{(1),1}^{(V)}$ legitimate receiver 1 knows $\tilde{A}_{1} [( \mathcal{L}_{V|Y_{(1)}}^{(n)} )^{\text{C}}  ]$, and given $\Upsilon_{(2)}^{(V)}$ and $\Phi_{(2),L}^{(V)}$ legitimate receiver 2 knows $\tilde{A}_{L} [( \mathcal{L}_{V|Y_{(2)}}^{(n)} )^{\text{C}}  ]$. Moreover, due to the chaining structure, in Section~\ref{sec:decoding} we have seen that $\hat{A}_i [ \mathcal{H}^{(n)}_V \cap ( \mathcal{L}_{V|Y_{(1)}}^{(n)})^{\text{C}} ] \subset \hat{A}_{i-1}^n$ for  $i \in [2,L]$. Therefore, at legitimate receiver 1, for $i \in [2,L]$ we obtain
\begin{IEEEeqnarray}{c}
\mathbb{P} [ \mathcal{E}_{(1),i} ] \leq \mathbb{P} [ \tilde{A}_{i-1}^n \neq \hat{A}_{i-1}^n ].  \label{eq:e1i} 
\end{IEEEeqnarray}
Similarly, due to the chaining construction, we have seen that $\hat{A}_i [ \mathcal{H}^{(n)}_V \cap ( \mathcal{L}_{V|Y_{(2)}}^{(n)} )^{\text{C}} ] \subset \hat{A}_{i+1}^n$ for $i \in [1,L-1]$. Thus, at legitimate receiver 2, for $i \in [1,L-1]$ we obtain
\begin{IEEEeqnarray}{c}
\mathbb{P} [ \mathcal{E}_{(2),i} ] \leq \mathbb{P} [ \tilde{A}_{i+1}^n \neq \hat{A}_{i+1}^n ], \quad i \in [1,L-1].  \IEEEeqnarraynumspace \label{eq:e2i} 
\end{IEEEeqnarray}

Therefore, the probability of incorrectly decoding $(W_{i},S_{i})$ at the $k$-th receiver can be bounded as
\begin{IEEEeqnarray}{rCl}
\IEEEeqnarraymulticol{3}{l}{
\mathbb{P} \big[ (W_i, S_i) \neq (\hat{W}_i , \hat{S}_i ) \big]
} \nonumber \\
& \leq & \mathbb{P} \big[ \tilde{A}_i^n \neq \hat{A}_i^n \big] \nonumber \\
& = & \mathbb{P} \Big[ \tilde{A}_i^n \neq \hat{A}_i^n \Big| \mathcal{E}_{V_i^n Y_{(k),i}^n}^{\text{C}} \! \cap \mathcal{E}_{(k),i}^{\text{C}} \Big] \mathbb{P} \Big[ \mathcal{E}_{V_i^n Y_{(k),i}^n}^{\text{C}} \! \cap  \mathcal{E}_{(k),i}^{\text{C}}  \Big] \nonumber \\
&  &  \! + \mathbb{P} \Big[ \tilde{A}_i^n \neq \hat{A}_i^n \Big| \mathcal{E}_{V_i^n Y_{(k),i}^n} \!\! \cup  \! \mathcal{E}_{(k),i} \Big] \mathbb{P} \Big[ \mathcal{E}_{V_i^n Y_{(k),i}^n} \!  \cup  \mathcal{E}_{(k),i} \Big]  \nonumber \\
& \leq & \mathbb{P} \Big[ \tilde{A}_i^n \neq \hat{A}_i^n \Big| \mathcal{E}_{V_i^n Y_{(k),i}^n}^{\text{C}} \! \cap \mathcal{E}_{(k),i}^{\text{C}} \Big] \! \nonumber \\
& & \! + \mathbb{P} \Big[ \mathcal{E}_{V_i^n Y_{(k),i}^n} \! \cup  \mathcal{E}_{(k),i} \Big] \nonumber \\
& \stackrel{(a)}{\leq} & n \delta_n + \mathbb{P} \big[ \mathcal{E}_{V_i^n Y_{(k),i}^n} \! \cup  \mathcal{E}_{(k),i} \big]  \nonumber \\
& \stackrel{(b)}{\leq} & n \delta_n + \mathbb{P} \big[ \mathcal{E}_{V_i^n Y_{(k),i}^n} \big]  + \mathbb{P} \big[ \mathcal{E}_{(k),i} \big] \nonumber \\
& \stackrel{(c)}{\leq} & n \delta_n + 2\delta_{n}^{(*)}  + \mathbb{P} \big[ \mathcal{E}_{(k),i} \big],  \nonumber 
\end{IEEEeqnarray}
where $(a)$ holds by \cite[Theorem~2]{arikan2010source}; $(b)$ holds by the union bound; and $(c)$ follows from the optimal coupling and Lemma~\ref{lemma:distDMS}. Therefore, we have
\begin{IEEEeqnarray}{rCl}
\IEEEeqnarraymulticol{3}{l}{
\mathbb{P} \big[ (W_{1:L}, S_{1:L}) \neq (\hat{W}_{1:L} , \hat{S}_{1:L}) \big]
} \nonumber \\
& \stackrel{(a)}{\leq} & \sum_{i=1}^L \mathbb{P} \big[ \tilde{A}_i^n \neq \hat{A}_i^n \big] \nonumber \\
& \stackrel{(b)}{\leq} & \frac{L (L+1)}{2} \Big( n \delta_n + 2 \delta_{n}^{(*)} \Big), \nonumber 
\end{IEEEeqnarray}
where $(a)$ follows from applying the union bound, and $(b)$ holds by induction and \eqref{eq:e1L}--\eqref{eq:e2i}.

\subsection{Secrecy analysis}\label{sec:performance_secrecy}
Since encoding in Section~\ref{sec:PCS} takes place over $L$ blocks of size $n$, we need to prove that
\begin{align}
\lim_{n,L \rightarrow \infty} I(S_{1:L},\tilde{Z}_{1:L}^n) = 0. \nonumber
\end{align}
For clarity and with slight abuse of notation, for any block~$i \in [1,L]$ let 
\begin{IEEEeqnarray}{rCl}
\Xi_i^{(V)} & \triangleq & [\Pi_i^{(V)},\Lambda_i^{(V)},\Psi_i^{(V)},\Gamma_i^{(V)}],  
\end{IEEEeqnarray}
which denotes the entire random sequence depending on $\tilde{A}_i^n$ at block~$i$ that is repeated at block~$i+1$. Also, let
\begin{IEEEeqnarray}{c}
\bar{\Omega}_i^{(V)} \triangleq  [\bar{\Theta}^{(V)}_i ,\bar{\Gamma}^{(V)}_i],
\end{IEEEeqnarray}
which represents the sequence at block~$i$ that is repeated at block~$i-1$. Also, we define $\kappa_{\Omega}^{(V)} \triangleq [\kappa_{\Theta}^{(V)}, \kappa_{\Gamma}^{(V)}]$.

A Bayesian graph describing the dependencies between all the variables involved in the coding scheme of Section~\ref{sec:PCS} is given in Figure~\ref{fig:dependencies}. 
Despite $\Gamma_{i}^{(V)} \subseteq \Xi_i^{(V)}$ and $\bar{\Gamma}_{i}^{(V)} = \Gamma_{i}^{(V)} \oplus \kappa_{\Gamma}^{(V)} \subseteq \bar{\Omega}_i^{(V)}$,
we represent $\Xi_i^{(V)}$ and $\bar{\Omega}_i^{(V)}$ as two independent nodes in the Bayesian graph because, by \emph{crypto lemma}~\cite{crypto1241}, $\Gamma_{i}^{(V)}$ and $\bar{\Gamma}_{i}^{(V)}$ are statistically independent.
Furthermore, for convenience, we have considered that dependencies only take place forward (from block~$i$ to block~$i+1$), which is possible by reformulating the encoding as follows. According to Section~\ref{sec:encoding}, for any $i \in [1,L]$ we have $\tilde{A}_i [\mathcal{C}_i^{(n)}] = W_i$. Consequently, we can write $W_i \triangleq [W_{1,i}, W_{2,i}]$, where $W_{1,i} \triangleq \tilde{A}_i[\mathcal{C}_1^{(n)} \cup \mathcal{C}_{1,2}^{(n)}]$ and $W_{2,i} \triangleq \tilde{A}_i[\mathcal{C}_2^{(n)} \cup \mathcal{C}_{0}^{(n)}]$. Thus, we regard $\bar{\Omega}_{i}^{(V)}$ as an independent random sequence generated at block~$i-1$, which is stored properly into some part of $\tilde{A}_{i-1}[\mathcal{G}^{(n)}]$. Then, we consider that the encoder obtains $W_{1,i} \triangleq  \bar{\Omega}_{i}^{(V)} \oplus \kappa_{\Omega}^{(V)}$, which is stored into $\tilde{A}_i[\mathcal{C}_1^{(n)} \cup \mathcal{C}_{1,2}^{(n)}]$. On the other hand, the remaining part $W_{2,i}$ is independently generated at block~$i$.


\begin{figure}[h]
\vspace{0.15cm}
\centering
\hspace{0.6cm}
\begin{small}
\begin{overpic}[width=0.85\linewidth]{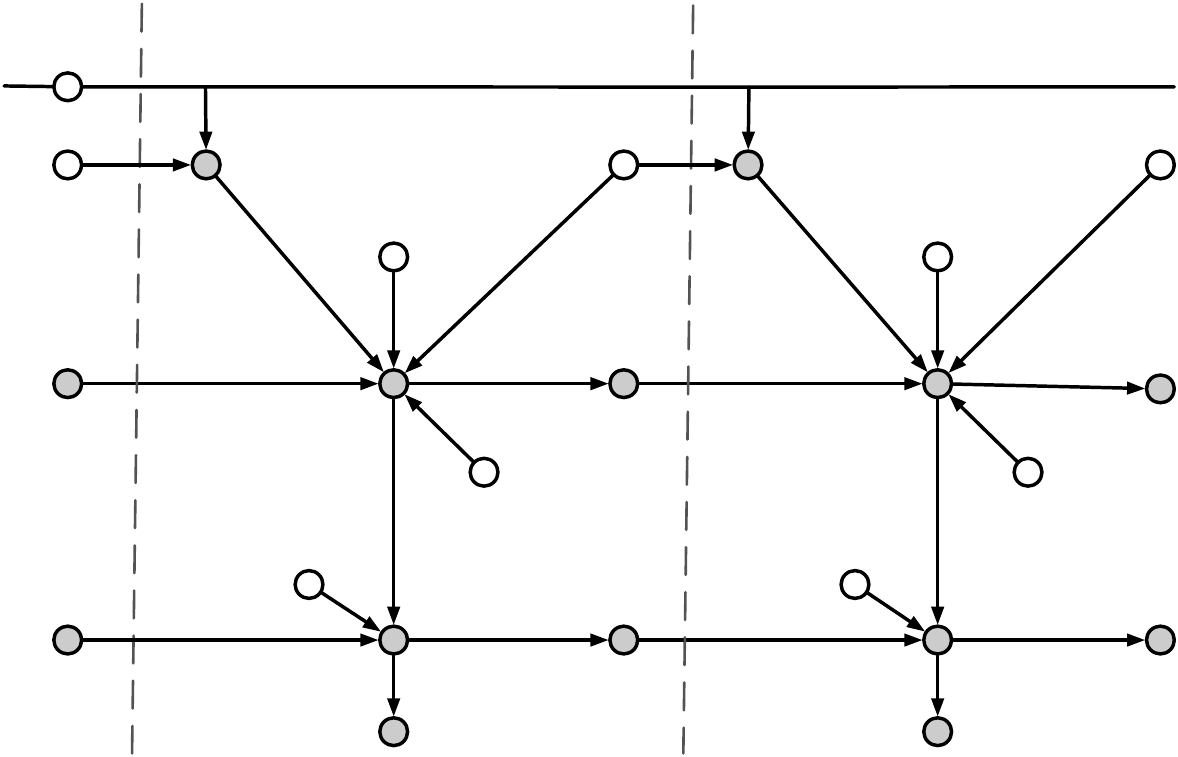}
\put (-15,65) {Block $i \! - \! 2$}
\put (22,65) {Block $i \! - \! 1$}
\put (70,65) {Block $i$}
\put (65,51) {$\Scale[0.95]{W_{1,i}}$}
\put (19,51) {$\Scale[0.95]{W_{1,i-1}}$}
\put (-1,59) {$\Scale[0.95]{\kappa_{\Omega}^{(V)}}$}
\put (46,51) {$\Scale[0.95]{\bar{\Omega}_{i}^{(V)}}$}
\put (-5,49) {$\Scale[0.95]{\bar{\Omega}_{i-1}^{(V)}}$}
\put (87.5,51) {$\Scale[0.95]{\bar{\Omega}_{i+1}^{(V)}}$}
\put (-3,34.6) {$\Scale[0.95]{\Xi_{i-2}^{(V)}}$}
\put (47.8,34.6) {$\Scale[0.95]{\Xi_{i-1}^{(V)}}$}
\put (96.5,34.3) {$\Scale[0.95]{\Xi_{i}^{(V)}}$}
\put (-3,12.8) {$\Scale[0.95]{\Lambda_{i-2}^{(X)}}$}
\put (46.6,12.8) {$\Scale[0.95]{\Lambda_{i-1}^{(X)}}$}
\put (95,12.8) {$\Scale[0.95]{\Lambda_{i}^{(X)}}$}
\put (79,44) {$\Scale[0.95]{W_{2,i}}$}
\put (68,17) {$\Scale[0.95]{R_{i}}$}
\put (32,44) {$\Scale[0.95]{W_{2,i-1}}$}
\put (20,17) {$\Scale[0.95]{R_{i-1}}$}
\put (24.1,26.6) {$\Scale[0.95]{\tilde{A}_{i \! - \! 1}^n}$}
\put (72.8,26.6) {$\Scale[0.95]{\tilde{A}_{i}^n}$}
\put (35,12) {$\Scale[0.95]{\tilde{T}_{i - 1}^n}$}
\put (81,12) {$\Scale[0.95]{\tilde{T}_{i}^n}$}
\put (35,3) {$\Scale[0.95]{\tilde{Z}_{i - 1}^n}$}
\put (81.5,3) {$\Scale[0.95]{\tilde{Z}_{i}^n}$}
\put (42.5,25) {$\Scale[0.95]{S_{i-1}}$}
\put (89,25) {$\Scale[0.95]{S_{i}}$}
\end{overpic}
\end{small}
\caption{Graphical representation of the dependencies between random variables involved in the polar coding scheme. Independent random variables are indicated by white nodes, whereas those that are dependent are indicated by gray nodes.
}\label{fig:dependencies} 
\end{figure}

The following lemma shows that secrecy holds for any block $i \in [1,L]$ (block-wise strong secrecy).

\begin{lemma}
\label{lemma:secrecy1}
For $i \in [1,L]$ and sufficiently large $n$,
\begin{IEEEeqnarray}{rCl}
I \big( \tilde{A}_i [ \mathcal{H}_{V|Z}^{(n)} ] \tilde{T}_i [ \mathcal{H}_{V|Z}^{(n)} ]; \tilde{Z}_i^n \big) & \leq & \delta_n^{(\text{\emph{S}})} \nonumber 
\end{IEEEeqnarray}
where $\delta_n^{(\text{\emph{S}})} \triangleq 2 n \delta_n  + 2 \delta_n^{(*)} \big(4 n  -  \log \delta_n^{(*)} \big)$ and $\delta_n^{(*)}$ defined as in Lemma~\ref{lemma:distDMS}.
\end{lemma}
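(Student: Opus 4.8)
The goal is a block-wise strong-secrecy bound: for each block $i$, the mutual information between the ``secured'' components $\tilde{A}_i[\mathcal{H}_{V|Z}^{(n)}]\tilde{T}_i[\mathcal{H}_{V|Z}^{(n)}]$ and the eavesdropper's observation $\tilde{Z}_i^n$ is small. The natural strategy is to transfer the question from the induced distribution $\tilde{q}$ to the target DMS distribution $p$, where the polarization sets have clean entropy characterizations, pay the statistical-distance price from Lemma~\ref{lemma:distDMS}, and control the resulting continuity error with a bound of the form $|\mathbb{D}(\cdot)| \le \mathbb{V}\log(|\mathcal{X}^{\text{supp}}|/\mathbb{V})$ (the standard continuity-of-entropy / ``$\mathbb{V}\log\mathbb{V}$'' estimate). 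Concretely, I would first write
\[
I_{\tilde{q}}\big(\tilde{A}_i[\mathcal{H}_{V|Z}^{(n)}]\,\tilde{T}_i[\mathcal{H}_{V|Z}^{(n)}];\tilde{Z}_i^n\big)
= \big|\mathcal{H}_{V|Z}^{(n)}\big| + \big|\mathcal{H}_{X|VZ}^{(n)}\big| \;-\; H_{\tilde{q}}\big(\tilde{A}_i[\mathcal{H}_{V|Z}^{(n)}]\,\tilde{T}_i[\mathcal{H}_{V|Z}^{(n)}]\big|\tilde{Z}_i^n\big),
\]
using that under $\tilde{q}$ the entries indexed by $\mathcal{H}_{V|Z}^{(n)}$ (resp.\ $\mathcal{H}_{X|VZ}^{(n)}$) are formed from uniform bits — this is exactly why $\mathcal{G}^{(n)}=\mathcal{H}_{V|Z}^{(n)}$ is used for message/key/chaining content and why $\mathcal{H}_{X|VZ}^{(n)}$ carries $\Lambda_0^{(X)}$ — so the joint entropy of that block of bits is exactly $|\mathcal{H}_{V|Z}^{(n)}|+|\mathcal{H}_{X|VZ}^{(n)}|$.

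Next I would lower-bound the conditional entropy term. Under the \emph{target} distribution $p$, the polarization definitions \eqref{eq:HUZ} and \eqref{eq:HV-UTZ} give, via the chain rule and conditioning-reduces-entropy,
\[
H_p\big(A[\mathcal{H}_{V|Z}^{(n)}]\,T[\mathcal{H}_{X|VZ}^{(n)}]\,\big|\,Z^n\big)
\;\ge\; \sum_{j\in\mathcal{H}_{V|Z}^{(n)}} H_p\!\big(A(j)\big|A^{1:j-1},Z^n\big) + \sum_{j\in\mathcal{H}_{X|VZ}^{(n)}} H_p\!\big(T(j)\big|T^{1:j-1},V^n,Z^n\big)
\;\ge\; \big(\big|\mathcal{H}_{V|Z}^{(n)}\big|+\big|\mathcal{H}_{X|VZ}^{(n)}\big|\big)(1-\delta_n),
\]
so that under $p$ the mutual information is at most $(|\mathcal{H}_{V|Z}^{(n)}|+|\mathcal{H}_{X|VZ}^{(n)}|)\delta_n \le 2n\delta_n$. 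Then I would pass from $p$ to $\tilde{q}$: by Lemma~\ref{lemma:distDMS} the marginal $(V_i^n,X_i^n,Y_{(1),i}^n,Y_{(2),i}^n,Z_i^n)$ under $\tilde{q}$ is within $\delta_n^{(*)}$ total variation of the DMS one, hence so is any marginal/conditional we need, including the joint law of $(A_i[\mathcal{H}_{V|Z}^{(n)}],T_i[\mathcal{H}_{X|VZ}^{(n)}],Z_i^n)$ (these are deterministic functions of $(V_i^n,X_i^n,Z_i^n)$ via the invertible $G_n$). Applying the continuity estimate to the two entropy terms in the mutual-information identity — each over an alphabet of size at most $2^{2n}$, and with statistical distance at most $\delta_n^{(*)}$ — yields a correction of the form $2\delta_n^{(*)}\big(2n-\log\delta_n^{(*)}\big)$ per term, i.e.\ $2\delta_n^{(*)}\big(4n-\log\delta_n^{(*)}\big)$ in total. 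Combining, $I_{\tilde q}(\cdots)\le 2n\delta_n + 2\delta_n^{(*)}(4n-\log\delta_n^{(*)})=\delta_n^{(\text{S})}$, which is exactly the claimed bound.

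The main obstacle is bookkeeping the passage between the two distributions carefully enough that the continuity bound applies with the stated constants: one must argue that $\tilde q_{A_i[\mathcal{H}_{V|Z}^{(n)}]T_i[\mathcal{H}_{X|VZ}^{(n)}]Z_i^n}$ and its $p$-counterpart are $\delta_n^{(*)}$-close (using that $G_n$ is a bijection and that the relevant coordinates are a subvector), and that the relevant conditional entropies $H(\cdot|Z^n)$ — not just unconditional ones — inherit the same continuity with alphabet bound $2^{2n}$ (so $-\log$ of the distance dominates the $\log|\mathcal{X}|$ terms correctly). A secondary subtlety is that the \emph{exact} value $|\mathcal{H}_{V|Z}^{(n)}|+|\mathcal{H}_{X|VZ}^{(n)}|$ for the $\tilde q$-entropy of the uniform block requires that all bits placed in $\mathcal{G}^{(n)}$ and $\mathcal{H}^{(n)}_{X|VZ}$ are indeed (jointly) uniform under $\tilde q$ — this holds because every such coordinate is filled either with a fresh uniform source (messages, keys, $R_i$, $\Lambda_0^{(X)}$, the independent $\bar\Omega^{(V)}$) or with an XOR against an independent uniform key (crypto lemma), and one must confirm the chaining does not destroy joint uniformity of that particular block within block $i$. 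The detailed verification of these points is what I would relegate to the appendix.
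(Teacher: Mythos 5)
Your proposal is correct and follows essentially the same route as the paper's proof: express the mutual information as $\big|\mathcal{H}_{V|Z}^{(n)}\big|+\big|\mathcal{H}_{X|VZ}^{(n)}\big|$ minus the conditional entropy (using uniformity under $\tilde{q}$), lower-bound the corresponding conditional entropy under $p$ by $\big(\big|\mathcal{H}_{V|Z}^{(n)}\big|+\big|\mathcal{H}_{X|VZ}^{(n)}\big|\big)(1-\delta_n)$ via the set definitions, and transfer between $\tilde{q}$ and $p$ using Lemma~\ref{lemma:distDMS} together with the entropy-continuity bound (Lemma~\ref{lemma:appsec1_1}), which is exactly how the correction term $2\delta_n^{(*)}\big(4n-\log\delta_n^{(*)}\big)$ arises in the paper. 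The subtleties you flag (closeness of the relevant marginals through the invertible $G_n$, and joint uniformity of the bits placed in $\mathcal{H}_{V|Z}^{(n)}$ and $\mathcal{H}_{X|VZ}^{(n)}$) are precisely the points the paper relies on.
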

\begin{proof}
See Appendix~\ref{app:secrecy1}.
\end{proof}

Next, the following lemma shows that eavesdropper observations $\tilde{Z}_i^n$ are asymptotically statistically independent of observations $\tilde{Z}_{1:i-1}^n$ from previous blocks.
\begin{lemma}
\label{lemma:secrecy2}
For $i \in [1,L]$ and sufficiently large $n$,
\begin{IEEEeqnarray}{c}
I \big(S_{1:L} \tilde{Z}_{1:i-1}^n; \tilde{Z}_{i}^n \big) \leq \delta_n^{(\text{\emph{S}})},
\end{IEEEeqnarray}
where $\delta_n^{(\text{\emph{S}})}$ is defined as in Lemma~\ref{lemma:secrecy1}.
\end{lemma}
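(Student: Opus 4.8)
The plan is to bound $I(S_{1:L}\tilde Z_{1:i-1}^n;\tilde Z_i^n)$ by reducing it, via the structure of the Bayesian graph in Figure~\ref{fig:dependencies}, to the single-block leakage already controlled by Lemma~\ref{lemma:secrecy1}. First I would identify, among the random variables present at block~$i$, which ones actually carry information about $S_{1:L}$ and $\tilde Z_{1:i-1}^n$. From the reformulated (forward-only) encoding, the only quantities at block~$i$ that depend on the past are the sequences stored into $\tilde A_i[\mathcal G^{(n)}]$ coming from block~$i-1$, namely $\Xi_{i-1}^{(V)}$ (which includes $\Psi_{i-1}^{(V)},\Gamma_{i-1}^{(V)},\Pi_{i-1}^{(V)},\Lambda_{i-1}^{(V)}$), the sequence $\bar\Omega_i^{(V)}$ generated at block~$i-1$, and the channel-prefixing state $\Lambda_{i-1}^{(X)}$; everything else at block~$i$ ($W_{2,i}$, $S_i$, $R_i$, and the locally drawn non-polarized coordinates) is freshly and independently generated. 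So I would first argue that $\tilde Z_i^n$ is conditionally independent of $(S_{1:L},\tilde Z_{1:i-1}^n)$ given the block-interface variables $(\Xi_{i-1}^{(V)},\bar\Omega_i^{(V)},\Lambda_{i-1}^{(X)})$, hence
$$I(S_{1:L}\tilde Z_{1:i-1}^n;\tilde Z_i^n)\le I(\Xi_{i-1}^{(V)}\bar\Omega_i^{(V)}\Lambda_{i-1}^{(X)};\tilde Z_i^n).$$

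Next I would observe that all of these interface variables are (by construction, using the crypto lemma for the one-time-padded pieces $\bar\Gamma,\bar\Theta$) stored exactly into coordinates of $\tilde A_i^n$ and $\tilde T_i^n$ indexed by subsets of $\mathcal H_{V|Z}^{(n)}$ and $\mathcal H_{X|VZ}^{(n)}$ respectively: indeed $\mathcal G^{(n)}=\mathcal H_{V|Z}^{(n)}$ by \eqref{eq:sG}, the sets $\mathcal R_\bullet^{(n)},\mathcal I^{(n)},\mathcal R_{\mathrm S}^{(n)},\mathcal R_\Lambda^{(n)}$ partition $\mathcal G^{(n)}$, and $\Lambda_{i-1}^{(X)}$ lands in $\tilde T_i[\mathcal H_{X|VZ}^{(n)}]$ by Algorithm~\ref{alg:polarchannelpref}. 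Therefore $(\Xi_{i-1}^{(V)},\bar\Omega_i^{(V)},\Lambda_{i-1}^{(X)})$ is a deterministic function of $(\tilde A_i[\mathcal H_{V|Z}^{(n)}],\tilde T_i[\mathcal H_{X|VZ}^{(n)}])$, so by the data-processing inequality
$$I(\Xi_{i-1}^{(V)}\bar\Omega_i^{(V)}\Lambda_{i-1}^{(X)};\tilde Z_i^n)\le I\big(\tilde A_i[\mathcal H_{V|Z}^{(n)}]\,\tilde T_i[\mathcal H_{V|Z}^{(n)}]\,\tilde T_i[\mathcal H_{X|VZ}^{(n)}];\tilde Z_i^n\big).$$
Since $\mathcal H_{X|VZ}^{(n)}\subseteq\mathcal H_{X|V}^{(n)}$ and, more to the point, the relevant $T$-coordinates used by the secrecy argument of Lemma~\ref{lemma:secrecy1} are exactly $\tilde T_i[\mathcal H_{V|Z}^{(n)}]$ together with the channel-prefixing randomness state — I would line this up so the right-hand side is precisely (or is dominated by) the quantity $I(\tilde A_i[\mathcal H_{V|Z}^{(n)}]\tilde T_i[\mathcal H_{V|Z}^{(n)}];\tilde Z_i^n)$ appearing in Lemma~\ref{lemma:secrecy1}. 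Applying that lemma then gives the bound $\delta_n^{(\mathrm S)}$ for sufficiently large $n$.

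The main obstacle, and where the argument needs the most care, is the first step: justifying the conditional-independence (Markov) claim that lets us drop $S_{1:L}$ and $\tilde Z_{1:i-1}^n$ in favor of the finite block-interface. This is exactly the point the introduction flags — the chaining construction introduces bidirectional dependencies between adjacent blocks (through $\bar\Omega_i^{(V)}$, which originates at block~$i$ but is *used* at block~$i-1$, and through $\Xi_i^{(V)}$, used at block~$i+1$). The forward-only reformulation in Section~\ref{sec:performance_secrecy} is what makes the Bayesian graph a DAG, but one still has to verify d-separation in that graph: that conditioning on the block-$i-1$-to-block-$i$ edge variables blocks every path from $\{S_{1:L},\tilde Z_{1:j}^n:j<i\}$ to $\tilde Z_i^n$. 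I would do this by induction, reading off the parents of $\tilde Z_i^n$ from Figure~\ref{fig:dependencies} (namely $\tilde T_i^n$, hence $\tilde A_i^n$, hence $W_i,S_i,R_i,\Xi_{i-1}^{(V)},\bar\Omega_i^{(V)},\Lambda_{i-1}^{(X)}$ and the fresh local randomness), noting that $\bar\Omega_i^{(V)}$ and the shared keys $\kappa_\Omega^{(V)},\kappa_{\Upsilon\Phi_{(k)}}^{(V)}$ are independent of the past by construction and by the crypto lemma, and confirming that no back-door path through a future block reaches $\tilde Z_i^n$ since future variables are descendants. A secondary technical point is handling the shared keys: $\kappa_\Omega^{(V)}$ is common to all blocks, so strictly one should condition on it (or note that $\tilde Z_i^n$ is independent of it) throughout — this is harmless because $\tilde Z_i^n$ depends on $\tilde A_i^n$ only through the one-time-padded and freshly-generated pieces whose joint law is unaffected by the key value.
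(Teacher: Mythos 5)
There is a genuine gap, and it sits exactly at the point you label ``secondary.'' Your reduction lumps $\bar\Omega_i^{(V)}$ into the block interface and then asserts that all interface variables are stored into coordinates of $\tilde A_i^n$ indexed by subsets of $\mathcal H_{V|Z}^{(n)}$. That is false for $\bar\Omega_i^{(V)}$: it is stored into $\tilde A_{i-1}[\mathcal G^{(n)}]$ at block $i-1$, while at block $i$ it enters only through $W_{1,i}=\bar\Omega_i^{(V)}\oplus\kappa_{\Omega}^{(V)}$, which is written into $\tilde A_i[\mathcal C_1^{(n)}\cup\mathcal C_{1,2}^{(n)}]\subseteq\tilde A_i[(\mathcal H_{V|Z}^{(n)})^{\text{C}}]$, i.e., the part \emph{not} protected from the eavesdropper (the paper points this out explicitly in its remarks after the secrecy analysis). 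So the data-processing step from $I(\Xi_{i-1}^{(V)}\bar\Omega_i^{(V)}\Lambda_{i-1}^{(X)};\tilde Z_i^n)$ to the quantity controlled by Lemma~\ref{lemma:secrecy1} does not go through, and the single-block lemma cannot absorb the $\bar\Omega_i^{(V)}$ component. Your initial Markov step is also not justified as written: the conditioning set omits $S_i$, which is a direct parent of $\tilde A_i^n$ and hence of $\tilde Z_i^n$, so the chain $(S_{1:L},\tilde Z_{1:i-1}^n)-(\Xi_{i-1}^{(V)},\bar\Omega_i^{(V)},\Lambda_{i-1}^{(X)})-\tilde Z_i^n$ fails trivially; and even after adding $S_i$, d-separation still fails because the reused key $\kappa_{\Omega}^{(V)}$ opens the path $\tilde Z_{i-1}^n\leftarrow\cdots\leftarrow W_{1,i-1}\leftarrow\kappa_{\Omega}^{(V)}\rightarrow W_{1,i}\rightarrow\tilde A_i^n\rightarrow\cdots\rightarrow\tilde Z_i^n$, none of whose interior nodes lies in your conditioning set.

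The paper's proof is structured precisely to avoid both problems: it first peels off $I(S_i\Xi_{i-1}^{(V)}\Lambda_{i-1}^{(X)};\tilde Z_i^n)\le\delta_n^{(\text{S})}$ via Lemma~\ref{lemma:secrecy1}, deliberately \emph{excluding} $\bar\Omega_i^{(V)}$ from that term; it then augments the residual conditional mutual information with $W_{1,i}$, so that d-separation does hold given $(S_i,\Xi_{i-1}^{(V)},\Lambda_{i-1}^{(X)},W_{1,i})$; and it finally kills the extra term $I(S_{1:i-1}\tilde Z_{1:i-1}^n;W_{1,i}\mid S_i\Xi_{i-1}^{(V)}\Lambda_{i-1}^{(X)})$ by the crypto lemma, which makes $W_{1,i}=\bar\Omega_i^{(V)}\oplus\kappa_{\Omega}^{(V)}$ independent of $\tilde A_{i-1}^n$. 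Your plan could be repaired along the same lines (e.g., split $I(S_i\Xi_{i-1}^{(V)}\bar\Omega_i^{(V)}\Lambda_{i-1}^{(X)};\tilde Z_i^n)$ into the Lemma~\ref{lemma:secrecy1} term plus $I(\bar\Omega_i^{(V)};\tilde Z_i^n\mid S_i\Xi_{i-1}^{(V)}\Lambda_{i-1}^{(X)})$ and argue the latter vanishes because $\tilde Z_i^n$ sees $\bar\Omega_i^{(V)}$ only through the one-time-padded $W_{1,i}$), but as stated the two central inequalities are invalid, and the role of $\kappa_{\Omega}^{(V)}$ --- the very reason this lemma is nontrivial --- is never actually used where it is needed.
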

\begin{proof}
See Appendix~\ref{app:secrecy2}.
\end{proof}

Therefore, we obtain
\begin{IEEEeqnarray}{rCl}
\IEEEeqnarraymulticol{3}{l}{
I \big(S_{1:L}; \tilde{Z}_{1:L}^n \big)
} \nonumber \\
& \stackrel{(a)}{=} & I \big(S_{1:L}; \tilde{Z}_{1}^n \big) + \sum_{i=2}^L I \big(S_{1:L}; \tilde{Z}_{i}^n \big| \tilde{Z}_{1:i-1}^n \big) \nonumber \\
& \leq & I \big(S_{1:L}; \tilde{Z}_{1}^n \big) + \sum_{i=2}^L I \big(S_{1:L} \tilde{Z}_{1:i-1}^n; \tilde{Z}_{i}^n \big) \nonumber \\
& \stackrel{(b)}{\leq} & (L-1) \delta_n^{(\text{S})} +  I \big(S_{1:L}; \tilde{Z}_{1}^n \big) \nonumber \\
& = & (L-1) \delta_n^{(\text{S})} +  I \big(S_{1}; \tilde{Z}_{1}^n \big) + I \big(S_{2:L}; \tilde{Z}_{1}^n \big| S_1 \big)  \nonumber \\
& \stackrel{(c)}{=} & (L-1) \delta_n^{(\text{S})} + I \big(S_{1}; \tilde{Z}_{1}^n \big)   \nonumber \\
& \stackrel{(d)}{\leq} & L \delta_n^{(\text{S})}  \nonumber 
\end{IEEEeqnarray}
where $(a)$ follows from applying the chain rule; $(b)$ holds by Lemma~\ref{lemma:secrecy2}; $(c)$ holds by independence between $S_{2:L}$ and any random variable from block~$1$; and $(d)$ holds by Lemma~\ref{lemma:secrecy1} because $S_1 \subseteq \tilde{A}_1[\mathcal{H}^{(n)}_{V|Z}]$.


\begin{remark}
We conjecture that the use $\kappa_{\Omega}^{(V)}$ is not needed  for the polar coding scheme to satisfy the strong secrecy condition. However, the key is required in order to prove this condition by using a causal Bayesian graph similar to the one in Figure~\ref{fig:dependencies}. 
\end{remark}

\begin{remark}
Although non-causual (backward) dependencies between random variables of different blocks appear in \cite{8438560}, a secret seed as $\kappa_{\Omega}^{(V)}$ is not necessary for the polar coding scheme to provide strong secrecy. This is because random sequences that are repeated in adjacent blocks are stored only into those corresponding entries whose indices belong to the ``high entropy set given eavesdropper observations'', i.e., the equivalent sets of $\mathcal{H}_{V|Z}^{(n)}$ and $\mathcal{H}_{X|VZ}^{(n)}$ in our polar coding scheme. By contrast, notice that our polar coding scheme stores $[\Theta_i^{(V)},\Gamma_i^{(V)}]$ into some part of $\smash{\tilde{A}_i[(\mathcal{H}_{V|Z}^{(n)})^{\text{\emph{C}}}]}$.
\end{remark}

\begin{remark}
Another possibility for the polar coding scheme is to repeat at block~$i+1$ the modulo-2 addition between $[\Psi_{i}^{(V)},\Gamma_{i}^{(V)}]$ and a secret-key, instead of repeating an \emph{encrypted} version of $[\Theta_{i}^{(V)},\Gamma_{i}^{(V)}]$ at block $i-1$. Then, it is not difficult to prove that $I \big(S_{1:L} \tilde{Z}_{i+1:L}^n; \tilde{Z}_{i}^n \big) \leq \delta_n^{(\text{\emph{S}})}$ (similar to Lemma~\ref{lemma:secrecy2}). Thus, we can minimize the length of the secret seed depending on whether $| \mathcal{C}_1^{(n)} | < | \mathcal{C}_2^{(n)} | $ or vice versa.
\end{remark}


\section{Conclusion}\label{sec:conclusion}
A strongly secure polar coding scheme has been proposed for the WBC with two legitimate receivers and one eavesdropper. This polar code achieves the best known inner-bound on the achievable region of the CI-WBC model, where transmitter wants to send private and confidential messages to both receivers. Due to the non-degradedness assumption of the channel, the encoder builds a chaining construction that induces bidirectional dependencies between adjacent blocks. These dependencies need to be taken carefully into account in the secrecy analysis and make the use of secret seeds crucial. 

\begin{appendices}

\section{Proof of Lemma~\ref{lemma:distDMS}}\label{app:distributionDMS}
We generalize the results obtained in \cite{7447169} for any DMS $( \mathcal{V}_{1} \times \cdots \times \mathcal{V}_{M} \times \mathcal{O}_1 \times  \cdots \times \mathcal{O}_K,  p_{V_{1:M} O_{1:K}} )$ such that $\mathcal{V}_{\ell} \triangleq \{0,1\}$ for any $\ell \in [1,M]$, and $p_{V_{1:M} O_{1:K}}$ satisfies the Markov chain condition $V_1 - \cdots - V_M - O_{1:K}$. This DMS characterizes an encoding procedure for the broadcast channel with $K$ receivers (legitimate ones or eavesdroppers), where $O_k$ denotes the channel output~$k \in [1,K]$, and $V_{\ell}$ denotes the binary encoding input random variable~$\ell \in [1,M]$. Consider an i.i.d. $n$-sequence $( V_{1:M}^n, O_{1:K}^n )$ of this DMS, $n$ being any power of two. We define the polar transforms $U_{1:M}^n \sim p_{U_{1:M}^n}$, where $U_{\ell}^n \triangleq V_{\ell}^n G_n$ for any $\ell \in [1,M]$, and the sets
\begin{IEEEeqnarray}{rCl}
\mathcal{H}_{V_{\ell}|V_{\ell-1}}^{(n)} \! & \! \triangleq & \big\{ j \in [1,n]  \! : H \big( U_{\ell}(j) \big| U_{\ell}^{1:j-1}, V_{\ell-1}^n  \big) \nonumber\\*
\IEEEeqnarraymulticol{3}{r}{
 \geq 1 - \delta_n \big\},
}  \nonumber \\
\mathcal{L}_{V_{\ell}|V_{\ell-1}}^{(n)} \! & \! \triangleq & \big\{ j \in [1,n]  \! : H \big( U_{\ell}(j) \big| U_{\ell}^{1:j-1}, V_{\ell-1}^n  \big)  \leq \delta_n \big\}. \nonumber
\end{IEEEeqnarray}
Note that the model considered in this paper can be represented by the previous DMS if we take $K \triangleq 3$ (two legitimate receivers and one eavesdropper), $M \triangleq 2$ defining $V \triangleq V_1$, $X \triangleq V_2$ and the associated polar transforms $A^n \triangleq U_1^n$ and $T^n \triangleq U_2^n$.

Consider an encoder that forms $\tilde{U}_{1:M}^n$ with joint distribution $\tilde{q}_{U_{1:M}^n} \triangleq \prod_{\ell=1}^M \prod_{j=1}^n \tilde{q}_{{U}_{\ell}(j)|U_{\ell}^{1:j-1}V_{\ell-1}^n}$, where 
\begin{IEEEeqnarray}{rCl}
\IEEEeqnarraymulticol{3}{l}{
\tilde{q}_{U_{\ell}(j) | U_{\ell}^{1:j-1} V_{{\ell}-1}^n} \! \big( \tilde{u}_{\ell}(j) \big| \tilde{u}_{\ell}^{1:j-1}, \tilde{v}_{{\ell}-1}^n \big) 
}  \nonumber \\
& = & \left\{
\begin{array}{l}
\! \! \! \frac{1}{2} \qquad \qquad \qquad \qquad \qquad \qquad \quad \! \! \text{if } j \in \mathcal{H}_{V_{\ell}|V_{{\ell}-1}}^{(n)}, \\
\! \! \! p_{U_{\ell}(j) | U_{\ell}^{1:j-1} V_{{\ell}-1}^n} \big( \tilde{u}_{\ell}(j) \big| \tilde{u}_{\ell}^{1:j-1}, \tilde{v}_{{\ell}-1}^n \big) \\
\qquad  \qquad \qquad  \quad    \text{if } j \in \big( \mathcal{H}_{V_{\ell}|V_{\ell-1}}^{(n)} \big)^{\text{C}} \setminus \mathcal{L}_{V_{\ell}|V_{\ell-1}}^{(n)},  \\
\! \! \! \mathds{1} \big\{ \tilde{u}_{\ell}(j) = \xi^{(U_{\ell})}_{(j)} \big( \tilde{u}_{\ell}^{1:j-1},\tilde{v}_{\ell-1}^{n} \big) \big\} \, \, \text{if } j \in \mathcal{L}_{V_{\ell}|V_{{\ell}-1}}^{(n)} , 
\end{array} 
\right. \nonumber \\ \label{eq:dist_tilde}
\end{IEEEeqnarray}
for any $\ell \in [1,M]$, $V_{0}^n \triangleq \varnothing$ and $\xi^{(j)}$ being the deterministic $\argmax$ function defined as
\begin{IEEEeqnarray}{rCl}
\IEEEeqnarraymulticol{3}{l}{%
\xi_{(j)}^{(U_{\ell})} \big( {u}_{\ell}^{1:j-1} , {v}_{\ell-1}^n \big)
} \nonumber \\* 
& \triangleq & \argmax_{u \in \mathcal{X}} p_{U_{\ell}(j)|U_{\ell}^{1:j-1}V_{\ell-1}^n} \big( {u}_{\ell} \left| {u}_{\ell}^{1:j-1}, v_{\ell-1}^n \right. \big), \nonumber
\end{IEEEeqnarray}
Notice that the encoder in Section~\ref{sec:encoding} constructs, for any block~$i \in [1,L]$, the sequences $(\tilde{A}_i^n, \tilde{X}_i^n)$ with joint distribution $\tilde{q}_{A_i^n T_i^n}$ defined as in \eqref{eq:dist_tilde}. 

Consider another encoder that omits the use of the $\argmax$ function, but draws the corresponding elements randomly. Let $\check{U}_{1:M}^n$ denote the sequences constructed by this encoder. Then, their joint distribution is given by $\check{q}_{U_{1:M}^n} \triangleq \prod_{\ell=1}^M \prod_{j=1}^n \check{q}_{{U}_{\ell}(j)|U_{\ell}^{1:j-1}V_{\ell-1}^n}$, where
\begin{IEEEeqnarray}{rCl}
\IEEEeqnarraymulticol{3}{l}{
\check{q}_{U_{\ell}(j) | U_{\ell}^{1:j-1} V_{{\ell}-1}^n} \! \big( \check{u}_{\ell}(j) \big| \check{u}_{\ell}^{1:j-1}, \check{v}_{{\ell}-1}^n \big)
}  \nonumber \\
& = & \left\{
\begin{array}{l}
\frac{1}{2} \qquad \qquad \qquad \qquad \quad \qquad \, \, \, \, \,  \text{if } j \in \mathcal{H}_{V_{\ell}|V_{{\ell}-1}}^{(n)}, \\
p_{U_{\ell}(j) | U_{\ell}^{1:j-1} V_{{\ell}-1}^n} \big( \check{u}_{\ell}(j) \big| \check{u}_{\ell}^{1:j-1}, \check{v}_{{\ell}-1}^n \big) \\
\qquad  \qquad \qquad \qquad \qquad \quad   \text{if } j \in \big( \mathcal{H}_{V_{\ell}|V_{\ell-1}}^{(n)} \big)^{\text{C}}, 
\end{array} 
\right. \nonumber \\ \label{eq:dist_check}
\end{IEEEeqnarray}
for any $\ell \in [1,M]$, and $V_{0}^n \triangleq \varnothing$.

The following lemma shows that the joint distributions $p_{U_{1:M}^n}$ and $\check{q}_{U_{1:M}^n}$ are nearly statistically indistinguishable for sufficiently large $n$.

\begin{lemma}\label{lemma:distUc1Uc2}
Let $\delta_n = 2^{-n^{\beta}}$ for some $\beta \in (0, \frac{1}{2})$. Then
\begin{IEEEeqnarray}{c}
\mathbb{V} (\check{q}_{U_{1:M}^n}, p_{U_{1:M}^n}) \leq \sqrt{M}{\delta}^{(1)}_n, \nonumber
\end{IEEEeqnarray}
where ${\delta}^{(1)}_n \triangleq \sqrt{2 n \delta_n \ln 2}$.
\end{lemma}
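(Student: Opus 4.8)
The plan is to bound the total variation distance by way of the relative entropy and Pinsker's inequality, exploiting the fact that $\check{q}_{U_{1:M}^n}$ and $p_{U_{1:M}^n}$ differ only on the ``high-entropy'' coordinates, on which $\check{q}$ is uniform. I would work with $\mathbb{D}(p_{U_{1:M}^n},\check{q}_{U_{1:M}^n})$ rather than the reversed divergence: since $\check{q}$ is uniform precisely on the coordinates where the two laws disagree, each per-coordinate contribution collapses to $1-H(\cdot)$, and no $\log(1/p)$ term can blow up.

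First I would expand $\mathbb{D}(p_{U_{1:M}^n},\check{q}_{U_{1:M}^n})$ with the chain rule for relative entropy, coordinate by coordinate in the order $U_1(1),\dots,U_1(n),U_2(1),\dots,U_M(n)$. By construction \eqref{eq:dist_check}, $\check{q}_{U_{1:M}^n}=\prod_{\ell,j}\check{q}_{U_\ell(j)|U_\ell^{1:j-1}V_{\ell-1}^n}$, and the Markov chain $V_1-\cdots-V_M-O_{1:K}$ together with the invertibility of $G_n$ yields $U_{1:\ell-1}^n-V_{\ell-1}^n-U_\ell^n$, hence $p_{U_\ell(j)|U_\ell^{1:j-1}U_{1:\ell-1}^n}=p_{U_\ell(j)|U_\ell^{1:j-1}V_{\ell-1}^n}$; so $p_{U_{1:M}^n}$ factors the same way and the chain rule applies coordinatewise with matching conditioning. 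For $j\notin\mathcal{H}_{V_\ell|V_{\ell-1}}^{(n)}$ the two conditionals coincide and the contribution vanishes; for $j\in\mathcal{H}_{V_\ell|V_{\ell-1}}^{(n)}$ the $\check{q}$-conditional is the uniform law on $\{0,1\}$, so taking the $p$-expectation of the identity $\mathbb{D}(p,\mathrm{unif})=1-H(p)$ applied to the conditional law of $U_\ell(j)$ gives the contribution $1-H\big(U_\ell(j)\,\big|\,U_\ell^{1:j-1},V_{\ell-1}^n\big)$. Therefore
\begin{IEEEeqnarray*}{rCl}
\mathbb{D}\big(p_{U_{1:M}^n},\check{q}_{U_{1:M}^n}\big) &=& \sum_{\ell=1}^M \sum_{j\in\mathcal{H}_{V_\ell|V_{\ell-1}}^{(n)}} \Big(1-H\big(U_\ell(j)\,\big|\,U_\ell^{1:j-1},V_{\ell-1}^n\big)\Big) \\
&\le& \sum_{\ell=1}^M \big|\mathcal{H}_{V_\ell|V_{\ell-1}}^{(n)}\big|\,\delta_n \;\le\; M n \delta_n,
\end{IEEEeqnarray*}
where the first inequality is the defining entropy threshold of the sets $\mathcal{H}_{V_\ell|V_{\ell-1}}^{(n)}$ and the second uses $\big|\mathcal{H}_{V_\ell|V_{\ell-1}}^{(n)}\big|\le n$.

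It then remains to invoke Pinsker's inequality: $\mathbb{V}(\check{q}_{U_{1:M}^n},p_{U_{1:M}^n})\le\sqrt{\tfrac{\ln 2}{2}\,\mathbb{D}(p_{U_{1:M}^n},\check{q}_{U_{1:M}^n})}\le\sqrt{\tfrac{Mn\delta_n\ln 2}{2}}\le\sqrt{2Mn\delta_n\ln 2}=\sqrt{M}\,\delta_n^{(1)}$, which is the stated bound (in fact with slack). I do not expect any genuine obstacle; the only point requiring a bit of care is the coordinatewise chain-rule decomposition and, within it, the Markov-chain argument that lets one replace the full history $U_{1:\ell-1}^n$ by $V_{\ell-1}^n$ in the conditioning — which is exactly what makes $\check{q}_{U_{1:M}^n}$, defined in \eqref{eq:dist_check} only through conditionals on $V_{\ell-1}^n$, a legitimate reference measure for the chain rule. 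Everything else is the elementary identity ``divergence from the binary uniform law $=$ one minus entropy'' combined with the entropy threshold built into $\mathcal{H}_{V_\ell|V_{\ell-1}}^{(n)}$.
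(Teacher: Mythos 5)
Your proposal is correct and follows essentially the same route as the paper: a coordinatewise chain-rule expansion of $\mathbb{D}\big(p_{U_{1:M}^n}\big\|\check{q}_{U_{1:M}^n}\big)$ (justified by the Markov structure and the invertibility of $G_n$), the identity $\mathbb{D}(p\|\mathrm{unif})=1-H(p)$ on the coordinates in $\mathcal{H}_{V_\ell|V_{\ell-1}}^{(n)}$, the defining threshold $1-H(\cdot)\le\delta_n$, and Pinsker's inequality. The only difference is the constant you carry through Pinsker's inequality, which still yields the stated bound (with slack, as you note).
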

\begin{proof}
The Kullback-Leibler distance between the distributions $p_{U_{1:M}^n}$ and $\check{q}_{U_{1:M}^n}$ is
\begin{IEEEeqnarray}{rCl}
\IEEEeqnarraymulticol{3}{l}{
\mathbb{D} \big( p_{U_{1:M}^n} \big\| \check{q}_{U_{1:M}^n} \big)
}  \nonumber \\
& \stackrel{(a)}{=} & \sum_{\ell = 1}^{M} \mathbb{E}_{{V_{\ell-1}^n }} \mathbb{D} \big( p_{U_{\ell}^n | V_{\ell-1}^n} \big\| \check{q}_{U_{\ell}^n | V_{\ell-1}^n} \big) \nonumber \\
& \stackrel{(b)}{=} & \sum_{\ell = 1}^M \sum_{j=1}^n \mathbb{E} \Big[ \mathbb{D} \Big( p_{U_{\ell}(j)|U_{\ell}^{1:j-1} V_{\ell-1}^n } \Big\| \check{q}_{U_{\ell}(j)|U_{\ell}^{1:j-1} V_{\ell-1}^n } \Big) \Big], \nonumber 
\end{IEEEeqnarray}
where $(a)$ holds by the chain rule, the invertibility of $G_n$ and the Markov chain condition satisfied by $U_{1:M}^n$ and $\check{U}_{1:M}^n$; and $(b)$ holds by the chain rule and by taking the expectation with respect to $(U_{\ell}^{1:j-1}, V_{\ell-1}^n)$. Thus, we obtain
\begin{IEEEeqnarray}{rCl}
\IEEEeqnarraymulticol{3}{l}{
\mathbb{D} \big( p_{U_{1:M}^n} \big\| \check{q}_{U_{1:M}^n} \big)
}  \nonumber \\
& \stackrel{(a)}{=} & \sum_{\ell = 1}^M \sum_{j \in \mathcal{H}_{V_{\ell}|V_{\ell-1}}^{(n)}} \Big( 1 - H \Big( U_{\ell}(j) \Big| U_{\ell}^{1:j-1}, V_{\ell-1}^{n} \Big) \Big) \nonumber \\
& \stackrel{(b)}{\leq} & M \delta_n \big| \mathcal{H}_{V_{\ell}|V_{\ell-1}}^{(n)} \big|, \nonumber
\end{IEEEeqnarray}
%
where $(a)$ holds by \eqref{eq:dist_check} and \cite[Lemma~10]{6975233}, i.e., $\mathbb{D}(p \| \check{q}) = 1 - H(p)$ if $\check{q}$ denotes the uniform distribution; and $(b)$ holds by the definition of $\mathcal{H}_{V_{\ell}|V_{\ell-1}}^{(n)}$. 

Finally, we get $\mathbb{V} (\check{q}_{U_{1:M}^n}, p_{U_{1:M}^n}) \leq \sqrt{2 M n \delta_n \ln 2}$ by Pinsker's inequality and because $|\mathcal{H}_{V_{\ell}|V_{\ell-1}}^{(n)}| \leq n$.
\end{proof}

Now, the following lemma proves that the joint distributions $\check{q}_{U_{1:M}^n}$ and $\tilde{q}_{U_{1:M}^n}$ are nearly statistically indistinguishable for $n$ large enough.

\begin{lemma}\label{lemma:distU1cU2}
Let $\delta_n = 2^{-n^{\beta}}$ for some $\beta \in (0, \frac{1}{2})$. Then
\begin{IEEEeqnarray}{c}
\mathbb{V} (\tilde{q}_{U_{1:M}^n}, \check{q}_{U_{1:M}^n}) \leq {\delta}^{(2)}_n, \nonumber
\end{IEEEeqnarray}
where ${\delta}^{(2)}_n \triangleq M n \sqrt{ 2 \sqrt{2} \delta_n^{(1)} \big( 2n - \log \sqrt{2} \delta_n^{(1)} \big) + \delta_n}$, and ${\delta}^{(1)}_n$ defined as in Lemma~\ref{lemma:distUc1Uc2}. 
\end{lemma}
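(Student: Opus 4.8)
The plan is to exploit that the distributions $\tilde q_{U_{1:M}^n}$ in \eqref{eq:dist_tilde} and $\check q_{U_{1:M}^n}$ in \eqref{eq:dist_check} are built from \emph{identical} one-shot conditionals at every index, except at the indices $j\in\mathcal L_{V_\ell|V_{\ell-1}}^{(n)}$, where $\check q$ draws the coordinate from $p_{U_\ell(j)|U_\ell^{1:j-1}V_{\ell-1}^n}$ while $\tilde q$ sets it equal to the deterministic $\argmax$ of that same conditional. First I would reduce to one layer at a time by a hybrid argument: let $\bar q^{(0)}\triangleq\check q_{U_{1:M}^n}$ and, for $\ell\in[1,M]$, let $\bar q^{(\ell)}$ apply the deterministic rule on $\mathcal L_{V_{\ell'}|V_{\ell'-1}}^{(n)}$ for all $\ell'\le\ell$ and the random rule for $\ell'>\ell$, so that $\bar q^{(M)}=\tilde q_{U_{1:M}^n}$ and, by the triangle inequality, $\mathbb V(\tilde q_{U_{1:M}^n},\check q_{U_{1:M}^n})\le\sum_{\ell=1}^M\mathbb V(\bar q^{(\ell)},\bar q^{(\ell-1)})$.

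For a fixed $\ell$, conditioning on $V_{\ell-1}^n$ and generating $U_\ell^n$ coordinate by coordinate with a maximal coupling of the ``honest'' and ``$\argmax$'' rules shows that $\bar q^{(\ell)}$ and $\bar q^{(\ell-1)}$ disagree only at some $j\in\mathcal L_{V_\ell|V_{\ell-1}}^{(n)}$, and then only with the realisation-wise disagreement probability $m_j\triangleq\min\{p_{U_\ell(j)|U_\ell^{1:j-1}V_{\ell-1}^n}(0|\cdot),\,p_{U_\ell(j)|U_\ell^{1:j-1}V_{\ell-1}^n}(1|\cdot)\}$ evaluated at the already generated prefix; hence $\mathbb V(\bar q^{(\ell)},\bar q^{(\ell-1)})\le\sum_{j\in\mathcal L_{V_\ell|V_{\ell-1}}^{(n)}}\mathbb E_{\bar q^{(\ell-1)}}[m_j]$. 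The key observation is that $h_2(m_j)$ is exactly the conditional entropy $H(U_\ell(j)\mid U_\ell^{1:j-1}=u_\ell^{1:j-1},V_{\ell-1}^n=v_{\ell-1}^n)$ of the original DMS at the generated prefix; since $h_2(x)\ge x$ for $x\in[0,\tfrac12]$ one has $m_j\le h_2(m_j)$, so $\mathbb E_{\bar q^{(\ell-1)}}[m_j]\le H_{\bar q^{(\ell-1)}}\!\big(U_\ell(j)\mid U_\ell^{1:j-1}V_{\ell-1}^n\big)$, and the whole problem reduces to bounding this conditional entropy under the \emph{encoder} law.

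Under $p$ we know $H_p(U_\ell(j)\mid U_\ell^{1:j-1}V_{\ell-1}^n)\le\delta_n$ by definition of $\mathcal L_{V_\ell|V_{\ell-1}}^{(n)}$. The marginal $\bar q^{(\ell-1)}_{U_\ell^{1:j}V_{\ell-1}^n}$ differs from $p_{U_\ell^{1:j}V_{\ell-1}^n}$ in total variation by at most $\mathbb V(\check q_{U_{1:M}^n},p_{U_{1:M}^n})\le\sqrt M\,\delta_n^{(1)}$, because $\mathbb V$ does not increase under marginalisation and the deterministically generated coordinates induce, on the objects that matter here, the same marginals as $\check q$; this is the only place Lemma~\ref{lemma:distUc1Uc2} is invoked. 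Writing the conditional entropy as the difference of two joint entropies, each over an alphabet of size at most $2^{2n}$ (the block $(U_\ell^{1:j},V_{\ell-1}^n)$ takes at most $2^{j+n}\le 2^{2n}$ values), and using the uniform continuity of the entropy function together with the elementary bound $h_2(\epsilon)\le-\epsilon\log\epsilon+\epsilon\log e$, one gets a per-index estimate of the form $H_{\bar q^{(\ell-1)}}(U_\ell(j)\mid U_\ell^{1:j-1}V_{\ell-1}^n)\le 2\sqrt M\,\delta_n^{(1)}\big(2n-\log\sqrt M\,\delta_n^{(1)}\big)+\delta_n$, which for the model of the paper ($M=2$) is exactly the expression under the root in the statement. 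Summing this over the at most $n$ indices in each of the $M$ layers, and loosening to the square-root form (valid once $n$ is large, since the per-index bound is then $\le 1$, or equivalently by applying Pinsker's inequality to the aggregated divergence), yields $\mathbb V(\tilde q_{U_{1:M}^n},\check q_{U_{1:M}^n})\le Mn\sqrt{2\sqrt2\,\delta_n^{(1)}(2n-\log\sqrt2\,\delta_n^{(1)})+\delta_n}=\delta_n^{(2)}$.

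The main obstacle is precisely this last transfer: the quantities $m_j$ and the conditional entropies are natural under the target law $p$, whereas the total variation to be controlled lives under the encoder law, so Lemma~\ref{lemma:distUc1Uc2} must be used to bridge the two without circularity, and the entropy-continuity estimate has to be applied with the correct alphabet size ($\le 2^{2n}$) so that the $2n-\log\sqrt2\,\delta_n^{(1)}$ factor and the surrounding constants come out as stated. By comparison, the hybrid decomposition over layers, the maximal coupling, and the elementary inequalities on $h_2$ are routine.
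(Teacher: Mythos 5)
There is a genuine gap in the bridging step of your layer-by-layer hybrid argument. For the step from $\bar q^{(\ell-1)}$ to $\bar q^{(\ell)}$, the maximal coupling forces you to average the disagreement probabilities $m_j$, $j\in\mathcal L_{V_\ell|V_{\ell-1}}^{(n)}$, over prefixes $(U_\ell^{1:j-1},V_{\ell-1}^n)$ distributed according to $\bar q^{(\ell-1)}$, i.e.\ according to a law in which layers $1,\dots,\ell-1$ have \emph{already} been derandomized by the $\argmax$ rule. Your claim that this marginal coincides with the $\check q$ marginal is false: replacing the near-deterministic draws on the low-entropy sets by their $\argmax$ changes the law of $V_{\ell-1}^n$ (indeed, the size of that change is exactly what the lemma is trying to quantify). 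Consequently Lemma~\ref{lemma:distUc1Uc2} cannot be invoked at that point, and as written the argument is circular for every layer $\ell\geq 2$. If you repair it by induction over layers, the entropy-continuity estimate at layer $\ell$ must be fed the accumulated distance $\sum_{\ell'<\ell}\mathbb V(\bar q^{(\ell')},\bar q^{(\ell'-1)})+\mathbb V(\check q_{U_{1:M}^n},p_{U_{1:M}^n})$, and the per-layer terms are of order $n\sqrt{n\delta_n^{(1)}}$, vastly larger than $\sqrt2\,\delta_n^{(1)}$; the resulting bound still vanishes but is not the stated $\delta_n^{(2)}$, so the lemma as formulated does not follow from your route.

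The paper sidesteps this by a single global coupling of $\tilde U_{1:M}^n$ and $\check U_{1:M}^n$ in which the coordinates outside $\mathcal L_{V_\ell|V_{\ell-1}}^{(n)}$ are forced to agree for every $\ell$; the coupling lemma plus union bounds then reduce everything to disagreement probabilities evaluated under the $\check q$ marginal of the prefix, which is exactly the object Lemma~\ref{lemma:distUc1Uc2} controls (applied to the pair $(U_{\ell-1}^n,U_\ell^n)$, which is why $\sqrt2\,\delta_n^{(1)}$ rather than $\sqrt M\,\delta_n^{(1)}$ appears inside the root for every $M$). Two of your other ingredients are fine and even slightly sharper than the paper's: the pointwise bound $m_j\le h_2(m_j)$ replaces the paper's chain through $-(1-p^\ast)\log(1-p^\ast)\ge(1-p^\ast)^2$ and Jensen, and the alphabet-size bookkeeping ($\le 2^{2n}$) in the entropy-continuity step matches the paper's use of \cite[Lemma~2.7]{csiszar2011information}. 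But neither of these repairs the circular use of the encoder law in the expectation, which is the crux of the proof.
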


\begin{proof}
The proof follows similar reasoning as the one for \cite[Lemma~2]{7447169}. Define a coupling \cite{levin2009markov} for $\check{U}_{1:M}^n$ and $\tilde{U}_{1:M}^n$ such that $\check{U}_{\ell} [ ( \mathcal{L}_{V_{\ell}|V_{\ell-1}}^{(n)} )^{\text{C}} ] = \tilde{U}_{\ell}[ ( \mathcal{L}_{V_{\ell}|V_{\ell-1}}^{(n)} )^{\text{C}} ]$ for any $\ell \in [1,M]$. Thus, we have
\begin{IEEEeqnarray}{rCl}
\IEEEeqnarraymulticol{3}{l}{
\mathbb{V} ( \tilde{q}_{U_{1:M}^n} , \check{q}_{U_{1:M}^n} )
}  \nonumber \\
& \stackrel{(a)}{\leq} & \mathbb{P} \big[  \tilde{U}_{1:M}^n \neq  \check{U}_{1:M}^n \big] \nonumber \\
& \stackrel{(b)}{\leq} & \sum_{\ell=1}^M \mathbb{P} \big[ \tilde{U}_{\ell}^n \neq \check{U}_{\ell}^n \big| \tilde{V}_{\ell-1}^n = \check{V}_{\ell-1}^n \big] \nonumber \\
& \stackrel{(c)}{\leq} & \sum_{\ell=1}^M \sum_{j=1}^n \mathbb{P} \Big[ \tilde{U}_{\ell}(j) \neq \check{U}_{\ell}(j) \Big| \mathcal{E}_{U^n_{\ell}V^n_{\ell-1}} \Big] \nonumber \\
 & \stackrel{(d)}{=} & \sum_{\ell=1}^M \sum_{j \in \mathcal{L}_{V_{\ell}|V_{\ell-1}}^{(n)}} \mathbb{E}_{(\check{U}_{\ell}^{1:j-1}, \check{V}_{\ell-1}^n)} \Big[ \nonumber \\
& & \quad  1 - p_{U_{\ell}(j) | U_{\ell}^{1:j-1}V_{\ell-1}^n} \big( u^{\ast}_{\ell}(j) |  \check{U}_{\ell}^{1:j-1}, \check{V}_{\ell-1}^n \big) \Big], \nonumber \\ \label{eq:comb1}
\end{IEEEeqnarray}
where $(a)$ holds by the coupling lemma \cite[Proposition~4.7]{levin2009markov}; $(b)$ by the union bound, the invertibility of $G_n$ and the Markov chain condition satisfied by $\tilde{U}_{1:M}^n$ and $\check{U}_{1:M}^n$; $(c)$ also holds by the union bound and defining $\mathcal{E}_{U^n_{\ell}V^n_{\ell-1}} \triangleq \{(\check{U}_{\ell}^{1:j-1}, \check{V}_{\ell-1}^n)=(\tilde{U}_{\ell}^{1:j-1}, \tilde{V}_{\ell-1}^n) \}$; and $(d)$ follows from \eqref{eq:dist_tilde} and \eqref{eq:dist_check} given that $\check{U}_{\ell} [ ( \mathcal{L}_{V_{\ell}|V_{\ell-1}}^{(n)} )^{\text{C}} ] = \tilde{U}_{\ell} [ ( \mathcal{L}_{V_{\ell}|V_{\ell-1}}^{(n)} )^{\text{C}} ]$ and from defining 
\begin{IEEEeqnarray}{c}
u_{\ell}^{\ast}(j) \triangleq \argmax_{u \in \{0,1\} } p_{U_{\ell}(j) | U_{\ell}^{1:j-1}V_{\ell-1}^n} ( u \big| \check{U}_{\ell}^{1:j-1}, \check{V}_{\ell-1}^n ). \nonumber
\end{IEEEeqnarray}

Next, for any $j \in [n]$ and sufficiently large $n$, we have
\begin{IEEEeqnarray}{rCl}
\IEEEeqnarraymulticol{3}{l}{
\big| H\big( U_{\ell} (j) \big| U_{\ell}^{1:j-1}, V_{\ell-1}^n \big) - H\big( U_{\ell} (j) \big| \check{U}_{\ell}^{1:j-1}, \check{V}_{\ell-1}^n \big) \big|
}  \nonumber \\
& \stackrel{(a)}{\leq} & \Big| H\big( U_{\ell}^{1:j-1}, V_{\ell-1}^n \big) - H\big( \check{U}_{\ell}^{1:j-1}, \check{V}_{\ell-1}^n \big) \Big| \nonumber \\
& & + \Big| H\big( U_{\ell}^{1:j}, V_{\ell-1}^n \big) - H\big( U_{\ell}(j), \check{U}_{\ell}^{1:j-1}, \check{V}_{\ell-1}^n \big) \Big| \nonumber \\
& \stackrel{(b)}{\leq} & 2 \mathbb{V} \big( \check{q}_{U_{\ell}^{1:j-1} U_{\ell-1}^n}, p_{U_{\ell}^{1:j-1} U_{\ell-1}^n} \big) \nonumber \\
& & \times \log \frac{2^{n+j-1}}{\mathbb{V} \big( \check{q}_{U_{\ell}^{1:j-1} U_{\ell-1}^n}, p_{U_{\ell}^{1:j-1} U_{\ell-1}^n} \big)} \nonumber \\
& \stackrel{(c)}{\leq} & 2 \sqrt{2} \delta_n^{(1)} \big( 2 n - \log \sqrt{2} \delta_n^{(1)} \big) \label{eq:absentr}
\end{IEEEeqnarray}
where $(a)$ holds by the chain rule of entropy and the triangle inequality; $(b)$ holds by \cite[Lemma~2.7]{csiszar2011information}, the invertibility of $G_n$, and because 
\begin{IEEEeqnarray}{rCl}
\IEEEeqnarraymulticol{3}{l}{
\mathbb{V} ( p_{U_{\ell}(j) | U_{\ell}^{1:j-1} U_{\ell-1}^n} \check{q}_{U_{\ell}^{1:j-1} U_{\ell-1}^n}, p_{U_{\ell}^{1:j} U_{\ell-1}^n} ) 
} \nonumber \\
& = & \mathbb{V} ( \check{q}_{U_{\ell}^{1:j-1} U_{\ell-1}^n}, p_{U_{\ell}^{1:j-1} U_{\ell-1}^n} ); \nonumber
\end{IEEEeqnarray}
and $(c)$ holds by Lemma~\ref{lemma:distUc1Uc2} (taking $M=2$) because 
\begin{IEEEeqnarray}{c}
\mathbb{V} ( \check{q}_{U_{\ell}^{1:j-1} U_{\ell-1}^n}, p_{U_{\ell}^{1:j-1} U_{\ell-1}^n} ) \leq \mathbb{V} ( \check{q}_{U_{\ell-1:\ell}^n}, p_{U_{\ell-1:\ell}} ), \nonumber
\end{IEEEeqnarray}
$x \mapsto x \log x$ is decreasing for $x > 0$ small enough, and $j \leq n$. Thus, for $\ell \in [1,M]$ and $j \in \mathcal{L}_{V_{\ell}|V_{\ell-1}}^{(n)}$, we have
\begin{IEEEeqnarray}{rCl}
\IEEEeqnarraymulticol{3}{l}{
2 \sqrt{2} \delta_n^{(1)} \big( 2n - \log \sqrt{2} \delta_n^{(1)} \big) + \delta_n
}  \nonumber \\
& \stackrel{(a)}{\geq} & 2 \sqrt{2} \delta_n^{(1)} \big( 2 n - \log \sqrt{2} \delta_n^{(1)} \big) + \nonumber \\
& & + H \big( U_{\ell}(j)| U_{\ell}^{1:j-1} , V_{\ell-1}^n \big) \nonumber \\
& \stackrel{(b)}{\geq} & H \big( U_{\ell}(j)| U_{\ell}^{1:j-1} , V_{\ell-1}^n \big) \nonumber \\
& \stackrel{(c)}{=} & \mathbb{E} \left[ h_2 \left( p_{U_{\ell}(j) | U_{\ell}^{1:j-1}V_{\ell-1}^n} \big( u_{\ell}^{\star}(j) \big| \check{U}_{\ell}^{1:j-1}, \check{V}_{\ell-1}^n \big) \right) 
 \right] \nonumber \\
& \geq & \mathbb{E} \left[ - \left(1 - p_{U_{\ell}(j) | U_{\ell}^{1:j-1}V_{\ell-1}^n} \big( u_{\ell}^{\star}(j) \big| \check{U}_{\ell}^{1:j-1}, \check{V}_{\ell-1}^n \big) \right) \right. \nonumber \\ 
& &  \! \times \left. \log \! \left( 1 - p_{U_{\ell}(j) | U_{\ell}^{1:j-1}V_{\ell-1}^n} \big( u_{\ell}^{\star}(j) \big| \check{U}_{\ell}^{1:j-1}, \check{V}_{\ell-1}^n \big) \right) \right] \nonumber \\
& \stackrel{(d)}{\geq} & \mathbb{E} \left[ \left(1 - p_{U_{\ell}(j) | U_{\ell}^{1:j-1}V_{\ell-1}^n} \big( u_{\ell}^{\star}(j) \big| \check{U}_{\ell}^{1:j-1}, \check{V}_{\ell-1}^n \big) \right)^2 \right] \nonumber \\
& \stackrel{(e)}{\geq} &  \mathbb{E}^2 \left[ \left(1 - p_{U_{\ell}(j) | U_{\ell}^{1:j-1}V_{\ell-1}^n} \big( u_{\ell}^{\star}(j) \big| \check{U}_{\ell}^{1:j-1}, \check{V}_{\ell-1}^n \big) \right) \right] \nonumber \\ \label{eq:comb2}
\end{IEEEeqnarray}
where $(a)$ holds by the definition of $\mathcal{L}_{V_{\ell}|V_{\ell-1}}^{(n)}$; $(b)$ holds by \eqref{eq:absentr}; in $(c)$ the expectation is taken with respect to $(\check{U}_{\ell}^{1:j-1}, \check{U}_{\ell-1}^n)$ and $h_2(p)$ denotes the binary entropy function, i.e., $h_2(p) = -p \log p - (1-p) \log (1-p)$; $(d)$ holds because $p_{U_{\ell}(j) | U_{\ell}^{1:j-1}V_{\ell-1}^n}( u_{\ell}^{\star}(j) | \check{U}_{\ell}^{1:j-1}, \check{V}_{\ell-1}^n ) \geq 1/2$ and $\log (x) < -x$ if $x \in [0, 1/2 )$; and $(e)$ follows from applying Jensen's inequality. 

Finally, by combining Equations~\eqref{eq:comb1} and~\eqref{eq:comb2}, and because $| \mathcal{L}_{V_{\ell}|V_{\ell-1}}^{(n)} | \leq n$, we have 
\begin{IEEEeqnarray}{rCl}
\IEEEeqnarraymulticol{3}{l}{
\mathbb{V} (\tilde{q}_{U_{1:M}^n}, \check{q}_{U_{1:M}^n})
}  \nonumber \\
& \leq & M n \sqrt{ 2 \sqrt{2} \delta_n^{(1)} ( 2n - \log \sqrt{2} \delta_n^{(1)} ) + \delta_n}, \nonumber
\end{IEEEeqnarray}
and the proof is complete.
\end{proof}

Hence, by Lemma~\ref{lemma:distUc1Uc2}, Lemma~\ref{lemma:distU1cU2} and by applying the triangle inequality, we obtain
\begin{IEEEeqnarray}{rCl}
\IEEEeqnarraymulticol{3}{l}{
\mathbb{V} (\tilde{q}_{U_{1:m}^n}, p_{U_{1:m}^n})
}  \nonumber \\
& \leq & \mathbb{V} (\tilde{q}_{U_{1:M}^n}, \check{q}_{U_{1:M}^n}) + \mathbb{V} (\check{q}_{U_{1:M}^n}, p_{U_{1:M}^n}) \nonumber \\
& \leq & M n \sqrt{ 2 \sqrt{2} \delta_n^{(1)} \big( 2n - \log \sqrt{2} \delta_n^{(1)} \big) + \delta_n} + \sqrt{M} {\delta}^{(1)}_n. \nonumber 
\end{IEEEeqnarray}
Moreover, since
\begin{IEEEeqnarray}{rCl}
\IEEEeqnarraymulticol{3}{l}{
\mathbb{V} (\tilde{q}_{U_{1:M}^n O_{1:K}^n}, p_{U_{1:M}^n O_{1:K}^n})
}  \nonumber \\
& = & \mathbb{V} (\tilde{q}_{U_{1:M}^n} p_{O_{1:K}^n | U_{1:M}^n}, p_{U_{1:M}^n} p_{O_{1:K}^n | U_{1:M}^n})  \nonumber \\
& = & \mathbb{V} (\tilde{q}_{U_{1:M}^n} , p_{U_{1:M}^n}), \nonumber 
\end{IEEEeqnarray}
the joint distributions $p_{U_{1:M}^n O_{1:K}^n}$ and $\check{q}_{U_{1:M}^n O_{1:K}^n}$ are also nearly statistically indistinguishable for $n$ large enough.

Additionally, we provide the following lemma, which relates the total variation distance between $\tilde{q}_{U_{1:M}^n O_{1:K}^n}$ and $p_{U_{1:M}^n O_{1:K}^n}$ with the corresponding entropies.

\begin{lemma}\label{lemma:appsec1_1}
Define $\mathcal{J}_{\ell}$ as any subset of $[1,n]$, where $\ell \in [1,M]$. Let $(U_{1:M}^n,O_{1:K}^n) \sim p_{U_{1:M}^n O_{1:K}^n}$ and $(\tilde{U}_{1:M}^n, \tilde{O}_{1:K}^n) \sim \tilde{q}_{U_{1:M}^n O_{1:K}^n}$ such that the total variation distance $\mathbb{V}\big( p_{U_{1:M}^n O_{1:K}^n}, \tilde{q}_{U_{1:M} O_{1:K}^n} \big) \leq \delta$, where $\delta \rightarrow 0$. Then, for sufficiently large $n$, we have
\begin{IEEEeqnarray}{rCl}
\IEEEeqnarraymulticol{3}{l}{
\big| H \big(\tilde{U}_1 [ \mathcal{J}_1 ] \dots \tilde{U}_{M} [ \mathcal{J}_M ]  \tilde{O}_{1:K}^n \big) 
} \nonumber \\
& & - H \big(U_1[ \mathcal{J}_1 ] \dots U_M[ \mathcal{J}_M ] O_{1:K}^n \big)  \big|  \nonumber \\
& \leq & (M+K+1) n \delta  -  \delta \log \delta.   \nonumber 
\end{IEEEeqnarray}
\end{lemma}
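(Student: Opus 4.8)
The plan is to derive this as a direct corollary of the uniform continuity of the entropy with respect to the total variation distance, namely \cite[Lemma~2.7]{csiszar2011information} (the same tool invoked in the proof of Lemma~\ref{lemma:distU1cU2}), combined with the fact that total variation distance does not increase under a common deterministic map.

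First I would observe that the tuple $\big(U_1[\mathcal{J}_1],\dots,U_M[\mathcal{J}_M],O_{1:K}^n\big)$ is the image of $\big(U_{1:M}^n,O_{1:K}^n\big)$ under the coordinate projection that retains, for each $\ell\in[1,M]$, the entries of $U_\ell^n$ indexed by $\mathcal{J}_\ell$, together with all of $O_{1:K}^n$; applying the \emph{same} projection to $\big(\tilde U_{1:M}^n,\tilde O_{1:K}^n\big)$ yields $\big(\tilde U_1[\mathcal{J}_1],\dots,\tilde U_M[\mathcal{J}_M],\tilde O_{1:K}^n\big)$. Denoting by $p'$ and $\tilde q'$ the induced distributions of these two tuples, the data-processing property of the total variation distance gives $\mathbb{V}(p',\tilde q')\le\mathbb{V}\big(p_{U_{1:M}^nO_{1:K}^n},\tilde q_{U_{1:M}^nO_{1:K}^n}\big)\le\delta$. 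Next I would apply \cite[Lemma~2.7]{csiszar2011information} to $p'$ and $\tilde q'$ on their common alphabet $\mathcal{A}$,
\[
\big|H(p')-H(\tilde q')\big|\;\le\;\mathbb{V}(p',\tilde q')\,\log\frac{|\mathcal{A}|}{\mathbb{V}(p',\tilde q')}.
\]
Since $x\mapsto x\log(|\mathcal{A}|/x)$ is nondecreasing on $(0,|\mathcal{A}|/e)$ and $\delta\to0$, I may upper bound the right-hand side by replacing $\mathbb{V}(p',\tilde q')$ with $\delta$, obtaining $\big|H(p')-H(\tilde q')\big|\le\delta\log|\mathcal{A}|-\delta\log\delta$. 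Finally, bounding $|\mathcal{A}|\le\prod_{\ell=1}^M2^{|\mathcal{J}_\ell|}\cdot\prod_{k=1}^K|\mathcal{O}_k|^n\le 2^{(M+K+1)n}$ (using $|\mathcal{J}_\ell|\le n$ and the binary input/output alphabet sizes) yields $\delta\log|\mathcal{A}|\le(M+K+1)n\delta$, which combined with the previous display gives the claimed inequality.

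I expect no serious obstacle here; the statement is essentially a bookkeeping consequence of known facts. The only points requiring a little care are: (i) justifying that marginalization/projection onto the coordinates in $\mathcal{J}_1,\dots,\mathcal{J}_M$ (and over $O_{1:K}^n$) does not increase $\mathbb{V}$, so that the continuity lemma is applicable to the restricted tuples with the very same bound $\delta$; (ii) the monotonicity check that legitimizes substituting $\delta$ for $\mathbb{V}(p',\tilde q')$, which is valid precisely in the asymptotic regime $\delta\to0$ assumed in the hypothesis; and (iii) the crude alphabet-size estimate $|\mathcal{A}|\le 2^{(M+K+1)n}$. Note that the hypothesis $\mathbb{V}\le\delta$ with $\delta\to0$ is exactly what was established earlier in this appendix for the encoder's distribution (by combining Lemma~\ref{lemma:distUc1Uc2} and Lemma~\ref{lemma:distU1cU2}), so the lemma will be directly usable in the secrecy analysis of Appendix~\ref{app:secrecy1}.
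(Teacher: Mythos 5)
Your proposal is correct and follows essentially the same route as the paper: marginalization does not increase total variation, then \cite[Lemma~2.7]{csiszar2011information} together with the monotonicity of $x \mapsto x\log(1/x)$ near zero and the crude alphabet-size bound $|\mathcal{J}_\ell| \leq n$ gives the claim. Your explicit justification of the data-processing step and the monotonicity check merely spells out what the paper leaves implicit in ``holds by assumption,'' and your treatment of the output alphabets as binary in the size estimate matches the paper's own bookkeeping.
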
 
\begin{proof}
From applying \cite[Lemma~2.7]{csiszar2011information}, we obtain
\begin{IEEEeqnarray}{rCl}
\IEEEeqnarraymulticol{3}{l}{
\big| H \big(\tilde{U}_1 [ \mathcal{J}_1 ] \dots \tilde{U}_{M} [ \mathcal{J}_M ]  \tilde{O}_{1:K}^n \big) 
} \nonumber \\
& & - H \big(U_1[ \mathcal{J}_1 ] \dots U_M[ \mathcal{J}_M ] O_{1:K}^n \big)  \big|  \nonumber \\
& \leq & \mathbb{V}\big( p_{U_1[\mathcal{J}_1 ]\dots T[\mathcal{J}_M ] O_{1:K}^n} ,\tilde{q}_{U_1[ \mathcal{J}_1 ] \dots U_M[ \mathcal{J}_M ] O_{1:K}^n}\big) \nonumber \\
& & \times \log \frac{2^{n + \sum_{\ell =1}^M | \mathcal{J}_{\ell} | + Kn}}{\mathbb{V}\big( p_{U_1[\mathcal{J}_1 ]\dots T[\mathcal{J}_M ] O_{1:K}^n} ,\tilde{q}_{U_1[ \mathcal{J}_1 ] \dots U_M[ \mathcal{J}_M ] O_{1:K}^n}\big)} \nonumber \\
& \leq & (M+ K + 1) n \delta - \delta \log \delta,   \nonumber 
\end{IEEEeqnarray}
where the last inequality holds by assumption, because the function $x \mapsto x \log x$ is decreasing for $x > 0$ small enough, and because $|\mathcal{J}_{\ell}| \leq n$ for any $\ell \in [1,M]$.
\end{proof}

\section{Proof of Lemma~\ref{lemma:secrecy1}}\label{app:secrecy1}
We prove that $\smash{\tilde{A}_i [ \mathcal{H}_{V|Z}^{(n)} ]}$ and $\smash{\tilde{A}_i [ \mathcal{H}_{X|VZ}^{(n)} ]}$ are asymptotically jointly independent of eavesdropper channel observations $\tilde{Z}_i^n$. To do so, we use the following lemma.

First, for any $i \in [1,L]$ and $n$ large enough, 
\begin{IEEEeqnarray}{rCl}
\IEEEeqnarraymulticol{3}{l}{
\Big| H \big(\tilde{A}_i [ \mathcal{H}_{V|Z}^{(n)} ]  \tilde{T}_i [ \mathcal{H}_{X|VZ}^{(n)} ] \big| \tilde{Z}_i^n \big) 
} \nonumber \\
& & - H \big(A [ \mathcal{H}_{V|Z}^{(n)} ]  T [ \mathcal{H}_{X|VZ}^{(n)} ] \big| Z^n \big) \Big| \nonumber \\
& \stackrel{(a)}{\leq} & \! \Big| H \big(\tilde{A}_i [ \mathcal{H}_{V|Z}^{(n)} ]  \tilde{T}_i [ \mathcal{H}_{X|VZ}^{(n)} ] \tilde{Z}_i^n \big) \nonumber \\
& & \! \! - H \big(A [ \mathcal{H}_{V|Z}^{(n)} ]  T [ \mathcal{H}_{X|VZ}^{(n)} ] Z^n \big) \Big| + \Big| H \big(\tilde{Z}_i^n \big) - H \big(Z^n \big) \Big| \nonumber \\
& \stackrel{(b)}{\leq} & 8 n \delta_n^{(*)} - 2 \delta_n^{(*)} \log \delta_n^{(*)},  \label{eq:aux_s} 
\end{IEEEeqnarray}
where $(a)$ follows from applying the chain rule of entropy and the triangle inequality; and $(b)$ holds by Lemma~\ref{lemma:appsec1_1} and Lemma~\ref{lemma:distDMS} because
\begin{IEEEeqnarray}{rCl}
\IEEEeqnarraymulticol{3}{l}{
\mathbb{V}\big( p_{Z^n} ,\tilde{q}_{Z_i^n}\big)
} \nonumber \\
& \leq & \mathbb{V}\big( p_{A [ \mathcal{H}_{V|Z}^{(n)} ]  T [ \mathcal{H}_{X|VZ}^{(n)} ] Z^n} ,\tilde{q}_{A_i [ \mathcal{H}_{V|Z}^{(n)} ]  T_i [ \mathcal{H}_{X|VZ}^{(n)} ] Z_i^n}\big) \nonumber \\
& \leq & \mathbb{V}(\tilde{q}_{V_i^n X_i^n Y_{(1),i}^n Y_{(2),i}^n Z_i^n },p_{V^n X^nY_{(1)}^n Y_{(2)}^n Z^n }) \nonumber \\
& \leq & \delta^{(*)}_n. \nonumber 
\end{IEEEeqnarray}
Therefore, we have
\begin{IEEEeqnarray}{rCl}
\IEEEeqnarraymulticol{3}{l}{
I \big(\tilde{A}_i [ \mathcal{H}_{V|Z}^{(n)} ] \tilde{T}_i [ \mathcal{H}_{V|Z}^{(n)} ]; \tilde{Z}_i^n \big)  \nonumber
} \nonumber \\
& \stackrel{(a)}{=} & \big| \mathcal{H}_{V|Z}^{(n)} \big|  + \big| \mathcal{H}_{X|VZ}^{(n)} \big|  - H \big( \tilde{A}_i [ \mathcal{H}_{V|Z}^{(n)} ] \tilde{T}_i [ \mathcal{H}_{V|Z}^{(n)} ] \big| \tilde{Z}_i^n \big)  \nonumber \\
& \stackrel{(b)}{\leq} & \big| \mathcal{H}_{V|Z}^{(n)} \big| + \big| \mathcal{H}_{X|VZ}^{(n)} \big|  - H \big( A [ \mathcal{H}_{V|Z}^{(n)} ]  T [ \mathcal{H}_{X|VZ}^{(n)} ] \big| Z^n \big) \nonumber \\
& &  + 2 \delta_n^{(*)} (4 n  - \log \delta_n^{(*)}) \nonumber \\
& \stackrel{(c)}{\leq} &  2 n \delta_n  + 2 \delta_n^{(*)} (4 n  - \log \delta_n^{(*)})  , \nonumber
\end{IEEEeqnarray}
where $(a)$ holds by the definition of mutual information in terms of entropies and the uniformity of $\tilde{A}_i [ \mathcal{H}_{V|Z}^{(n)} ]$ and $\tilde{T}_i [ \mathcal{H}_{V|Z}^{(n)} ]$; $(b)$ holds by \eqref{eq:aux_s}; and $(c)$ holds because 
\begin{IEEEeqnarray}{rCl}
\IEEEeqnarraymulticol{3}{l}{
H \big(A [ \mathcal{H}_{V|Z}^{(n)} ]  T [ \mathcal{H}_{X|VZ}^{(n)} ] \big| Z^n \big) 
} \nonumber \\
& \geq & H \big(A [ \mathcal{H}_{V|Z}^{(n)} ]  \big| Z^n \big) + H \big( T [ \mathcal{H}_{X|VZ}^{(n)} ] \big| A^n Z^n \big) \nonumber \\
& \geq  & \sum_{j \in \mathcal{H}_{V|Z}^{(n)}} H \big(A (j) \big| A^{1:j-1} Z^n \big) \nonumber \\
& & +  \sum_{j \in \mathcal{H}_{X|VZ}^{(n)}} H \big(T (j) \big| T^{1:j-1} V^n Z^n \big) \nonumber \\
& \geq & \big| \mathcal{H}_{V|Z}^{(n)} \big| (1 - \delta_n) + \big| \mathcal{H}_{X|VZ}^{(n)} \big| (1 - \delta_n) \nonumber
\end{IEEEeqnarray}
where we have used the fact that conditioning does not increase entropy, the invertibility of $G_n$, and the definition of $\mathcal{H}_{V|Z}^{(n)}$ and $\mathcal{H}_{X|VZ}^{(n)}$ in \eqref{eq:HUZ} and \eqref{eq:HV-UTZ} respectively.

\section{Proof of Lemma~\ref{lemma:secrecy2}}\label{app:secrecy2}
We prove that all confidential messages and eavesdropper observations from blocks $1$ to $i-1$, that is, $(S_{1:L}, \tilde{Z}_{1:i-1}^n)$, are asymptotically statistically independent of eavesdropper observations $\tilde{Z}_{i}^n$ at block~$i$. 

For any $i \in [2,L]$ and sufficiently large $n$, we have
\begin{IEEEeqnarray}{rCl}
\IEEEeqnarraymulticol{3}{l}{
I \big(S_{1:L}\tilde{Z}_{1:i-1}^n ; \tilde{Z}_{1:i}^n \big)
} \nonumber \\
& = & I \big(S_{1:i} \tilde{Z}_{1:i-1}^n ;\tilde{Z}_{i}^n \big) + I \big(S_{i+1:L} ; \tilde{Z}_{i}^n \big|  S_{1:i} \tilde{Z}_{1:i-1}^n \big) \nonumber \\
& \stackrel{(a)}{=} & I \big(S_{1:i} \tilde{Z}_{1:i-1}^n ;\tilde{Z}_{i}^n \big) \nonumber \\
& \leq & I \big(S_{1:i} \tilde{Z}_{1:i-1}^n \Xi_{i-1}^{(V)}  \Lambda_{i-1}^{(X)}; \tilde{Z}_{i}^n  \big) \nonumber \\
& = & I \big(S_{i} \Xi_{i-1}^{(V)}  \Lambda_{i-1}^{(X)}; \tilde{Z}_{i}^n  \big) \nonumber \\
& & +  I \big(S_{1:i-1} \tilde{Z}_{1:i-1}^n ; \tilde{Z}_{i}^n \big| S_i \Xi_{i-1}^{(V)}  \Lambda_{i-1}^{(X)} \big) \nonumber \\
& \stackrel{(b)}{\leq} & \delta_n^{(\text{S})}  +  I \big(S_{1:i-1} \tilde{Z}_{1:i-1}^n ; \tilde{Z}_{i}^n \big| S_i \Xi_{i-1}^{(V)}  \Lambda_{i-1}^{(X)} \big) \nonumber \\
& \leq & \delta_n^{(\text{S})}  +  I \big(S_{1:i-1} \tilde{Z}_{1:i-1}^n ; \tilde{Z}_{i}^n W_{1,i} \big| S_i \Xi_{i-1}^{(V)}  \Lambda_{i-1}^{(X)} \big) \nonumber \\
& = & \delta_n^{(\text{S})}  +  I \big(S_{1:i-1} \tilde{Z}_{1:i-1}^n ; W_{1,i} \big| S_i \Xi_{i-1}^{(V)}  \Lambda_{i-1}^{(X)} \big) \nonumber \\
& & +  I \big(S_{1:i-1} \tilde{Z}_{1:i-1}^n ; \tilde{Z}_{i}^n  \big|  S_i \Xi_{i-1}^{(V)}  \Lambda_{i-1}^{(X)} W_{1,i} \big)  \nonumber \\
& \stackrel{(c)}{=} & \delta_n^{(\text{S})}  +  I \big(S_{1:i-1} \tilde{Z}_{1:i-1}^n ; W_{1,i} \big| S_i \Xi_{i-1}^{(V)}  \Lambda_{i-1}^{(X)} \big) \nonumber \\
& \leq & \delta_n^{(\text{S})}  +  I \big( \tilde{A}_{1:i-1}^n \tilde{Z}_{1:i-1}^n  ; W_{1,i} \big| S_i \Xi_{i-1}^{(V)}  \Lambda_{i-1}^{(X)} \big) \nonumber \\
& = &  \delta_n^{(\text{S})}  +  I \big( \tilde{A}_{1:i-1}^n ; W_{1,i}\big| S_i \Xi_{i-1}^{(V)}  \Lambda_{i-1}^{(X)} \big) \nonumber \\
&  & +  I \big( \tilde{Z}_{1:i-1}^n  ; W_{1,i} \big| \tilde{A}_{1:i-1}^n S_i \Xi_{i-1}^{(V)}  \Lambda_{i-1}^{(X)}  \big) \nonumber \\
& \stackrel{(d)}{=} & \delta_n^{(\text{S})}  +  I \big( \tilde{A}_{1:i-1}^n ; W_{1,i} \big| S_i \Xi_{i-1}^{(V)}  \Lambda_{i-1}^{(X)} \big) \nonumber \\
& \stackrel{(e)}{=} & \delta_n^{(\text{S})}  + I \big( \tilde{A}_{1:i-1}^n ; \bar{\Omega}_i^{(V)} \oplus  \kappa_{\Omega}^{(V)} \big| S_i \Xi_{i-1}^{(V)}  \Lambda_{i-1}^{(X)} \big)  \nonumber \\
& \stackrel{(f)}{=} & \delta_n^{(\text{S})} \nonumber 
\end{IEEEeqnarray}
where $(a)$ holds by independence between $S_{i+1:L}$ and any random variable from blocks $1$ to $i$; $(b)$ holds by Lemma~\ref{lemma:secrecy1} because $(S_i,\Xi_{i-1}^{(V)})$ is stored into $\tilde{A}_i[\mathcal{H}^{(n)}_{V|Z}]$ and $\Lambda_{i-1}^{(X)} = \tilde{T}_i[\mathcal{H}^{(n)}_{X|VZ}]$; $(c)$ follows from applying \emph{d-separation} \cite{pearl2009causality} over the Bayesian graph in Figure~\ref{fig:dependencies} to obtain that $\tilde{Z}_{i}^n$ and $(S_{1:i-1}, \tilde{Z}_{1:i-1}^n)$ are conditionally independent given $(S_i, \Xi_{i-1}^{(V)} , \Lambda_{i-1}^{(X)},W_{1,i})$; $(d)$ also follows from applying \emph{d-separation} to obtain that $W_{1,i}$ and $\tilde{Z}_{1:i-1}^n$ are conditionally independent given $(\tilde{A}_{1:i-1}^n, S_i, \Xi_{i-1}^{(V)} , \Lambda_{i-1}^{(X)})$; $(e)$ holds by definition; and $(f)$ holds because $\bar{\Omega}_i^{(V)}$ is independent of $(S_i, \Xi_{i-1}^{(V)} , \Lambda_{i-1}^{(X)})$ and any random variable from blocks $1$ to $i-2$ and, moreover, because by \emph{crypto-lemma} \cite{crypto1241} we have $\bar{\Omega}_i^{(V)} \oplus \kappa_{\Omega}^{(V)}$ independent of $\tilde{A}_{i-1}^n$.
\end{appendices}

\bibliographystyle{ieeetr}
\bibliography{bibliography.bib}
\end{document}